\documentclass[12pt,a4paper,reqno]{amsart}
\usepackage[utf8]{inputenc}
\usepackage[T1]{fontenc}
\usepackage[top=1.5cm, bottom=1.5cm, left=2.0cm, right=2.0cm]{geometry}

\usepackage{amsthm,amsfonts,amsmath,bookmark,appendix,comment}
\usepackage{enumerate,graphicx,pdfpages,hyperref,dsfont,tabularx,bbm}
\usepackage{newtx}

\allowdisplaybreaks

\newcommand{\scp}[2]{\langle #1, #2 \rangle}

\newtheorem{theorem}{Theorem}

\newtheorem{lemma}{Lemma}[section]

\newtheorem{proposition}[lemma]{Proposition}

\newtheorem{corollary}[lemma]{Corollary}

\newtheorem{definition}[lemma]{Definition}

\newtheorem{remark}[lemma]{Remark}

\def\C {\mathbb{C}}
\def\Z {\mathbb{Z}}

\def\bq{\begin{eqnarray}}
\def\eq{\end{eqnarray}}
\def\bqq{\begin{eqnarray*}}
\def\eqq{\end{eqnarray*}}

\def\C {\mathbb{C}}

\def\gH_+ {\mathcal{Q}}

\def\C {\mathbb{C}}
\def\Z {\mathbb{Z}}

\def\vac{\Omega}

\newcommand{\spinv}{{\textbf{S}}}
\newcommand\1{{\ensuremath {\mathds 1} }}

\begin{document}

\title[Dynamics of wave packets in the Quantum Heisenberg ferromagnet]{Dynamics of low energy wave packets in the
one-dimensional \\
quantum Heisenberg ferromagnet}

\author[M. Kwaśnicki]{Mateusz Kwaśnicki}
\author[M. Napiórkowski]{Marcin Napiórkowski}
\author[S. Petrat]{S\"oren Petrat}
\author[M. Pietruszka]{Michał Pietruszka}

\address[M. Kwaśnicki]{Department of Analysis and Stochastic Processes, Wrocław University of Science and Technology, ul. Wybrzeże Wyspiańskiego 27, 50-370 Wrocław, Poland} 
\email{mateusz.kwasnicki@pwr.edu.pl}

\address[M. Napiórkowski]{ Department of Mathematical Methods in Physics, Faculty of Physics, University of
Warsaw, Pasteura 5, 02-093 Warszawa, Poland}
\email{marcin.napiorkowski@fuw.edu.pl}

\address[S. Petrat]{School of Science, Constructor University Bremen, Campus Ring 1, 28759 Bremen, Germany}
\email{spetrat@constructor.university}

\address[M. Pietruszka]{ Department of Mathematical Methods in Physics, Faculty of Physics, University of
Warsaw, Pasteura 5, 02-093 Warszawa, Poland}
\email{michalpietruszka98@gmail.com}
 
\date{\today}

\begin{abstract}
Spin wave theory plays a central role in the analysis of low-energy properties of the quantum ferromagnetic Heisenberg model. In this work, we present a first step towards a rigorous justification of this approximation in the context of quantum dynamics in one dimension. According to spin wave theory, the evolution of a spin wave state should be well approximated by its free evolution with the appropriate energy of free quasiparticles called magnons. Our main result establishes this idea rigorously for a class of spin wave packets. We focus on spin-$\frac{1}{2}$ systems, and consider wave packets composed of two magnons. We argue our result is optimal with respect to the timescales.
\end{abstract}
\maketitle

 \section{Introduction and main results}
\subsection{Introduction.} \label{sec:background}  Quantum spin chains, besides serving as paradigmatic models of quantum magnetism, constitute important platforms for quantum simulation and quantum information processing, where the propagation of localized spin excitations underlies the transport of quantum information and correlations \cite{Bose-03,Bose-07,WysChw-25}. Recent advances in ultracold atoms in optical lattices \cite{Blochetal-12}  have enabled the realization of effective Heisenberg spin chains with single-site resolution, allowing direct observation of the real-time dynamics of individual spin excitations \cite{Simon-11,Fukuhara-13a}. These developments have created a growing demand for the rigorous understanding of the dynamical behavior of interacting spin systems.

A recurring theme throughout quantum many-body physics is that interacting systems often admit effective descriptions in terms of weakly interacting quasiparticles. One of the oldest and most successful examples is spin-wave theory \cite{Blo-30,Blo-32}, which  completely neglects interactions, treating the low-energy excitations of the quantum Heisenberg ferromagnet as entirely non-interacting bosonic quasiparticles known as magnons. Within this non-interacting framework, spin flips propagate independently and freely. In reality, however, the dynamics of several magnons are genuinely interacting \cite{Bethe-31,Mattis-81,Kosevich-90}. These omitted interactions arise from two distinct physical mechanisms: the non-orthogonality of magnon states (the kinematic interaction) and the inherent coupling within the Heisenberg Hamiltonian itself \cite{Dys-56,Dys-56_2}. Their actual combined effect gives rise to nontrivial many-body phenomena, including correlation growth, modified transport, and the formation of magnon bound states \cite{Fukuhara-13b}. Consequently, because the spin-wave approximation discards these interactions entirely, its prediction of freely propagating wave packets cannot remain valid indefinitely. A fundamental question is therefore to determine the timescales over which the true interacting Schrödinger evolution remains close to this effective free dynamics.

In this work, we provide an answer to this question for a class of wave packets in the one-dimensional spin-$1/2$ quantum Heisenberg ferromagnet. We study the full quantum evolution of low-energy wave packets consisting of two magnons and compare it with the free evolution predicted by spin-wave theory. Our main result establishes explicit bounds on the norm distance between the exact interacting evolution and the corresponding non-interacting quasiparticle dynamics, identifying the precise dependence on both the system size and the evolution time. We argue the obtained timescales are optimal within the expected physical regime. Our results therefore provide a a first step towards a rigorous dynamical justification of the spin-wave approximation.

Our approach differs from the recent rigorous analyses of spin-wave theory, which were developed in the context of low free-energy asymptotics (\cite{CorGiu-12,CorGiuSei-14,CorGiuSei-15,Benedikter-2017,NapSei-19,Klippel-26}) and relied on the Holstein--Primakoff bosonization \cite{HolPri-40} of the spin system (see also \cite{ConSol-91,Toth-93} for older, probabilistic approaches). Rather than mapping the model to an approximate bosonic theory, we exploit the integrable structure of the two-magnon sector of the Heisenberg chain. This allows us to reduce the many-body dynamics to an effective one-body problem, whose spectrum and eigenfunctions can be analyzed with explicit quantitative estimates. Building on this spectral analysis, we compare the exact and free evolutions through a perturbative analysis of the corresponding propagators. To obtain the optimal timescales, we further establish uniform bounds on a class of discrete singular integral operators arising from the spectral representation. The proof therefore combines methods from integrable systems with techniques from functional and harmonic analysis, yielding a framework that is different from the bosonization-based approaches used in previous rigorous studies of spin-wave theory.

\subsection{Quantum Heisenberg model and spin waves.} \label{section:spin waves}
We consider a quantum spin-$S$ system with the Hilbert space    
\begin{align}
    \label{main_hilbert_space}
    \mathcal{H}_\Lambda := \bigotimes_{\textbf{x} \in \Lambda}\C^{2 S + 1},
\end{align}
where $\Lambda=\{1,\ldots,L\}^d \subset \Z^d$, and we additionally assume \textit{periodic boundary conditions}. Throughout this work, we adopt units in which $\hbar = 1$.
The single-site space $\mathbb{C}^{2S+1}$ is spanned by the orthonormal basis vectors conventionally written as
\begin{equation}
\vert S,m \rangle, \qquad m \in \{-S, -S+1,...,S-1, S
\}.
\label{|S,m>}
\end{equation}
 We denote by $S^{\alpha}$, $\alpha \in \{1,2,3,+,-\}$, where $S^{\pm} = S^1 \pm iS^2$, the spin operators satisfying the commutation relations
\begin{equation}
 [S^{-}, S^{+}] = -2S^{3}, \quad
[S^{3}, S^{\pm}]=\pm S^{\pm}, \quad [S^i, S^j] = i \varepsilon_{i j k} S^k, \quad i,j,k \in \{1,2,3\},
\label{CCR}
\end{equation}
where $\varepsilon_{i j k}$ is the Levi-Civita symbol. The action of $S^3$ and $S^\pm$ on the basis vectors is given by 
\begin{equation}
S^3 \vert S,m \rangle = m \vert S,m \rangle, \quad S^{\pm}\vert S,m \rangle = \sqrt{S (S+1) - m(m\pm 1)} \vert S,m \pm 1 \rangle.
\label{SpmonSm}
\end{equation}
 In particular, $|S,m\rangle$ is an eigenvector of $S^3$. Finally, we denote by $S^{\alpha}_{\textbf{x}}$ the operator acting as $S^\alpha$ on site $\textbf{x} \in \Lambda$ and as the identity elsewhere.
 
The quantum ferromagnetic Heisenberg model with nearest neighbor interactions on the lattice $\Z^d$ is defined in terms of the Hamiltonian
\begin{equation}\label{def:heisenberg_ham}
H_\Lambda :=  \sum_{\langle \textbf{x},\textbf{y} \rangle \subset \Lambda}(S^2- \spinv_{\textbf{x}} \cdot \spinv_\textbf{y}),
\end{equation}
where the sum is over all (unordered) nearest neighbor pairs $\langle \textbf{x},\textbf{y}\rangle$ of particles with spin $S$, and $ \spinv_\textbf{x}=(S_\textbf{x}^1,S_\textbf{x}^2,S_\textbf{x}^3)$ is the vector of the three spin operators acting on site $\textbf{x} \in \Lambda$. 
The Hamiltonian \eqref{def:heisenberg_ham} can be expressed in terms of the operators $S_\textbf{x}^{\pm}=S_\textbf{x}^1\pm i S_\textbf{x}^2$ and $S_\textbf{x}^3$ as
\begin{equation}
\label{def:heisenberg_ham_pm}
H_\Lambda =  \sum_{\langle \textbf{x},\textbf{y} \rangle \subset \Lambda}\left(S^2- \frac12(S_\textbf{x}^+ S_\textbf{y}^- + S_\textbf{x}^- S_\textbf{y}^+ ) - S_\textbf{x}^3 S_\textbf{y}^3 \right).
\end{equation} 
It turns out that the ground state of $H_\Lambda$ is $(2S|\Lambda|+1)$-fold degenerate (here $|\Lambda|$ denotes the number of lattice sites in $\Lambda$). Indeed, it is easy to see that one ground state is given by the "all spins down" or the "vacuum" state
\begin{displaymath}
\Omega := \bigotimes_{\mathbf{x} \in \Lambda} \vert S, -S \rangle,
\end{displaymath}
for which $H_{\Lambda}\vac = 0$ (in \eqref{def:heisenberg_ham}, we added a constant $S^2$ on each lattice bond so that the ground state energy is $0$). The name "all spins down" means that $ S_\textbf{x}^3 \vac = -S \vac\quad  \forall\,\textbf{x}\in \Lambda$, while "vacuum" corresponds to the fact that $S_\textbf{x}^- \vac =0\quad\forall\,\textbf{x}\in \Lambda$.
The other ground states can be obtained by using the operator
\begin{displaymath}    
S^+ := \sum_{\mathbf{x} \in \Lambda}S_{\mathbf{x}}^+,
\end{displaymath}
as one can easily check that $[H_{\Lambda}, S^+] = 0$, hence $(S^+)^n\vac$ is also a ground state of $H_{\Lambda}$ for $n \in \{0, 1, ..., 2S|\Lambda| \}$, with $(S^+)^0 = \1_{\mathcal{H}_{\Lambda}}$, the identity operator on $\mathcal{H}_{\Lambda}$. 

A class of excited states of \eqref{def:heisenberg_ham_pm} is constructed as follows. For a vector $\Z^d \ni \textbf{k}=(k^1,\ldots,k^d)$, define
\begin{align}
\label{S^+}
S^+ (\textbf{k})=\sum_{\textbf{x}\in \Lambda}e^{\frac{2\pi i}{L}\textbf{k}\cdot\textbf{x}} S^+_\textbf{x}.
\end{align}
This operator acts as a creation operator for spin wave excitations with momentum $\textbf{p} = 2\pi\textbf{k}/L$.
It is easy to verify that 
$$H_\Lambda S^+ (\textbf{k})\vac = S\epsilon (\textbf{k}) S^+ (\textbf{k})\vac,$$
where 
\begin{equation} \label{def:disprel}
\epsilon (\textbf{k})= 2 \sum_{j=1}^{d} \left(1- \cos \left(\tfrac{2\pi k^j}{L}\right)\right)
\end{equation}
is the dispersion relation. In particular, the state $S^+ (\textbf{k})\vac$, which will be referred to as the \textit{one-spin wave state} with wave vector $\textbf{k}$ (and momentum $\textbf{p}=2\pi\textbf{k}/L$), is an eigenstate of $H_{\Lambda}$ with energy $S\epsilon(\textbf{k})$. Analogously, a state of the form $S^+ (\textbf{k}_1)\ldots S^+ (\textbf{k}_n)\vac$, for $\textbf{k}_1,\ldots,\textbf{k}_n\in\Z^d$, will be called \textit{$n$-spin wave state}. For $n>1$, $n$-spin wave states are generally not eigenstates of $H_\Lambda$. However, for a normalized two-spin wave state $N S^+ (\textbf{k}_1) S^+ (\textbf{k}_2)\vac$ with $\textbf{k}_1 \neq \textbf{k}_2$ and $N$ a normalization constant, a direct calculation shows that
\begin{equation}\label{eq:2spinwaveenergy}
\langle N S^+ (\textbf{k}_1) S^+ (\textbf{k}_2)\vac \,\vert\, H_\Lambda  N S^+ (\textbf{k}_1) S^+ (\textbf{k}_2)\vac \rangle_{\mathcal{H}_\Lambda} = S\left(\epsilon (\textbf{k}_1)+\epsilon(\textbf{k}_2)\right) +\text{error},\qquad |\text{error}| \leq \frac{C_{\textbf{k}_1,\textbf{k}_2}}{|\Lambda|},
\end{equation}
with $C_{\textbf{k}_1,\textbf{k}_2} \leq C \big(\big(|\textbf{k}_1|/L\big)^2 + \big(|\textbf{k}_2|/L\big)^2\big)$. We notice first that the error term becomes subleading in the \textit{thermodynamic limit} $|\Lambda|\to \infty$. This leads to the idea that spin wave states are approximately eigenvectors of $H_\Lambda$ for large $\Lambda$. Secondly, the bound on $C_{\textbf{k}_1,\textbf{k}_2}$ leads us to considering only the low-energy excitation spectrum. Hence, the idea is that the energies given by sums of the form $S\sum_i \epsilon(\textbf{k}_i)$ approximate the low-energy spectrum of $H_\Lambda$. In other words, the low-energy properties of the quantum Heisenberg model can be described in terms of non-interacting spin waves.

\subsection{Main result.} 
Our goal is to rigorously verify the spin wave approximation in the context of quantum dynamics. The central object will be a wave packet of states of the form 
\begin{equation}\label{eq:initialtwomagnon}
    \Psi_{K,k}^0 =\frac{1}{L} S^+\left(\tfrac{1}{2}{\textbf{K}}+\textbf{k}\right)S^+\left(\tfrac{1}{2}{\textbf{K}}-\textbf{k}\right) \vac.
\end{equation}
This defines a two-spin-wave state with total momentum $\textbf{P} = 2\pi\textbf{K}/L$, and represents a situation when the system moves with momentum $\textbf{P}$ as a whole, and in the reference frame related to its "center of mass" (note that the position of the center of mass is ill-defined due to periodic boundary conditions), two spin waves move in opposite directions, each with momentum magnitude $|2\pi\textbf{k}/L|$. As mentioned earlier, it is not an eigenstate of the Heisenberg Hamiltonian $H_{\Lambda}$ (\ref{def:heisenberg_ham_pm}). However, we argue that under certain conditions, such states exhibit eigenstate-like behavior in the following, dynamical sense. Using the dispersion relation \eqref{def:disprel}, let us define  
\begin{equation}
    \label{E_K,p}
    E_{K,k}= S\left(\epsilon\left(\tfrac{1}{2}{\textbf{K}}+\textbf{k}\right)+\epsilon\left(\tfrac{1}{2}{\textbf{K}}-\textbf{k}\right)\right).
\end{equation} 
This formula is the expression for the non-interacting energy of $\Psi^0_{K,k}$ within spin wave approximation. If it were an eigenstate with the energy $E_{K,k}$ defined above, its time evolution would be given by the phase factor $e^{-i E_{K,k}t}$. We will refer to this as \textit{free evolution}. On the other hand, the full evolution of $\Psi^0_{K,k}$ is $e^{-i H_{\Lambda} t} \Psi_{K,k}^0$.

We show that, in the thermodynamic limit and for small momenta, these two evolutions remain close for relevant timescales. In this sense the state $ \Psi_{K,k}^0 $ behaves dynamically like an eigenstate of $H_{\Lambda}$. The theorem below provides a precise formulation of this statement. From this point onward, we consider the case  $d=1$ and $S=1/2$ only. For  simplicity,  we restrict ourselves to wave packets with fixed total momentum $P=\frac{2\pi K}{L}$  (thus fixed $K$). For further reference we define the set of relative momenta at fixed total momentum $K$ by
\begin{align} \label{def:K_K}
   \mathcal{K}_K:= 
   \begin{cases}
   \left\{ 0, 1, \dots, \frac{L-3}{2} \right\} & \text{if } L \text{ is odd, } |K| \text{ is even;} \\[4pt]
   \left\{ \frac{1}{2}, \frac{3}{2}, \dots, \frac{L-2}{2} \right\} & \text{if } L \text{ is odd, } |K| \text{ is odd;} \\[4pt]
   \left\{ 0, 1, \dots, \frac{L}{2}-1 \right\} & \text{if } L \text{ is even, } |K| \text{ is even;} \\[4pt]
   \left\{ \frac{1}{2}, \frac{3}{2}, \dots, \frac{L-3}{2} \right\} & \text{if } L \text{ is even, } |K| \text{ is odd.}
   \end{cases}
   \end{align}

\begin{theorem}\label{thm:wave-kacket}
Let $S=1/2$, $d=1$, $K\in\Z$, and let $\mathbb{N} \ni L\gg 1 $ denote the size of the one-dimensional lattice $\Lambda = \{1,...,L\}$.  For complex coefficients $(c_{K,k})_{k\in \mathcal{K}_K}$, the initial wave packet state $\Psi^0_K \in \mathcal{H}_\Lambda$ is  
\begin{align}
    \label{psi_K^0-intro}
    \Psi_K^0 &= {\sum_{k\in \mathcal{K}_K}} c_{K,k} \Psi_{K,k}^0,
\end{align}
with $\Psi_{K,k}^0$ defined in \eqref{eq:initialtwomagnon}. We choose the $(c_{K,k})_{k\in \mathcal{K}_K}$ such that $\Psi_K^0$ is normalized. For $t\in [0,+\infty)$ and the Hamiltonian $H_{\Lambda}$ given by \eqref{def:heisenberg_ham_pm}, let the real and free time evolutions of $\Psi_K^0$, denoted $\Psi_K^t$ and $\Psi_K^{t,\text{free}}$, respectively, be given by
\begin{align}
    \label{psi_K^t-intro}
    \Psi_K^t = e^{-i H_{\Lambda} t} \Psi_K^0 = {\sum_{k\in \mathcal{K}_K}} c_{K,k} e^{-i H_{\Lambda} t} \Psi_{K,k}^0 \qquad \text{and} \qquad     \Psi_K^{t,\mathrm{free}} = {\sum_{k\in \mathcal{K}_K}} c_{K,k}  e^{-i E_{K,k} t} \Psi_{K,k}^0,
\end{align}
where $E_{K,k}$ is given by \eqref{E_K,p}. For the initial data, we assume
\begin{align}
\sum_{k \in \mathcal{K}_K} \left|c_{K,k} \Gamma_{K,k}\right|^2 \leq \varepsilon^4, \qquad \text{and} \qquad 
\sum_{k \in \mathcal{K}_K \setminus \{ 0 \}} \left|c_{K,k} \frac{\Gamma_{K,k}}{\tilde E'_{K,k}}\right|^2 \leq \varepsilon^2, \label{c_assumption_1}
\end{align}
for some $0<\varepsilon<1$, and where we have defined
\begin{align}\label{def:GammaKkE'Kk}
\Gamma_{K,k} =  \cos\left( \tfrac{\pi K}{L} \right) - \cos\left( \tfrac{2\pi k}{L} \right), \qquad 
\tilde E'_{K,k} = 4\pi \cos\left(\frac{\pi K}{L}\right) \sin\left(\frac{2\pi k}{L}\right).
\end{align}
Then, for every fixed $K$ and fixed $\varepsilon>0$, there exists
$L_0(K,\varepsilon)$ such that for $L\geq L_0$ we have
\begin{equation}
\label{theorem: main equation}
\left\| \Psi_K^t  -\Psi_K^{t,\mathrm{free}} \right\|_{\mathcal{H}_\Lambda}^2 
\leq C_{K}\left(\frac{t^2 \varepsilon^4}{L^2} +  \varepsilon^2\right),
\end{equation}
with $C_{K}$ a constant independent of $L$, $\varepsilon$ and $t$.  
 \end{theorem}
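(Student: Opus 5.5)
Our plan is to reduce the two-magnon dynamics at fixed total momentum $K$ to a one-body problem on the relative coordinate, and then compare the true and free propagators inside this reduced space. Concretely, for $S=1/2$ the two-flip sector is spanned by $S^+_{x_1}S^+_{x_2}\Omega$, $x_1<x_2$. Fixing the total momentum $P=2\pi K/L$ via the ansatz $\psi(x_1,x_2)=e^{iP(x_1+x_2)/2}f(x_2-x_1)$ reduces $H_\Lambda$ to a tridiagonal operator $h_K$ acting on $f(r)$, $r\in\{1,\dots,L-1\}$. Its hopping amplitude is $\cos(\pi K/L)$ (up to a sign twisted by the parity of $K$), and it has an impurity at $r=1$ (and $r=L-1$) coming from the hard-core constraint and the $S^3S^3$ term. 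In these variables $\Psi^0_{K,k}$ corresponds, after normalization, to $f_k(r)\propto\cos(2\pi k r/L-\text{phase})$, i.e.\ a free plane wave that ignores the contact interaction. The energy $E_{K,k}$ equals $2-2\cos(\pi K/L)\cos(2\pi k/L)$. A direct computation then gives
\begin{align*}
(h_K-E_{K,k})f_k=\tfrac{C}{L}\,\Gamma_{K,k}\,(\delta_{1}+\delta_{L-1})\cdot(\text{bounded phase}),
\end{align*}
a defect concentrated at the contact sites whose size is exactly $\Gamma_{K,k}$. This identifies the role of $\Gamma_{K,k}$ in \eqref{c_assumption_1}.

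By Duhamel,
\begin{align*}
\Psi^t_K-\Psi^{t,\mathrm{free}}_K=-i\int_0^t e^{-ih_K(t-s)}\sum_k c_{K,k}e^{-iE_{K,k}s}(h_K-E_{K,k})f_k\,ds .
\end{align*}
We treat this in two ways. First, the crude bound: the norm is at most $\int_0^t\|\sum_k c_{K,k}e^{-iE_{K,k}s}g_k\|ds$, where $g_k=(h_K-E_{K,k})f_k$. Since all the $g_k$ are supported on two sites with amplitude $\Gamma_{K,k}/L$, Cauchy--Schwarz alone would lose a factor $\sqrt L$. To get the $t^2\varepsilon^4/L^2$ term we must instead exploit the structure better.

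To do so we expand in the exact eigenbasis of $h_K$, obtained via the Bethe ansatz as scattering states with phase shift $\theta(q)$, plus possibly one bound state. We also need quantitative estimates showing that the scattering eigenfunctions are plane waves up to $O(1/L)$ corrections in the quantization condition. The Duhamel integral then becomes, in the eigenbasis $\{\phi_q\}$ with energies $\lambda_q$,
\begin{align*}
\sum_q \phi_q\sum_k c_{K,k}\langle\phi_q,g_k\rangle\frac{e^{-i\lambda_q t}-e^{-iE_{K,k}t}}{\lambda_q-E_{K,k}} .
\end{align*}
Since $\langle\phi_q,g_k\rangle\approx \Gamma_{K,k}\,O(1/L)$, we use $|e^{-ia t}-e^{-ibt}|/|a-b|\le\min(t,2/|a-b|)$ to split the sum. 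The near-diagonal terms, where $q$ matches $k$ up to the phase shift, yield the $t\,\varepsilon^2/L$ contribution, i.e.\ $t^2\varepsilon^4/L^2$ after squaring, using the first assumption. The off-diagonal terms give a kernel of discrete Hilbert-transform type, $1/(\lambda_q-E_{K,k})\approx 1/(\tilde E'_{K,k}(q-k)/L)$, acting on the sequence $c_{K,k}\Gamma_{K,k}/\tilde E'_{K,k}$. A uniform $\ell^2$-bound on this discrete singular integral operator then produces the $\varepsilon^2$ term via the second assumption. The bound state and the $k=0$ term are controlled separately using $\Gamma_{K,0}=O(1/L^2)$ for fixed $K$.

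The main obstacle is this off-diagonal step. We need uniform-in-$L$ $\ell^2$ bounds for operators with kernel $\frac{1}{L}\cdot\frac{1}{\lambda_q-E_{K,k}}$, where $\lambda_q$ is shifted from the free grid by the phase shift $\theta(q)/L$. The difficulty is keeping uniform control of the denominators near $q\approx k$ and near the band edges, where $\tilde E'$ vanishes. Our first attempt would compare the kernel to the discrete Hilbert transform $\sum_{m\ne n}\frac{a_m}{n-m+\delta}$ with $\delta$ bounded away from integers. We would use its known boundedness, such as Hilbert's inequality generalized to shifted grids, and treat the remainder from linearizing $\cos$ by a Schur test. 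Taking $L\ge L_0(K,\varepsilon)$ absorbs the lower-order $O(1/L)$ errors from the eigenfunction asymptotics.
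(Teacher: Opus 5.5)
Your skeleton matches the paper's: one-body reduction; the exact defect $(h_{\Lambda,K}-E_{K,k})w_k=\Gamma_{K,k}(\delta_{X,1}+(-1)^K\delta_{X,L-1})$; an eigenbasis expansion in which the overlap factorizes exactly as $2\overline{f_{K,j}(1)}\,\Gamma_{K,k}/(\lambda_{K,j}-E_{K,k})$; diagonal terms bounded by $t$ (giving $t^2\varepsilon^4/L^2$); the $k=0$ mode handled via $\Gamma_{K,0}=O(L^{-2})$; and a uniformly bounded Hilbert-type operator acting on $c_{K,k}\Gamma_{K,k}/E'_{K,k}$ for the rest (giving $\varepsilon^2$). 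The gap is the step you flag as the main obstacle, and your sketch for it fails as stated, for three reasons.

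First, off the diagonal you cannot use $|e^{-iat}-e^{-ibt}|/|a-b|\le 2/|a-b|$. That puts absolute values on a kernel of size $1/|q-k|$, whose $\ell^2$ norm grows like $\log L$. You must keep signs and split the numerator into $e^{-i\lambda_q t}$, a unimodular factor depending only on the output index, and $e^{-iE_{K,k}t}$, which is absorbed into the input sequence.

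Second, the linearization $1/(\lambda_q-E_{K,k})\approx L/(\tilde E'_{K,k}(q-k))$ is not a small perturbation at the band edge. With $\theta_k=2\pi k/L$ and $q,k\ll L$,
\begin{equation*}
\frac{\sin\theta_k}{\cos\theta_k-\cos\theta_q}\approx\frac{1}{\theta_q-\theta_k}-\frac{1}{\theta_q+\theta_k},
\end{equation*}
so your ``remainder'' is a Hilbert-matrix (Hankel) kernel $\sim 1/(q+k)$. It is as large as the main term, precisely where the packet is concentrated. A Schur test with constant weights again gives only $\log L$, so this part needs Hilbert's inequality or weighted Schur bounds in its own right.

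Third, the exact eigenmomenta sit off the grid by a row-dependent amount that tends to $0$ for small $q$. For $K=0$, the momentum $2\pi q/(L-1)$ lies $q/(L-1)$ grid spacings from $2\pi q/L$. So there is no fixed shift bounded away from the integers, and the kernel is not a convolution.

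The paper never bounds an operator on the perturbed grid. For $k\neq j$ it uses the exact identity
\begin{equation*}
\frac{1-e^{i(\lambda_{K,j}-E_{K,k})t}}{\lambda_{K,j}-E_{K,k}}=\frac{1-e^{i(\lambda_{K,j}-E_{K,k})t}}{E_{K,j}-E_{K,k}}+i\,\frac{\lambda_{K,j}-E_{K,j}}{E_{K,j}-E_{K,k}}\int_0^t e^{i(\lambda_{K,j}-E_{K,k})s}\,ds .
\end{equation*}
The first term is $(Tg_K^0)(j)-e^{i\lambda_{K,j}t}(Tg_K^t)(j)$ with the free-grid kernel $E'_{K,k}/(E_{K,j}-E_{K,k})$, and gives $\varepsilon^2$. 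The second equals $\frac{\lambda_{K,j}-E_{K,j}}{E'_{K,j}}\int_0^t e^{i\lambda_{K,j}s}(T'h_K^s)(j)\,ds$ with the companion kernel $E'_{K,j}/(E_{K,j}-E_{K,k})$. Using only the crude comparison $|\lambda_{K,j}-E_{K,j}|\le C_K E'_{K,j}$, it costs $t^2\varepsilon^4/L^2$, the same order as the diagonal. It is then irrelevant how close $\lambda_{K,j}$ comes to any $E_{K,k}$.

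Uniform bounds for $T,T'$ follow from the exact identities $\frac{2\sin\theta_k}{\cos\theta_k-\cos\theta_j}=\cot\frac{\pi(j-k)}{L}-\cot\frac{\pi(j+k)}{L}$ and $\frac{2\sin\theta_j}{\cos\theta_k-\cos\theta_j}=\cot\frac{\pi(j-k)}{L}+\cot\frac{\pi(j+k)}{L}$, which capture the reflected singularity exactly. An odd (resp.\ even) extension turns $T$ (resp.\ $T'$) into a compression of the periodic discrete Hilbert transform on $\mathbb{Z}_L$ ($\mathbb{Z}_{2L}$ for odd $K$), up to a bounded diagonal term. That transform has Fourier multiplier of modulus below $\pi$.

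Your route is repairable:
\begin{enumerate}
\item For each eigenvalue, excise the $O(1)$ near-diagonal indices and bound them by $t$.
\item Use Bethe localization of each eigenmomentum to within $O(1/L)$ of a unique grid point $\theta_{q'}$.
\item Show the perturbed kernel differs from the free-grid one by $O\bigl((q'-k)^{-2}+(q'+k)^{-2}\bigr)$, plus analogous terms at the upper band edge. This difference is Schur-bounded.
\end{enumerate}
This would even remove the $t$-dependent correction off the diagonal. But it still needs the paper's free-grid bound as its base case, plus finer spectral input than the paper uses.
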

The theorem is proven in Section~\ref{sec:main-kroof}. We have the following remarks concerning Theorem \ref{thm:wave-kacket}.

\begin{remark}\label{remarks_theorem}
\begin{enumerate}[(a)]
\item In the proof of Theorem \ref{thm:wave-kacket} we obtain in fact the estimate (cf. \eqref{norm_diff_final})
$$\left\| \Psi_K^t  -\Psi_K^{t,\mathrm{free}} \right\|_{\mathcal{H}_\Lambda}^2 
\leq C_{K}\left(\frac{t^2 \varepsilon^4}{L^2} + \frac{t^2}{L^5}+\varepsilon^2\right).$$
For fixed $\varepsilon$ and $L$ large enough this yields \eqref{theorem: main equation}.

\item  The free evolution defined in \eqref{psi_K^t-intro} need not preserve the norm, since the two-spin-wave states appearing in the superposition are not orthogonal. In fact, using \eqref{eq:normalization}, it is easy to see that 
$$
\|\Psi_K^{t,\mathrm{free}}\|_{\mathcal{H}_\Lambda}^2-1
=\frac{2}{L}\left(
\left|\sum_{k\in\mathcal K_K}c_{K,k}\right|^2
-
\left|\sum_{k\in\mathcal K_K}c_{K,k}e^{-iE_{K,k}t}\right|^2
\right).
$$
For the class of wave packets considered in Theorem \ref{thm:wave-kacket}, this difference is not necessarily $O(L^{-1})$. In particular, for packets involving $O(\varepsilon L)$ momentum modes, it may remain of order $\varepsilon$ as $L\to\infty$ with $\varepsilon$ fixed. One could be tempted to conclude that replacing $\Psi_K^{t,\mathrm{free}}$ with $\Psi_K^{t,\mathrm{free}}/
\|\Psi_K^{t,\mathrm{free}}\|$ might remove the $\varepsilon^2$ term from the estimate in \eqref{theorem: main equation}. This is however not the case. Even a single two-spin-wave mode has a normalized free evolution, but it is not an eigenstate of the interacting Hamiltonian. Part of the state therefore acquires a different phase under the exact evolution. This can produce a norm difference of order $\varepsilon$.

\item Conditions \eqref{c_assumption_1} are assumptions on the momentum distribution of the initial data. They are essentially encoding the fact that the wave packet has a momentum spread of order $\varepsilon$, i.e., a localization scale $\varepsilon^{-1}$ in position space. One should think of $1 \ll \varepsilon^{-1} \ll L$. Both assumptions are satisfied, e.g., if $K$ is $O(1)$, and for bounded $c_{K,k}$ with $c_{K,k} = 0$ whenever $k/L > \varepsilon$, see Remark~\ref{rem_end_of_proof}.  
    \item The time-independent term on the right-hand side of \eqref{theorem: main equation} is a remainder that is nonphysical for small times $t$. Thus, the result obtained in the theorem is not optimal for very small timescales and should be considered only for large times. For short times a Duhamel type bound yields an upper bound that goes to zero as $t\to 0$  (cf. Proposition \ref{prop:shorttimes}).
    \item The wave packet momentum $\varepsilon$ and the box size $L$ are two independent parameters. The estimate in  \eqref{theorem: main equation} supports the claim that the spin wave approximation should be more accurate in the thermodynamic limit (large $L$) and in the small energy sector (small $\varepsilon$).
\end{enumerate}
\end{remark}

\subsection{Optimality heuristics.}\label{sec:heuristics} We believe the time-dependent term on the right-hand side of \eqref{theorem: main equation} is optimal. We can furthermore present heuristic evidence that the time-independent term is optimal as well. For the time-dependent term, let us present a heuristic argument which holds in any dimension and for any $S$. Let
\begin{equation}\label{def_Psi_heuristic_argument}
\Psi^{(\textbf{k})} = N  S^+(\textbf{k})S^+(-\textbf{k})\vac
\end{equation}
be a two-spin wave state, where $N$ is the normalization constant and $\textbf{k}\in\Z^d$. In the setting of Theorem~\ref{thm:wave-kacket}, this state satisfies the assumptions \eqref{c_assumption_1} if $k = c \varepsilon L$ for sufficiently small fixed $c$. We expect that
\begin{displaymath}
E^{(\textbf{k})} = 2S\epsilon(\textbf{k})= 4 S\sum_{j=1}^d \left( 1- \cos\left(\tfrac{2\pi k^j}{L}\right)\right)
\end{displaymath}
is a good approximation for the energy of $\Psi^{(\textbf{k})}$. For simplicity of notation, let us present the argument here for the case where there is no spin wave degeneracy, i.e., $E$ is injective as a function of $\textbf{k}\in(\Z/L)^d$ (up to $E^{(\textbf{k})} = E^{(-\textbf{k})}$), which is true only in one dimension. In higher dimension one needs to take appropriate sums of spin waves with degenerate energy $E^{(\textbf{k})}$ as initial states. For small enough $\textbf{k}$, we expect that $E^{(\textbf{k})}$ is very close to exactly one eigenvalue $\lambda_{\textbf{k}}$ of $H_\Lambda$ (restricted to total momentum $\textbf{K}=0$) with eigenstate $\widetilde{\Psi}_{\textbf{k}}$. Hence, we decompose
\begin{equation}\label{def:Psi}
\Psi^{(\textbf{k})} = \sqrt{1-\delta_{\textbf{k}}^2}\,\widetilde{\Psi}_{\textbf{k}} + \delta_{\textbf{k}}\,\widetilde{\Psi}_{\textbf{k},\mathrm{rest}}
\end{equation}
for some $0 < \delta_{\textbf{k}} \leq 1$, where we choose $\widetilde{\Psi}_{\textbf{k},\mathrm{rest}}$ orthogonal to $\widetilde{\Psi}_{\textbf{k}}$ and both normalized. Due to the orthogonality, we find that
\begin{align}
\label{est_pre}
\begin{split}
&\left\| e^{-i H_{\Lambda} t}\Psi^{(\textbf{k})}  - e^{-i E^{(\textbf{k})} t}\Psi^{(\textbf{k})} \right\|_{\mathcal H_{\Lambda}}^2 \\
&\qquad= 2\Re \left( 1 - \langle\Psi^{(\textbf{k})}\, \vert \,e^{-i(H_{\Lambda}-E^{(\textbf{k})})t}\Psi^{(\textbf{k})} \rangle_{\mathcal H_{\Lambda}} \right) \\
&\qquad= 2\Re \left( 1 - (1-\delta_{\textbf{k}}^2)\langle \widetilde{\Psi}_{\textbf{k}}  \,\vert\,e^{-i(\lambda_{\textbf{k}}-E^{(\textbf{k})})t}\widetilde{\Psi}_{\textbf{k}} \rangle_{\mathcal H_{\Lambda}}\right) -2 \delta_{\textbf{k}}^2 \Re\,  \langle \widetilde{\Psi}_{\textbf{k},\mathrm{rest}}\,\vert\, e^{-i(H_{\Lambda}-E^{(\textbf{k})})t} \widetilde{\Psi}_{\textbf{k},\mathrm{rest}}\rangle_{\mathcal H_{\Lambda}} \\ 
&\qquad= 2\left( 1 - (1-\delta_{\textbf{k}}^2) \cos\left( (\lambda_{\textbf{k}} - E^{(\textbf{k})}) t\right) - \delta_{\textbf{k}}^2 \Re\,  \langle \widetilde{\Psi}_{\textbf{k},\mathrm{rest}}\,\vert\, e^{-i(H_{\Lambda}-E^{(\textbf{k})})t} \widetilde{\Psi}_{\textbf{k},\mathrm{rest}}\rangle_{\mathcal H_{\Lambda}} \right),
\end{split}
\end{align}
which, by unitarity of $e^{-iH_{\Lambda}t}$, yields the lower and upper bounds
\begin{align}
\label{est}
\begin{split}
4 (1-\delta_{\textbf{k}}^2)\sin^2\left( \frac{(\lambda_{\textbf{k}} - E^{(\textbf{k})})}{2} t\right) &\leq \left\| e^{-i H_{\Lambda} t}\Psi^{(\textbf{k})}  - e^{-i E^{(\textbf{k})} t}\Psi^{(\textbf{k})} \right\|_{\mathcal H_{\Lambda}}^2 \leq (\lambda_{\textbf{k}}-E^{(\textbf{k})})^2 t^2 + 4\delta_{\textbf{k}}^2.
\end{split}
\end{align}
Note that the time-dependent terms are entirely due to the phase mismatch. From the definition of $\Psi^{(\textbf{k})}$ in \eqref{def_Psi_heuristic_argument} one can show by direct computation that
\begin{equation}
\label{first_moment}
\begin{split}
\langle\Psi^{(\textbf{k})}\,\vert\, \big( H_{\Lambda}-E^{(\textbf{k})} \big) \Psi^{(\textbf{k})}\rangle_{\mathcal H_{\Lambda}}=(1-\delta_{\textbf{k}}^2) (\lambda_{\textbf{k}}-E^{(\textbf{k})}) + \delta_{\textbf{k}}^2 \langle \widetilde{\Psi}_{\textbf{k},\mathrm{rest}}\,\vert\,(H_{\Lambda}-E^{(\textbf{k})})\widetilde{\Psi}_{\textbf{k},\mathrm{rest}}\rangle_{\mathcal H_{\Lambda}} \approx |\Lambda|^{-1} \left(\frac{|\textbf{k}|}{L}\right)^2 
\end{split}
\end{equation} 
uniformly in $S$, as in \eqref{eq:2spinwaveenergy}. As there seems to be no reason for any cancellation between the two summands in the middle term of \eqref{first_moment}, we guess that
\begin{equation}\label{Delta_E}
\lambda_{\textbf{k}}-E^{(\textbf{k})} \approx |\Lambda|^{-1} \left(\frac{|\textbf{k}|}{L}\right)^2
\end{equation}
which inserted into \eqref{est} gives (in the 1D case) exactly the timescales as in Theorem~\ref{thm:wave-kacket}. Note that this argument is uniform in $S$, and works in any dimension $d$ (when taking into account obvious degeneracies appropriately). Furthermore, note that the asymptotic behavior \eqref{Delta_E} can be computed exactly for one dimension and spin $S=\frac{1}{2}$, see Equation~\eqref{exact_energy_diff_one_dim_K_zero} in Remark~\ref{remark_asymptotic_bound}.

An estimate on the error $\delta_{\textbf{k}}$ in \eqref{est} is more involved and depends on the dimension. Let us consider the setting of our theorem, namely one dimension and $S=\frac{1}{2}$. Then for $K = 0$ there are explicit Bethe ansatz solutions for the spectrum and eigenfunctions of $H_\Lambda$. With this, one can compute an exact expression for $\delta_k$ and show that
\begin{equation}\label{delta_approx_heuristic_argument}
\delta_k^2 \approx \frac{\pi^2k^2}{3 L^2},
\end{equation}
which again coincides with the time-independent term of our bound \eqref{theorem: main equation} in Theorem~\ref{thm:wave-kacket}. We prove the asymptotics \eqref{delta_approx_heuristic_argument} in Remark~\ref{remark_asymptotic_bound}.

\subsection{Organization of the article and strategy of proof.}  As mentioned in the introduction, unlike in the recent works on the free energy asymptotics in which the analysis was based on the Holstein–Primakoff representation, the approach developed here is  different. Rather than bosonizing the model, we exploit the structure of the two-magnon sector of the spin chain and proceed in the spirit of integrability. 

More precisely, as shown in Section \ref{section:one-body-reduction}, the dynamics of two spin waves can be reduced to a one-body problem with an effective Hamiltonian. In order to analyze this effective dynamics, we establish the asymptotic form of the eigenvalues and eigenfunctions of this operator. This is done in Section \ref{section:spectral problem}.  

The proof of the main theorem is then given in Section \ref{chapter:proof}. To reach the sharp timescales claimed in our main theorem one must control terms that can be expressed via the action of certain operators with (discrete) singular kernels. In particular, we need to show that the norms of these operators are bounded uniformly in the system size. This is done in Appendix \ref{appendix:boundedness}. \newline

\noindent {\bf{Notation.}} 
The symbol $\1$, with an appropriate subscript, is used in two contexts: to denote the identity operator in a vector space or the characteristic function of a set. Instead of $\1_{\{x\;\vert\;\text{condition for }x\}}(x)$, we often write $\1_{\text{condition for }x}$ to indicate that a certain term appears conditionally. For example, $\1_{L\;\text{even}}f(L)$ means that the term $f(L)$ is included only when the parameter $L$ is even. 
To improve the readability of the calculations, we frequently use the same symbol, typically $C$, to denote generic constants, which are independent of the problem's parameters, particularly the lattice size $L$. \newline

\noindent {\bf{Acknowledgments.}} M.~N. and S.~P. are grateful to Robert Seiringer for suggesting the problem and for many inspiring discussions. The work of M.~N. was supported by the National Science Centre (NCN) grant Sonata Bis 13 number 2023/50/E/ST1/00439. M.~K. was supported by the National Science Centre, Poland, grant no. 2023/49/B/ST1/04303. \newline

\noindent {\bf{AI disclosure.}}  Language models GPT-5.6 Sol, GPT-Astra and Gemini 3.1 Pro have been used to improve the
presentation and in proofreading. All ideas, mathematical formulations, statements of theorems, and proofs were conceived and
developed entirely by the authors.

\section{Reduction to the one-body problem}\label{section:one-body-reduction}
In this section, we set the stage for the proof of Theorem~\ref{thm:wave-kacket}. We demonstrate that the time evolution of the two-spin wave state can be described in terms of the time evolution of its wave function and we reduce it to the one-body dynamics.

\subsection{Time evolution equation for the wave function}
\label{section:time evolution for wave function}
Let us briefly consider the general state $\Psi^t \in \mathcal{H}_{\Lambda}$ with two excitations, which can be expressed as
\begin{equation}
\label{eq:wave_fct}
\Psi^t= \sum_{x,y \in \Lambda} \psi^t (x,y) \, S_x^+ S_y^+ \vac,
\end{equation}
where $\psi^t$ is a time-dependent, complex-valued wave function. We impose periodic boundary conditions in both variables: 
\begin{align}
    \label{PBC}
    \psi^t(x+L,y)=\psi^t(x,y), \qquad \psi^t(x,y+L)=\psi^t(x,y) \qquad \forall\,x,y\in\Lambda.
\end{align}
 Note that $S_x^+ S_y^+ \vac = S_y^+ S_x^+ \vac$, and since $S_x^+ S_x^+ \vac=0$ when $S=1/2$, the values $\psi^t(x,x)$ are irrelevant. Therefore, it is convenient to put
\begin{align}
    \label{psi_conditions}
        \psi^t(x,x)=0, \qquad
        \psi^t(x,y)=\psi^t(y,x) \qquad \forall\,x,y\in\Lambda.
\end{align}
Let $\mathfrak{h}_{\Lambda}$ be the space of complex-valued wave functions satisfying \eqref{PBC} and \eqref{psi_conditions}. Then the $\mathcal{H}_\Lambda$-norm of $\Psi^t$ reads
\begin{equation}
\label{eq:spinscalarproduct}
\|\Psi^t\|_{\mathcal{H}_{\Lambda}}^2 = \langle\Psi^t \left. | \right. \Psi^t\rangle_{\mathcal{H}_{\Lambda}} =2 \langle \psi^t \left. | \right. \psi^t\rangle_{\mathfrak{h}_{\Lambda}} := 2 \sum_{\substack{x,y\in\Lambda\\ x\neq y}} |\psi^t(x,y)|^2,
\end{equation}
where the last equality introduces the inner product on $\mathfrak{h}_{\Lambda}$. Using \eqref{eq:wave_fct}, a direct computation shows that the Schr\"odinger equation $i\partial_t \Psi^t = H_\Lambda \Psi^t$
can be reformulated in terms of the wave function $\psi^t$ as
\begin{equation}
\label{eq:Schrodinger_wave_fct}
H_\Lambda \Psi^t = \sum_{x,y \in \Lambda} \Big[ i\partial_t \psi^t(x,y) \Big] S_x^+ S_y^+ \vac = \sum_{x,y \in \Lambda} \Big[ (h_\Lambda \psi^t)(x,y) \Big] S_x^+ S_y^+ \vac, 
\end{equation}  
where the operator $h_{\Lambda}:\mathfrak{h}_{\Lambda}\to\mathfrak{h}_{\Lambda}$ acts as 
\begin{equation}
\label{eq:wave_fct_eq}
 (h_{\Lambda}\psi^t)(x,y) = \begin{cases}
 \frac12\left(-\Delta^{G}_1 \psi^t \right)(x,y) +  \frac12\left(-\Delta^{G}_2 \psi^t \right)(x,y)
-\psi^t(x,y) & \quad x\sim y,\\
 \frac12\left(-\Delta^{G}_1 \psi^t\right)(x,y) +  \frac12\left(-\Delta^{G}_2 \psi^t\right)(x,y) & \quad x\not \sim y, \, x \ne y, \\ 0 & \quad x = y, \end{cases}
\end{equation}
where $x\sim y$ indicates that $x,y\in\Lambda$ are nearest neighbors. Here, we denote by $-\Delta^{G}_1$, $-\Delta^{G}_2$ the discrete graph Laplacians acting on the first and second arguments, i.e., 
\begin{equation} \label{def:graphlaplacian}
\begin{split}
\left(-\Delta^{G}_1 \psi^t\right)(x,y)&=\sum_{\substack{z\in\Lambda \\ z\sim x}}\left(\psi^t(x,y)-\psi^t(z,y) \right),\\
\left(-\Delta^{G}_2 \psi^t\right)(x,y)&=\sum_{\substack{z\in\Lambda \\ z\sim y}}\left(\psi^t(x,y)-\psi^t(x,z) \right).
\end{split}
\end{equation}
Note that \eqref{eq:Schrodinger_wave_fct} implies
\begin{equation}
    \label{eq:Schrodinger_psi_small}
    \psi^t(x,y) = \left(e^{-i h_{\Lambda}t} \psi^0 \right)(x,y),
\end{equation}
which, together with \eqref{eq:wave_fct}, allow us to analyze the time evolution of two-spin excitations in terms of the Hamiltonian $h_{\Lambda}$, which acts on wave functions. We now apply this general result to the specific wave packet introduced earlier.

\subsection{The one-body operator}\label{sec_onebody_discussion_time_evolution} 
Throughout this section, we assume that $K\in\Z$.
The initial state $\Psi_K^0$, given by \eqref{psi_K^0-intro}, can be rewritten as
\begin{align}
\label{initial-step}
 \Psi_K^0 &= \sum_{k\in\mathcal{K}_K} c_{K,k} \frac{1}{L}   S^+\left(\tfrac{K}{2}+k\right)S^+\left(\tfrac{K}{2}-k\right) \vac \nonumber\\
    &= \frac{1}{L} \sum_{k\in\mathcal{K}_K} c_{K,k}\sum_{\substack{x,y\in\Lambda \\ x\neq y}} e^{\frac{2\pi i}{L}\left(\frac{K}{2}+k\right)x} e^{\frac{2\pi i}{L}\left(\frac{K}{2}-k\right)y} S^+_x S^+_y \vac \nonumber\\
    &= \sum_{\substack{x,y\in\Lambda \\ x\neq y}} e^{\frac{2\pi i}{L} K\left(\frac{x+y}{2}\right)} \frac{1}{L} \sum_{k\in\mathcal{K}_K} c_{K,k}\cos\left( \tfrac{2\pi k}{L}  (x-y) \right) S^+_x S^+_y \vac \nonumber \\
    &= \sum_{\substack{x,y\in\Lambda \\ x\neq y}} \psi_K^0(x,y) S^+_x S^+_y \vac,
\end{align}
where the summation is over $x\neq y$ because $S_x^+ S_x^+ \Omega = 0$, and the diagonal elements of the wave function, $\psi^0_K(x,x)$, are irrelevant. The appearance of the cosine function is a consequence of the invariance of the summation  under exchanging $x$ and $y$ and the commutativity of $S^+_x$ and  $S^+_y$. As in the previous section, we put $\psi^0_K(x,x) = 0$ $\forall\, x\in\Lambda$, while
\begin{equation}
    \label{special_case_psi_K^0}
    \psi_K^0(x,y) = e^{\frac{2\pi i}{L}K\left(\frac{x+y}{2}\right)} \frac{1}{L} \sum_{k\in\mathcal{K}_K} c_{K,k}  \cos\left(\tfrac{2\pi k}{L}  (x-y) \right)\qquad\forall\,x,y\in\Lambda,\,x \neq y.
\end{equation}
This wave function fulfills \eqref{PBC} and \eqref{psi_conditions}, and its time evolution is given by \eqref{eq:Schrodinger_psi_small}. Note that the operator $h_{\Lambda}$, the generator of this evolution, is a two-body operator. However, for the wave functions of the form \eqref{special_case_psi_K^0}, the problem can be reduced to the analysis of a certain one-body operator as follows. Let $\phi^0_K$ be the general wave function of the form
\begin{equation}
\label{eq:ansatz}
\phi^0_K(x,y) = e^{\frac{2\pi i}{L}K\left(\frac{x+y}{2}\right)}f_K(x-y)\qquad x,y\in\Lambda,
\end{equation}
for some one-body function $f_K$ describing the relative motion. To ensure the periodic boundary conditions \eqref{PBC} and symmetries \eqref{psi_conditions} on $\phi^0_K$, we must have 
\begin{equation}
\label{eq:symmetries_of_f}
f_K(-X) = f_K(X), \qquad f_K(X+L) = (-1)^Kf_K(X), \qquad f_K(0) = 0.
\end{equation}
Of course, $\psi_K^0$ from \eqref{special_case_psi_K^0} is a special case of \eqref{eq:ansatz}. Note that if $X = L/2 + R$ for some $R$, then (\ref{eq:symmetries_of_f}) gives 
\begin{equation} \label{symmetryR}
    f_K\left(\tfrac{L}{2}+R\right) = f_K\left(-\tfrac{L}{2}-R\right) = (-1)^K f_K\left(-\tfrac{L}{2}-R+L\right) = (-1)^K f_K\left(\tfrac{L}{2}-R\right).
\end{equation} 
This justifies the following definition of the one-body Hilbert space. 
\begin{definition}
    For a set
    \begin{displaymath}
     \mathcal{R}_L := 
\begin{cases} 
\big\{\tfrac{1}{2},\tfrac{3}{2},\ldots,\tfrac{L}{2}-1 \big\} &  \text{$L$ odd;} \\[10pt]
\big\{0,1,\ldots,\tfrac{L}{2}-1 \big\} &  \text{$L$ even,}
\end{cases}
\end{displaymath}
we define the one-body Hilbert space $\mathcal{H}_{K,L}^{(1)}$  as
\begin{align}
   \label{H_1_def}
        \mathcal{H}_{K,L}^{(1)} := \Big\{ f_K: \{1,\ldots,L-1\} \to \C\left.  \Big| \right.  
        f_K\left(\tfrac{L}{2}+R\right) = (-1)^K f_K\left(\tfrac{L}{2}-R\right) ~\forall R \in \mathcal{R}_L \Big\}.
\end{align}
The inner product on $\mathcal{H}_{K,L}^{(1)}$ reads
\begin{displaymath}
    \langle f_K \left| \right. g_K \rangle_{\mathcal{H}_{K,L}^{(1)}} := \sum_{X=1}^{L-1} \overline{f_K(X)}g_K(X)\qquad \forall\,f_K, g_K \in \mathcal{H}^{(1)}_{K,L}.
\end{displaymath}
\end{definition}
\begin{remark}
\begin{enumerate}[(a)]
   \item The domain $\{1,\ldots,L-1\}$ is sufficient, as one can use \eqref{eq:symmetries_of_f} to extend $f_K \in \mathcal{H}_{K,L}^{(1)}$ to all of $\mathbb{Z}$. Conversely, every complex-valued function on $\Z$ satisfying \eqref{eq:symmetries_of_f} can be regarded as an element of $\mathcal{H}_{K,L}^{(1)}$.

\item The symmetry condition \eqref{symmetryR} uniquely determines $f_K(X)$ for $X > L/2$ from its values for $X < L/2$:
   \begin{itemize}
       \item For $L$ odd, the midpoint $L/2$ is non-integer. The values on $\big\{1, \ldots, \frac{L-1}{2}\big\}$ freely determine the remaining values on $\big\{\frac{L+1}{2}, \ldots, L-1\big\}$, yielding $\dim \mathcal{H}_{K,L}^{(1)} = \frac{L-1}{2}$.
       \item For $L$ even, $X = L/2$ is a lattice site (corresponding to $R=0$), giving $f_K(L/2) = (-1)^K f_K(L/2)$. Thus, $f_K(L/2)$ is a free degree of freedom if $K$ is even, but vanishes if $K$ is odd. The values on $\big\{1, \ldots, \frac{L}{2}-1\big\}$ freely determine those on $\big\{\frac{L}{2}+1, \ldots, L-1\big\}$.
   \end{itemize}
   Consequently, the dimension of the one-body space is given by
   \begin{align}
   \label{appendix_J_def}
   \dim \mathcal{H}_{K,L}^{(1)} = 
   \begin{cases} 
   \frac{L-1}{2} & \text{if } L \text{ is odd;} \\[4pt]
   \frac{L}{2} & \text{if } L \text{ is even, } |K| \text{ is even;} \\[4pt]
   \frac{L}{2} - 1 & \text{if } L \text{ is even, } |K| \text{ is odd.}
   \end{cases}
   \end{align}

   \item Quantization of relative momentum: the relative motion modes $f_{K}(X) = \cos\left(\frac{2\pi k}{L}X\right)$ satisfy the pseudo-periodicity condition $f_K(X+L) = (-1)^K f_K(X)$ in \eqref{eq:symmetries_of_f} if and only if $\cos(2\pi k) = (-1)^K$. This dictates the quantization of $k$:
   \begin{itemize}
       \item For $|K|$ even, $k$ takes non-negative integer values: $k \in \{0, 1, 2, \dots\}$.
       \item For $|K|$ odd, $k$ takes positive half-integer values: $k \in \big\{\frac{1}{2}, \frac{3}{2}, \frac{5}{2}, \dots\big\}$.
   \end{itemize}
   To avoid linear dependence of those relative motion modes on the finite lattice $\{1,\dots,L-1\}$, $k$ is bounded above such that the total number of modes equals $\dim \mathcal{H}_{K,L}^{(1)}$. The full set of independent relative momentum modes is given by
   \begin{align}
   \mathcal{K}_K:= 
   \begin{cases}
   \left\{ 0, 1, \dots, \frac{L-3}{2} \right\} & \text{if } L \text{ is odd, } |K| \text{ is even;} \\[4pt]
   \left\{ \frac{1}{2}, \frac{3}{2}, \dots, \frac{L-2}{2} \right\} & \text{if } L \text{ is odd, } |K| \text{ is odd;} \\[4pt]
   \left\{ 0, 1, \dots, \frac{L}{2}-1 \right\} & \text{if } L \text{ is even, } |K| \text{ is even;} \\[4pt]
   \left\{ \frac{1}{2}, \frac{3}{2}, \dots, \frac{L-3}{2} \right\} & \text{if } L \text{ is even, } |K| \text{ is odd.}
   \end{cases}
   \end{align}
This corresponds to \eqref{def:K_K}.
   \end{enumerate}
\end{remark}

The following (well-known, e.g., \cite{wortis}) result will be used in our analysis. 
\begin{proposition} For $\phi^0_K$ of the form \eqref{eq:ansatz}, where $f_K$ is treated as an element of $\mathcal{H}_{K,L}^{(1)}$, we have
    \begin{align}
    \label{h_lambda_K_def}
    (h_{\Lambda}\phi^0_K)(x,y) = e^{\frac{2\pi i}{L}K\left(
\frac{x+y}{2}\right)} (h_{\Lambda,K}f_K)(x-y),
\end{align}
where the one-body operator $h_{\Lambda,K} : \mathcal{H}_{K,L}^{(1)} \to \mathcal{H}_{K,L}^{(1)}$ acts as
\begin{equation}
\label{eq_onebody_Ham}
(h_{\Lambda,K} f_K)(X) =
\begin{cases}
 f_K(1)-\cos\left(\tfrac{\pi K}{L} \right) f_K(2) &\quad X=1;\\
 2 f_K(X)-\cos\left(\tfrac{\pi K}{L} \right) \Big( f_K(X+1) + f_K(X-1) \Big) &\quad X \in \{2,\ldots,L-2\};\\
f_K(L-1)-\cos\left(\tfrac{\pi K}{L} \right) f_K(L-2) &\quad X = L-1.
\end{cases}
\end{equation} 
\end{proposition}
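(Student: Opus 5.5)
We verify the identity \eqref{h_lambda_K_def} by substituting the ansatz \eqref{eq:ansatz} into the explicit formula \eqref{eq:wave_fct_eq} for $h_\Lambda$ and simplifying. Write $X = x-y$, $P = \frac{x+y}{2}$ and $\theta = \frac{\pi K}{L}$, so that $\phi^0_K(x,y) = e^{2i\theta P} f_K(X)$. In one dimension each site has the two neighbours $x\pm1$, so
\begin{align*}
\tfrac12\big(-\Delta^G_1\phi^0_K\big)(x,y) + \tfrac12\big(-\Delta^G_2\phi^0_K\big)(x,y)
= 2\phi^0_K(x,y) - \tfrac12\big[\phi^0_K(x+1,y)+\phi^0_K(x-1,y)+\phi^0_K(x,y+1)+\phi^0_K(x,y-1)\big].
\end{align*}
Shifting $x$ by $\pm1$ changes $P$ by $\pm\frac12$ and $X$ by $\pm1$. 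Shifting $y$ by $\pm1$ changes $P$ by $\pm\frac12$ and $X$ by $\mp1$. The four neighbour terms therefore contribute
\begin{align*}
\tfrac12 e^{2i\theta P}\big[e^{i\theta}f_K(X+1) + e^{-i\theta}f_K(X-1) + e^{i\theta}f_K(X-1) + e^{-i\theta}f_K(X+1)\big] = e^{2i\theta P}\cos\theta\,\big(f_K(X+1)+f_K(X-1)\big).
\end{align*}
Hence off the diagonal and away from nearest neighbours we get $e^{2i\theta P}\big(2f_K(X) - \cos\theta\,(f_K(X+1)+f_K(X-1))\big)$. This is the middle line of \eqref{eq_onebody_Ham} for $X\in\{2,\dots,L-2\}$. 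Periodicity of $\phi^0_K$ in $x$ and $y$ is guaranteed by \eqref{eq:symmetries_of_f}: the factor $e^{i\theta L}=(-1)^K$ coming from the phase cancels against $f_K(X+L)=(-1)^Kf_K(X)$. So we may take $X=x-y$ modulo $L$ and represent it in $\{1,\dots,L-1\}$ whenever $x\neq y$.

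It remains to handle the boundary cases. On the periodic lattice, $x\sim y$ means $X\equiv\pm1 \pmod L$, i.e. $X=1$ or $X=L-1$. Then \eqref{eq:wave_fct_eq} subtracts an extra $f_K(X)$, giving $f_K(X)-\cos\theta\,(f_K(X+1)+f_K(X-1))$. For $X=1$ the term $f_K(X-1)=f_K(0)=0$ by \eqref{eq:symmetries_of_f}, which yields $f_K(1)-\cos\theta\,f_K(2)$. For $X=L-1$ we have $f_K(L)=(-1)^Kf_K(0)=0$, which yields $f_K(L-1)-\cos\theta\,f_K(L-2)$. These match the first and last lines of \eqref{eq_onebody_Ham}. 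On the diagonal $x=y$ both sides vanish, consistent with the convention $f_K(0)=0$.

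The main point needing care is that $h_{\Lambda,K}$ indeed maps $\mathcal{H}_{K,L}^{(1)}$ into itself, i.e. that it preserves the reflection symmetry $f_K(\tfrac L2+R)=(-1)^Kf_K(\tfrac L2-R)$. We check this by applying the reflection $X\mapsto L-X$ to \eqref{eq_onebody_Ham}: it exchanges the rows $X=1$ and $X=L-1$, maps the bulk row to itself, and each term picks up the same factor $(-1)^K$. There is one further subtlety: when $L=2$ or $L=3$ the rows of \eqref{eq_onebody_Ham} overlap, and we would either exclude these cases (since $L\gg1$ is assumed anyway) or check them directly.
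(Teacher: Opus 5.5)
Your proposal is correct and follows the paper's proof, which is the same direct computation: insert the ansatz \eqref{eq:ansatz} into \eqref{eq:wave_fct_eq}--\eqref{def:graphlaplacian}, combine the four neighbour phases into $\cos(\pi K/L)$, and use $f_K(0)=f_K(L)=0$ from \eqref{eq:symmetries_of_f} for the nearest-neighbour rows $X=1$ and $X=L-1$. The paper leaves implicit two points you check explicitly, namely compatibility with periodicity and that $h_{\Lambda,K}$ preserves the reflection symmetry defining $\mathcal{H}^{(1)}_{K,L}$.
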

\begin{proof} Straightforward by using \eqref{eq:wave_fct_eq} and \eqref{def:graphlaplacian}, and then applying \eqref{eq:symmetries_of_f} to $f_K$. 
\end{proof}
\noindent This reduction allows us to analyze the dynamics of the original two-spin wave state by studying the spectral properties of a simpler, one-body operator $h_{\Lambda,K}$.
In fact, from the expansion of \eqref{h_lambda_K_def} into a power series, one immediately obtains
\begin{corollary} The time evolution of $\phi^0_K$ reads
    \begin{align}
    \label{e^h_lambda,P}
    \phi^t_K(x,y) = (e^{-i h_{\Lambda} t} \phi^0_K)(x,y) = e^{\frac{2\pi i}{L}K\left(
\frac{x+y}{2}\right)} (e^{-i h_{\Lambda,K} t} f_K)(x-y).
\end{align}
\end{corollary}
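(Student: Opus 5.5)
The claim follows from the preceding proposition by a power-series argument. The key observation is that the set of wave functions of the form \eqref{eq:ansatz} is invariant under $h_\Lambda$. I will make this precise through a linear map and an intertwining identity.

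Define $U_K:\mathcal{H}_{K,L}^{(1)}\to\mathfrak{h}_\Lambda$ by
\begin{equation*}
(U_K f)(x,y)=e^{\frac{2\pi i}{L}K\left(\frac{x+y}{2}\right)}f(x-y),
\end{equation*}
where $f$ is first extended to $\Z$ via \eqref{eq:symmetries_of_f}. I will check three things about $U_K$.

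\begin{enumerate}[(i)]
\item It is well defined, that is, $U_K f$ satisfies \eqref{PBC} and \eqref{psi_conditions}. Symmetry in $x\leftrightarrow y$ follows from $f(-X)=f(X)$. Under $x\mapsto x+L$, the phase picks up a factor $e^{\pi i K}=(-1)^K$, which cancels the factor $(-1)^K$ from $f(X+L)=(-1)^Kf(X)$. Finally, $f(0)=0$ gives the vanishing on the diagonal.
\item It is linear, which is immediate.
\item The identity \eqref{h_lambda_K_def} says exactly $h_\Lambda U_K=U_K h_{\Lambda,K}$ on $\mathcal{H}^{(1)}_{K,L}$. For this to make sense, $h_{\Lambda,K}$ must map $\mathcal{H}^{(1)}_{K,L}$ into itself, which is asserted in the proposition. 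A quick check is that the formula \eqref{eq_onebody_Ham} is compatible with the reflection $X\mapsto L-X$ together with the factor $(-1)^K$.
\end{enumerate}

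Iterating the intertwining relation gives $h_\Lambda^n U_K=U_K h_{\Lambda,K}^n$ for every $n\ge0$, by induction on $n$. Both spaces are finite dimensional, so the exponential series converge absolutely in operator norm, and $U_K$ is a bounded linear map. Therefore
\begin{equation*}
e^{-ih_\Lambda t}U_K f=\sum_{n\ge0}\frac{(-it)^n}{n!}h_\Lambda^nU_Kf=U_K\sum_{n\ge0}\frac{(-it)^n}{n!}h_{\Lambda,K}^nf=U_K e^{-ih_{\Lambda,K}t}f.
\end{equation*}
Applying this with $f=f_K$ yields \eqref{e^h_lambda,P}, and \eqref{eq:Schrodinger_psi_small} identifies the left-hand side with $\phi_K^t$.

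The only step needing care is (iii), namely that $h_{\Lambda,K}$ preserves $\mathcal{H}^{(1)}_{K,L}$, so that the powers $h_{\Lambda,K}^n$ may be taken within that space. This is part of the statement of the proposition. An alternative is to regard $h_{\Lambda,K}$ as acting on functions on $\Z$ with the symmetries \eqref{eq:symmetries_of_f}, where it is given by the same second-difference formula. There the invariance is obvious because translation and reflection commute with the discrete Laplacian.
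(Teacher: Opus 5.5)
Your argument is correct and is the paper's own argument: the paper derives the corollary in one line by expanding the intertwining identity \eqref{h_lambda_K_def} into a power series, and you have made the intertwiner $U_K$ explicit and correctly checked, via the reflection $X\mapsto L-X$ with the factor $(-1)^K$, that $h_{\Lambda,K}$ maps $\mathcal{H}^{(1)}_{K,L}$ into itself, which is what justifies iterating. One small caveat concerns only your closing aside: on functions extended to $\Z$, $h_{\Lambda,K}$ is not the plain second-difference operator (at $X\equiv\pm1\pmod{L}$ it carries an extra $-f(X)$ from the nearest-neighbor term, and its output must be set to zero at $X\equiv0\pmod{L}$), so invariance there also needs these exceptional sets to be preserved by $X\mapsto -X$ and $X\mapsto X+L$, which they are; your main route via (iii) does not depend on this.
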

\noindent Of course, the above formula applies to \eqref{special_case_psi_K^0}. We see that the full information about the time evolution is therefore encoded in the one-body operator $h_{\Lambda,K}$. In the next section, we proceed with a deeper analysis of this operator.

\section{Spectral analysis of the effective Hamiltonian}
\label{section:spectral problem}
 In this section we analyze the spectral properties of the effective one-body Hamiltonian. We begin by introducing an operator that will be used to rewrite the one-body Hamiltonian $h_{\Lambda,K}$. We assume that $K$ and $L$ are such that $\cos(\pi K /L) \neq 0$. Otherwise, the problem becomes trivial, since according to \eqref{eq_onebody_Ham}, $h_{\Lambda,K}$ is then a diagonal operator with $\lambda_{K,1} = 1$ and $\lambda_{K,j} = 2 \quad \forall j\neq 1$, and it is easy to verify that the normalized eigenvectors are given by $f_{K,j}(X) = (\delta_{X,j}+(-1)^K\delta_{X,L-j})/\sqrt{2}$ (for $j \neq L/2$). For this section let us slightly change the notation in \eqref{eq_onebody_Ham}.
\begin{definition} Let $L \in \mathbb{N}$ with $L \ge 4$ and $K \in \mathbb{Z}$, $\lvert K \rvert < L$. Define the coupling parameter
\begin{equation}
\delta := \cos\left(\frac{\pi K}{L}\right) \in (0, 1].
\end{equation}
The relative one-body Hamiltonian is the linear operator $h_\delta: \ell^2(\{1, \dots, L-1\}) \to \ell^2(\{1, \dots, L-1\})$ defined by
\begin{equation}
\label{eq:h_definition}
(h_\delta f)(X) = \begin{cases}
f(1) - \delta f(2) & \text{if } X = 1, \\[4pt]
2 f(X) - \delta f(X - 1) - \delta f(X + 1) & \text{if } 2 \le X \le L - 2, \\[4pt]
f(L - 1) - \delta f(L - 2) & \text{if } X = L - 1.
\end{cases}
\end{equation}
\end{definition}
We have the following spectral theorem for the operator $h_\delta$, essentially rephrasing Bethe's original work \cite{Bethe-31}.

\begin{theorem}[Spectral Decomposition of $h_\delta$]
\label{thm:spectral_decomposition}
The operator $h_\delta$ is real-symmetric and possesses $L-1$ simple eigenvalues $\lambda_0 < \lambda_1 < \dots < \lambda_{L-2}$ with mutually orthogonal eigenvectors $\{f_0, f_1, \dots, f_{L-2}\}$, categorized as follows:
\begin{enumerate}
    \item The $\cosh$ solution: $f_0(X) = \cosh\left(\beta_0 \left(X - \frac{L}{2}\right)\right)$ with $\beta_0 \in [0,\operatorname{arcosh}(\tfrac{1}{\delta})]$ and eigenvalue
    \begin{equation}
    \lambda_0 = 2 - 2 \delta \cosh(\beta_0).
    \end{equation}
    
    \item The $\sinh$ solution: $f_1(X) = \beta_1^{-1} \sinh\left(\beta_1 \left(X - \frac{L}{2}\right)\right)$ if $\delta \le 1 - \frac{2}{L}$,  with $\beta_1 \in [0,\log(1/\delta)]$ and eigenvalue
        \begin{equation}
        \lambda_1 = 2 - 2 \delta \cosh(\beta_1).
        \end{equation}
        The case $\beta_1 = 0$ is included here by continuity; then $f_1(X) = X - \tfrac{L}{2}$.
        \item Regular solutions: 
      $f_n(X) = \cos\left(\frac{n \pi}{2} + \alpha_n \left(X - \frac{L}{2}\right)\right)$ with  $\alpha_n\in[\frac{(n-1)\pi}{L-2},\frac{n\pi}{L-1}]$  and eigenvalues
    \begin{equation}
    \lambda_n = 2 - 2 \delta \cos(\alpha_n)
    \end{equation}
with $1\leq n\leq L-2$ if $\delta>1-2/L$ or $2\leq n\leq L-2$ (otherwise).
\end{enumerate}
\end{theorem}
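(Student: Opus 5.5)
The proof rests on three structural facts about $h_\delta$. First, the matrix \eqref{eq:h_definition} is real, symmetric and tridiagonal (a Jacobi matrix) with off-diagonal entries $-\delta\neq0$. Hence every eigenspace is one-dimensional: an eigenvector is determined by $f(1)$ through the three-term recursion. So all $L-1$ eigenvalues are simple, and eigenvectors for distinct eigenvalues are orthogonal. Second, the reflection $X\mapsto L-X$ commutes with $h_\delta$. Combined with simplicity, every eigenvector is therefore either even or odd about $L/2$. Third, it suffices to exhibit $L-1$ eigenpairs with pairwise distinct eigenvalues, because completeness then follows from the dimension count.

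\emph{Reduction to a scalar equation.} In the bulk $2\le X\le L-2$ the eigenvalue equation is the constant-coefficient recursion $\delta f(X+1)+\delta f(X-1)=(2-\lambda)f(X)$. Its solutions centred at $L/2$ and of definite parity are $\cos(\alpha(X-\frac L2))$ and $\sin(\alpha(X-\frac L2))$ when $\lambda=2-2\delta\cos\alpha$, and $\cosh(\beta(X-\frac L2))$ and $\sinh(\beta(X-\frac L2))$ when $\lambda=2-2\delta\cosh\beta$. We extend $f$ to $X=0$ by the same formula. The equation at $X=1$ then reads $f(1)-\delta f(2)=2f(1)-\delta f(0)-\delta f(2)$, which is equivalent to the boundary condition $\delta f(0)=f(1)$. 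By parity, the condition at $X=L-1$ is then automatic. Writing the regular ansatz with the phase, $f_n(X)=\cos(\frac{n\pi}{2}+\alpha(X-\frac L2))$, the boundary condition becomes the scalar equation
\[
G_n(\alpha):=\delta\cos\!\Big(\tfrac{n\pi}{2}-\tfrac{\alpha L}{2}\Big)-\cos\!\Big(\tfrac{n\pi}{2}-\alpha\big(\tfrac L2-1\big)\Big)=0 .
\]
Similarly, the hyperbolic cases reduce to $\delta\cosh(\beta L/2)=\cosh(\beta(L/2-1))$ and $\delta\sinh(\beta L/2)=\sinh(\beta(L/2-1))$.

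\emph{Root location.} For the regular solutions I would evaluate $G_n$ at the endpoints $\alpha=\frac{(n-1)\pi}{L-2}$ and $\alpha=\frac{n\pi}{L-1}$. At these points one of the cosines is $\pm1$ or aligned with the other, so their signs can be computed directly; they should be opposite for every $\delta\in(0,1]$, and the intermediate value theorem then gives a root $\alpha_n$ in that interval. These intervals are pairwise disjoint, and $\cos$ is strictly monotone on $[0,\pi]$. Consequently the values $\lambda_n=2-2\delta\cos\alpha_n$ are distinct and increasing in $n$.

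For the $\cosh$ case, set $r(\beta)=\cosh(\beta(L/2-1))/\cosh(\beta L/2)$. It decreases from $1$ to $0$ and satisfies $r(\beta)>e^{-\beta}$. Hence the equation $r(\beta)=\delta$ has a unique root $\beta_0$, and since $\delta<r(\beta_0)$ leads to a bound on $\cosh\beta_0$, one gets $\beta_0\le\operatorname{arcosh}(1/\delta)$. For the $\sinh$ case, the ratio $\sinh(\beta(L/2-1))/\sinh(\beta L/2)$ decreases from $(L-2)/L$ at $\beta=0$ to $0$, and it is at least $e^{-\beta}$. So a root $\beta_1\in[0,\log(1/\delta)]$ exists exactly when $\delta\le1-2/L$. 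When $\delta>1-2/L$, this odd mode instead appears as the regular solution with $n=1$, which is why $n$ starts at $1$ in that case. Both hyperbolic eigenvalues lie below $2-2\delta$, hence below every regular $\lambda_n$. Moreover $\lambda_0<\lambda_1$ because $\beta_0>\beta_1$, which follows by comparing the two ratios.

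\emph{Counting and the main obstacle.} In either case we obtain exactly $L-1$ eigenvectors: $f_0$, then $f_1$ (hyperbolic or regular), then $f_2,\dots,f_{L-2}$. They have distinct eigenvalues and therefore form a complete orthogonal system. The main obstacle is the endpoint sign computation for $G_n$, done uniformly in $n$, in the parity of $n$ and of $L$, and in $\delta$. The most delicate part is the transition at $n=1$, $\delta\approx1-2/L$, where the root $\alpha_1$ tends to $0$ and has to be matched continuously with $\beta_1\to0$. I would first handle it by rewriting $G_n$ via sum-to-product formulas, so that its sign at each endpoint becomes evident.
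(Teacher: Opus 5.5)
Your overall route is the paper's. You reduce the equation at $X=1$ to the scalar condition $\delta f(0)=f(1)$, apply the intermediate value theorem on $[\frac{(n-1)\pi}{L-2},\frac{n\pi}{L-1}]$ for the regular modes, and use monotone ratios for the hyperbolic ones. Your structural layer is correct and makes explicit what the paper leaves implicit: simplicity from the Jacobi structure, parity from the reflection, and ordering and completeness from disjoint intervals plus counting. For $n\ge2$ your endpoint test also works: $G_n(\alpha_n^-)=\delta\sin\bigl(\frac{(n-1)\pi}{L-2}\bigr)>0$ (one cosine vanishes there; it is not $\pm1$), and $G_n(\alpha_n^+)=(\delta-1)\cos\bigl(\frac{n\pi}{2(L-1)}\bigr)\le0$.

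\textbf{Gap 1: the bounds on $\beta_0,\beta_1$.} These bounds are part of the statement, and the one on $\beta_0$ is used in Proposition~\ref{prop:3:8:a}. You derive them from inequalities pointing the wrong way. With $a=\beta L/2$ you have $r(\beta)=\cosh\beta-\tanh a\,\sinh\beta$ and $s(\beta)=\cosh\beta-\coth a\,\sinh\beta$.
\begin{itemize}
\item Your inequality $r(\beta)>e^{-\beta}$ is true. Combined with $r(\beta_0)=\delta$, however, it only yields the lower bound $\beta_0>\log(1/\delta)$. Also, ``$\delta<r(\beta_0)$'' cannot hold, since $r(\beta_0)=\delta$. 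What you need is $r(\beta)\le 1/\cosh\beta$. For $\beta>0$ this is equivalent to $\tanh\beta\le\tanh(\beta L/2)$. Alternatively, as in the paper, it follows from $\cosh(a-\beta)\cosh\beta=\frac12[\cosh a+\cosh(a-2\beta)]\le\cosh a$. Then $\delta=r(\beta_0)\le 1/\cosh\beta_0$.
\item Your claim that the $\sinh$ ratio is at least $e^{-\beta}$ is false: at $\beta=0$ it equals $1-\frac2L<1$. Since $\coth a>1$, one has $s(\beta)\le e^{-\beta}$. Hence $\delta=s(\beta_1)\le e^{-\beta_1}$, i.e.\ $\beta_1\le\log(1/\delta)$.
\end{itemize}

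\textbf{Gap 2: the $n=1$ regular mode.} You leave this case open, and your endpoint test cannot settle it. Here $\alpha_1^-=0$, and $G_1(\alpha)=\delta\sin(\frac{\alpha L}{2})-\sin\bigl(\alpha(\frac L2-1)\bigr)$ vanishes at $\alpha=0$ for every $\delta$ (the trivial solution $f\equiv0$). So there is no sign to compare at the left endpoint, and whether a sign change occurs genuinely depends on $\delta$. The missing step is short. On $(0,\frac{\pi}{L-1}]$ one has $\sin(\frac{\alpha L}{2})>0$. The quotient $G_1(\alpha)/\sin(\frac{\alpha L}{2})$ tends to $\delta-(1-\frac2L)$ as $\alpha\to0^+$, and it equals $\delta-1\le0$ at $\alpha=\frac{\pi}{L-1}$. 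Hence a root in $(0,\frac{\pi}{L-1}]$ exists precisely in the regime $\delta>1-\frac2L$. This is exactly the paper's continuous-extension argument. No continuity matching between $\alpha_1$ and $\beta_1$ is needed, so the ``main obstacle'' you flag is not one.

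With these two repairs your proof is complete.
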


\begin{proof} The fact that all listed functions are eigenfunctions with the corresponding eigenvalues is a straightforward calculation. To prove the theorem we thus need to establish the existence of such $\beta_0,\beta_1,\alpha_n$ that the corresponding conditions are met.\\
\noindent \textit{(1)}  Let us first inspect when a $\cosh$ solution $f_0$ exists. The boundary condition reads
\[
 (2 - 2 \delta \cosh(\beta_0)) f_0(1) = \lambda_0 f_0(1) = h f_0(1) = f_0(1) - \delta f_0(2) ,
\]
which simplifies to
\[
 f_0(1) = \delta (2 \cosh(\beta_0) f_0(1) - f_0(2)) .
\]
Using the definition of $f_0$, we obtain
\[
 \cosh(\tfrac{L - 2}{2} \beta_0) = \delta (2 \cosh(\beta_0) \cosh(\tfrac{L - 2}{2} \beta_0) - \cosh(\tfrac{L - 4}{2} \beta_0)) ,
\]
which, by $\cosh(a + b) + \cosh(a - b) = 2 \cosh a \cosh b$, simplifies to
\[
 \cosh(\tfrac{L - 2}{2} \beta_0) = \delta \cosh(\tfrac{L}{2} \beta_0) .
\]
The function $\beta \mapsto \cosh(\tfrac{L - 2}{2} \beta) / \cosh(\tfrac{L}{2} \beta)$ is strictly decreasing on $[0, \infty)$ and takes all values in $(0, 1]$. Hence, the $\cosh$ solution always exists.

Since
$$\cosh(\tfrac{L - 2}{2} \beta) / \cosh(\tfrac{L}{2} \beta) \le 1 / \cosh(\beta),$$ we have $\beta_0 \le \operatorname{arcosh}(\tfrac{1}{\delta})$. This gives the bound from the theorem.\\
\noindent \textit{(2)} We repeat the above reasoning for a $\sinh$ solution $f_1$. The boundary condition is
\[
 (2 - 2 \delta \cosh(\beta_1)) f_1(1) = \lambda_1 f_1(1) = h f_1(1) = f_1(1) - \delta f_1(2) ,
\]
that is,
\[
 f_1(1) = \delta (2 \cosh(\beta_1) f_1(1) - f_1(2)) .
\]
By the definition of $f_1$, we get
\[
 \sinh(\tfrac{L - 2}{2} \beta_1) = \delta (2 \cosh(\beta_1) \sinh(\tfrac{L - 2}{2} \beta_1) - \sinh(\tfrac{L - 4}{2} \beta_1)) ,
\]
which, by $\sinh(a + b) + \sinh(a - b) = 2 \cosh b \sinh a$, simplifies to
\[
 \sinh(\tfrac{L - 2}{2} \beta_1) = \delta \sinh(\tfrac{L}{2} \beta_1) .
\]
The function $\beta \mapsto \sinh(\tfrac{L - 2}{2} \beta) / \sinh(\tfrac{L}{2} \beta)$ is strictly decreasing on $[0, \infty)$ and takes all values in $(0, 1 - \tfrac{2}{L}]$. Hence, the $\sinh$ solution exists if and only if $\delta \le 1 - \tfrac{2}{L}$. We have 
$$\sinh(\tfrac{L - 2}{2} \beta) / \sinh(\tfrac{L}{2} \beta) \le e^{-\beta},$$ and so $\beta_1 \le \log \tfrac{1}{\delta}$. This then leads to the bound from the theorem.\\
\textit{(3)} A similar calculation allows us to describe the regular solutions. Again we begin with the boundary condition
\[
 (2 - 2 \delta \cos(\alpha_n)) f_n(1) = \lambda_n f_n(1) = h f_n(1) = f_n(1) - \delta f_n(2) ,
\]
that is,
\[
 f_n(1) = \delta (2 \cos(\alpha_n) f_n(1) - f_n(2)) .
\]
By the definition of $f_n$, we get
\[
 \cos(\tfrac{L - 2}{2} \alpha_n - \tfrac{n \pi}{2}) = \delta (2 \cos(\alpha_n) \cos(\tfrac{L - 2}{2} \alpha_n - \tfrac{n \pi}{2}) - \cos(\tfrac{L - 4}{2} \alpha_n - \tfrac{n \pi}{2})) .
\]
Using $\cos(a + b) + \cos(a - b) = 2 \cos b \cos a$, we simplify this condition to
\[
 \cos(\tfrac{L - 2}{2} \alpha_n - \tfrac{n \pi}{2}) = \delta \cos(\tfrac{L}{2} \alpha_n - \tfrac{n \pi}{2}) .
\]
We claim that there is a solution $f_n$ which corresponds to $\alpha_n$ of order $\tfrac{n \pi}{L}$.

Consider first $2 \le n \le L - 2$. If
\[
 \alpha_n^+ = \tfrac{n \pi}{L - 1} ,
\]
we have
\[
 \cos(\tfrac{L - 2}{2} \alpha_n^+ - \tfrac{n \pi}{2}) = \cos(\tfrac{n \pi}{2 L - 2}) = \cos(\tfrac{L}{2} \alpha_n^+ - \tfrac{n \pi}{2}) .
\]
On the other hand, if
\[
 \alpha_n^- = \tfrac{(n - 1) \pi}{L - 2} ,
\]
then
\[
 \cos(\tfrac{L - 2}{2} \alpha_n^- - \tfrac{n \pi}{2}) = 0 < \cos(\tfrac{(n - 1) \pi}{L - 2} - \tfrac{\pi}{2}) = \cos(\tfrac{L}{2} \alpha_n^- - \tfrac{n \pi}{2}) .
\]
Hence,
\[
 \frac{\cos(\tfrac{L - 2}{2} \alpha_n^+ - \tfrac{n \pi}{2})}{\cos(\tfrac{L}{2} \alpha_n^+ - \tfrac{n \pi}{2})} = 1 \ge \delta > 0 = \frac{\cos(\tfrac{L - 2}{2} \alpha_n^- - \tfrac{n \pi}{2})}{\cos(\tfrac{L}{2} \alpha_n^- - \tfrac{n \pi}{2})} .
\]
It follows that there is a solution $f_n$ corresponding to $\alpha_n \in (\alpha_n^-, \alpha_n^+]$.

Finally, when $n = 1$ and $\delta > 1 - \tfrac{2}{L}$, we have $\alpha_n^- = 0$, and
\[
 \frac{\cos(\tfrac{L - 2}{2} \alpha_n^+ - \tfrac{n \pi}{2})}{\cos(\tfrac{L}{2} \alpha_n^+ - \tfrac{n \pi}{2})} = 1 \ge \delta > 1 - \tfrac{2}{L} = \frac{\cos(\tfrac{L - 2}{2} \alpha_n^- - \tfrac{n \pi}{2})}{\cos(\tfrac{L}{2} \alpha_n^- - \tfrac{n \pi}{2})}
\]
(with the right-hand side understood in the sense of continuous extension), so once again we have a solution $f_n$ corresponding to $\alpha_n \in (\alpha_n^-, \alpha_n^+]$. (Recall that if $\delta \le 1 - \tfrac{2}{L}$, then there is a corresponding $\sinh$ solution $f_1$.) This ends the proof.
\end{proof}
We shall now prove $\ell^\infty$ estimates for the normalized eigenvectors. We have the following proposition.

\begin{proposition}
\label{prop:3:8:a}
Let $\tilde f_n = f_n / \lVert f_n \rVert$ be the normalized eigenvector of $h_\delta$. If $2 \le n \le L - 2$, we have
\[
 \lvert \tilde f_n(X) \rvert \le \frac{2}{\sqrt{L - 1}} .
\]
Furthermore, if $K$ is fixed, then there exists a constant $C$ (which depends only on $K$) such that for $L$ large enough and every $n$ we have
\[
 \lvert \tilde f_n(X) \rvert \le \frac{C}{\sqrt{L - 1}} .
\]
\end{proposition}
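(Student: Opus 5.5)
For the regular solutions, $|f_n(X)|\le 1$ pointwise, so the first claim follows from a lower bound $\lVert f_n\rVert^2 \ge (L-1)/4$. I would compute the norm directly:
\[
\lVert f_n\rVert^2=\sum_{X=1}^{L-1}\cos^2\!\big(\tfrac{n\pi}{2}+\alpha_n(X-\tfrac L2)\big)=\frac{L-1}{2}+\frac12\sum_{X=1}^{L-1}\cos\!\big(n\pi+2\alpha_n(X-\tfrac L2)\big).
\]
Summing the geometric series symmetric about $L/2$ gives
\[
\sum_{X=1}^{L-1}\cos\big(n\pi+2\alpha_n(X-\tfrac L2)\big)=(-1)^n\frac{\sin((L-1)\alpha_n)}{\sin\alpha_n}.
\]
So it suffices to show $\bigl|\sin((L-1)\alpha_n)/\sin\alpha_n\bigr|\le \frac{L-1}{2}$. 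This gives $\lVert f_n\rVert^2\ge \frac{L-1}{4}$ and hence $|\tilde f_n|\le 2/\sqrt{L-1}$.

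By Theorem~\ref{thm:spectral_decomposition}, for $n\ge2$ we have $\alpha_n\in[\frac{(n-1)\pi}{L-2},\frac{n\pi}{L-1}]$, so $\alpha_n\ge \pi/(L-2)$ and, since $n\le L-2$, $\alpha_n\le \pi-\pi/(L-1)$. The case $\alpha_n\le\pi/2$ is easy: $\sin\alpha_n\ge \frac{2}{\pi}\alpha_n\ge \frac{2}{L-2}$, so the ratio is at most $(L-2)/2$. For $\alpha_n>\pi/2$, write $\alpha_n=\pi-\gamma$ with $\gamma\ge\pi/(L-1)$. Then $|\sin((L-1)\alpha_n)|=|\sin((L-1)\gamma)|$ and $\sin\alpha_n=\sin\gamma\ge 2\gamma/\pi\ge 2/(L-1)$, which gives the bound $(L-1)/2$. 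This near-endpoint check is the only delicate point, and the interval bounds from Theorem~\ref{thm:spectral_decomposition} are exactly what makes it work.

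For the second claim with $K$ fixed and $L$ large, only $f_0$, $f_1$, and possibly $f_1$ as a regular solution remain. Here $\delta=\cos(\pi K/L)=1-O(K^2/L^2)$, so $\delta>1-2/L$ for $L$ large and there is no sinh solution. The $n=1$ regular solution has $\alpha_1\in(0,\pi/(L-1)]$, and then $\sin((L-1)\alpha_1)/\sin\alpha_1$ can be close to $L-1$, so the argument above fails. Instead I would use the quantization $\cos(\frac{L-2}{2}\alpha_1-\frac\pi2)=\delta\cos(\frac L2\alpha_1-\frac\pi2)$, i.e. $\sin(\frac{L-2}{2}\alpha)=\delta\sin(\frac L2\alpha)$. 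Expanding with $\delta=1-c/L^2$, $c\approx\pi^2K^2/2$, this forces $\alpha_1 L$ to be bounded below by a positive constant depending on $K$: if $\alpha_1 L\to0$ the left side is $\approx\frac{L-2}{L}\sin(\frac L2\alpha_1)$, which contradicts $\delta>1-2/L$ with margin $\asymp 1/L$. With $\alpha_1\ge c_K/L$, the function $f_1(X)=-\sin(\alpha_1(X-L/2))$ is odd about $L/2$ and has $\lVert f_1\rVert^2\ge c'_K L$ by a Riemann-sum comparison with $\int_{-1/2}^{1/2}\sin^2(c_K u)\,du>0$.

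For $f_0$, the equation $\cosh(\frac{L-2}{2}\beta_0)=\delta\cosh(\frac L2\beta_0)$ with $1-\delta=O(K^2/L^2)$ forces $\beta_0=O_K(1/L)$, since $\tanh(\frac L2\beta)\beta\approx 1-\delta$ gives $\beta_0^2 L\lesssim K^2/L^2$. Then $f_0(X)=\cosh(\beta_0(X-L/2))\in[1,C_K]$ uniformly, so $\lVert f_0\rVert^2\ge L-1$ and $|\tilde f_0|\le C_K/\sqrt{L-1}$.

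The main obstacle is the lower bound on $\alpha_1 L$, which needs a careful asymptotic of the quantization equation near $\alpha=0$. If it proves awkward, I would instead bound $\lVert f_1\rVert$ via orthogonality of $f_1$ to $f_0$, but I would try the direct expansion first.
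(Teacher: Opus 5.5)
Your proposal is correct. For $2\le n\le L-2$ it coincides with the paper's proof: the same closed form for $\lVert f_n\rVert^2$, and the same bound $\sin\alpha_n\ge\frac{2}{L-1}$ coming from $\frac{\pi}{L-1}\le\alpha_n\le\pi-\frac{\pi}{L-1}$.

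The low modes are handled differently. For $n=0$ the paper does not redo an asymptotic analysis of the secular equation. It uses $\beta_0\le\operatorname{arcosh}(1/\delta)$, which is already part of Theorem~\ref{thm:spectral_decomposition}, to get $\beta_0L\le C|K|$. Your asymptotic $\beta_0^2L\lesssim K^2/L^2$ is sharper, but it rests on an unjustified ``$\approx$''. Citing $\cosh\beta_0\le 1/\delta$ fixes this in one line.

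For $n=1$ the paper needs no lower bound on $\alpha_1$ at all. From $\alpha_1\le\frac{\pi}{L-1}$ one has $\frac{2}{\pi}\alpha_1|X-\frac L2|\le|f_1(X)|\le\alpha_1|X-\frac L2|$. Hence $\lVert f_1\rVert^2\ge\frac{L(L-1)(L-2)\alpha_1^2}{3\pi^2}$ and $\sup|f_1|\le\alpha_1\frac{L-1}{2}$, so $\alpha_1$ cancels in $\tilde f_1$.

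Your route (first $\alpha_1L\ge c_K$, then $|f_1|\le 1$ plus a Riemann sum) also works. The step you flag can be made quantitative as follows:
\begin{itemize}
\item write the quantization ratio as $\cos\alpha-\cot(\frac{L\alpha}{2})\sin\alpha$;
\item use $u\cot u\ge\frac{\pi}{4}$ on $(0,\frac{\pi}{4}]$ to get ratio $\le 1-\frac1L$ whenever $L\alpha\le\frac{\pi}{2}$;
\item note this is incompatible with $\delta\ge 1-\frac{\pi^2K^2}{2L^2}$ for large $L$, so $\alpha_1>\frac{\pi}{2L}$.
\end{itemize}

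Your approach yields more information: in fact $\alpha_1L\to\pi$, so $f_1$ is close to the free mode. The paper's is shorter and more robust. It gives an absolute constant for every $\delta>1-\frac{2}{L}$, including the regime $\alpha_1\to 0$ where your lower bound is unavailable.
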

\begin{proof}
For the regular solutions, we clearly have
\[
 \lvert f_n(X) \rvert \le 1 .
\]
On the other hand,
\[\begin{split}
 \lVert f_n \rVert_2^2 & = \sum_{X = 1}^{L - 1} \cos^2(\tfrac{(n - 2) \pi}{2} + \alpha_n (X - \tfrac{L}{2})) \\
 & = \sum_{X = 1}^{L - 1} \bigl(\tfrac{1}{2} + \tfrac{1}{2} (-1)^n \cos(\alpha_n (2 X - L))\bigr) \\
 & = \frac{L - 1}{2} + (-1)^n \frac{\sin((L - 1) \alpha_n)}{2 \sin(\alpha_n)} .
\end{split}\]
If $n \ge 2$, we have $\tfrac{\pi}{L - 1} \le \alpha_n \le \pi - \tfrac{\pi}{L - 1}$, and hence $\sin \alpha_n \ge \tfrac{2}{L - 1}$ and
\[
 \lVert f_n \rVert_2^2 \ge \frac{L - 1}{2} - \frac{1}{2 \sin(\alpha_n)} \ge \frac{L - 1}{2} - \frac{L - 1}{4} \ge \frac{L - 1}{4}
\]
when $L \ge 4$. This gives the desired bound.

A different argument is needed when $n = 0$ or $n = 1$. For $n = 0$, we have
\begin{align}
\notag
 \lVert f_0 \rVert_2^2 & = \sum_{X = 1}^{L - 1} \cosh^2(\beta_0 (X - \tfrac{L}{2})) \\
\label{eq:beta0:lower}
 & = \sum_{X = 1}^{L - 1} \bigl(\tfrac{1}{2} + \tfrac{1}{2} \cosh(\beta_0 (2 X - L))\bigr) \ge L - 1 .
\end{align}
On the other hand, if $L \ge 3 \lvert K \rvert$, we have $\lvert \frac{\pi K}{L} \rvert \le \tfrac{\pi}{3}$. There is a constant $C$ such that
\[ \log \frac{1}{\cos x} \le C x^2 \]
when $-\tfrac{\pi}{3} \le x \le \tfrac{\pi}{3}$, and so
\[ \log \frac{1}{\delta} = \log \frac{1}{\cos \frac{\pi K}{L}} \le C \frac{K^2}{L^2}. \]
Recall that $\beta_0 \le \operatorname{arcosh}(1 / \delta)$ and $\delta = \cos \tfrac{\pi K}{L} \ge \frac{1}{2}$. For some constant $C$,
\[ \operatorname{arcosh} e^t \le C \sqrt{t} \]
when $0 \le t \le \log 2$. Thus, since $\delta \ge \tfrac{1}{2}$, we get
\[ \beta_0 \le C \sqrt{\log \frac{1}{\delta}} \]
and so 
\[ \beta_0 L \leq C \lvert K \rvert \]
with an absolute constant. This gives
\[
 \lvert f_0(X) \rvert \le \cosh(\beta_0 \tfrac{L - 2}{2}) \le \cosh \tfrac{C \lvert K \rvert}{2} .
\]
Together with~\eqref{eq:beta0:lower}, we get the desired bound on $\tilde f_0$.

We now turn to $n = 1$. If $L$ is large enough, $1 - \delta = 1 - \cos \tfrac{\pi K}{L} \le \tfrac{1}{2} (\tfrac{\pi K}{L})^2 < \tfrac{2}{L}$, and hence $f_1$ is the regular solution. Since $\alpha_1 \le \frac{\pi}{L - 1}$, we have $\alpha_1 \lvert X - \tfrac{L}{2} \rvert < \frac{\pi}{2}$. It follows that
\[
 \lvert f_1(X) \rvert \ge \tfrac{2}{\pi} \alpha_1 \lvert X - \tfrac{L}{2} \rvert ,
\]
and hence
\[
 \lVert f_1 \rVert_2^2 \ge \frac{4 \alpha_1^2}{\pi^2} \sum_{X = 1}^{L - 1} \lvert X - \tfrac{L}{2} \rvert^2 = \frac{L (L - 1) (L - 2) \alpha_1^2}{3 \pi^2} .
\]
On the other hand,
\[
 \lvert f_1(X) \rvert \le \alpha_1 \tfrac{L - 1}{2} ,
\]
and again the desired estimate on $\tilde f_1$ follows.
\end{proof}

Observe that when $n \ge 2$, we have
\[
 \lvert f_n(1) \rvert = \lvert \cos(\alpha_n \tfrac{L - 2}{2} - \tfrac{n \pi}{2}) \rvert .
\]
Using $\alpha_n^- \le \alpha_n \le \alpha_n^+$, we find that
\[
 \tfrac{(n - 1) \pi}{2} - \tfrac{n \pi}{2} \le \alpha_n \tfrac{L - 2}{2} - \tfrac{n \pi}{2} \le \tfrac{n (L - 2) \pi}{2 (L - 1)} - \tfrac{n \pi}{2} ,
\]
which simplifies to
\[
 -\tfrac{\pi}{2} \le \alpha_n \tfrac{L - 2}{2} - \tfrac{n \pi}{2} \le -\tfrac{n \pi}{2 (L - 1)} .
\]
Therefore,
\[
 \lvert f_n(1) \rvert \le \cos(\tfrac{n \pi}{2 (L - 1)}) \le \tfrac{\pi}{2} - \tfrac{n \pi}{2 (L - 1)} = \tfrac{(L - 1 - n) \pi}{2 (L - 1)} .
\]
We have thus proved the following result.

\begin{proposition}
\label{prop:3:8:b}
Let $\tilde f_n = f_n / \lVert f_n \rVert$ be the normalized eigenvector of $h_\delta$. If $2 \le n \le L - 2$, we have
\[
 \lvert \tilde f_n(1) \rvert \le \tfrac{(L - 1 - n) \pi}{(L - 1)^{3 / 2}} .
\]
\end{proposition}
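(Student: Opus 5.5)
The bound follows by combining the pointwise estimate on $f_n(1)$ from the computation just above with the lower bound on $\lVert f_n\rVert_2$ from the proof of Proposition~\ref{prop:3:8:a}.

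\textbf{Numerator.} For $2\le n\le L-2$ we have $f_n(1)=\cos\bigl(\alpha_n\tfrac{L-2}{2}-\tfrac{n\pi}{2}\bigr)$. The interval $\alpha_n\in[\alpha_n^-,\alpha_n^+]$ from Theorem~\ref{thm:spectral_decomposition} confines the argument to $\bigl[-\tfrac{\pi}{2},-\tfrac{n\pi}{2(L-1)}\bigr]$. Since $\lvert\cos\rvert$ is monotone on this range, $\lvert f_n(1)\rvert\le\cos\bigl(\tfrac{n\pi}{2(L-1)}\bigr)=\sin\bigl(\tfrac{(L-1-n)\pi}{2(L-1)}\bigr)$. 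Then $\sin u\le u$ gives
\[
\lvert f_n(1)\rvert\le \frac{(L-1-n)\pi}{2(L-1)}.
\]
The displayed chain before the statement writes this via $\cos x\le\tfrac{\pi}{2}-x$, which is the same inequality after substituting $u=\tfrac{\pi}{2}-x$.

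\textbf{Denominator.} I reuse the computation in the proof of Proposition~\ref{prop:3:8:a}:
\[
\lVert f_n\rVert_2^2=\frac{L-1}{2}\pm\frac{\sin((L-1)\alpha_n)}{2\sin\alpha_n}.
\]
There we had $\tfrac{\pi}{L-1}\le\alpha_n\le\pi-\tfrac{\pi}{L-1}$, which gives $\sin\alpha_n\ge\tfrac{2}{L-1}$ and hence $\lVert f_n\rVert_2^2\ge\tfrac{L-1}{4}$ for $L\ge4$. I expect this step to be the only delicate one. The lower end $\alpha_n\ge\alpha_n^-=\tfrac{(n-1)\pi}{L-2}$ gives $\alpha_n\ge\tfrac{\pi}{L-2}\ge\tfrac{\pi}{L-1}$ only for $n\ge2$, and this is exactly why the statement excludes $n=0,1$. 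The upper end $\alpha_n\le\tfrac{(L-2)\pi}{L-1}$ follows from $\alpha_n^+$ with $n\le L-2$. The bound $\sin\alpha_n\ge\tfrac{2}{L-1}$ then follows from Jordan's inequality $\sin x\ge\tfrac{2x}{\pi}$ on $[0,\tfrac{\pi}{2}]$, applied by symmetry to $\alpha_n$ or to $\pi-\alpha_n$.

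\textbf{Conclusion.} Dividing the two estimates gives
\[
\lvert\tilde f_n(1)\rvert\le\frac{(L-1-n)\pi}{2(L-1)}\cdot\frac{2}{\sqrt{L-1}}=\frac{(L-1-n)\pi}{(L-1)^{3/2}},
\]
which is the claimed bound. No new ingredient beyond Theorem~\ref{thm:spectral_decomposition} and Proposition~\ref{prop:3:8:a} is needed.
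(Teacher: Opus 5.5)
Your proposal is correct and follows the paper's argument. Like the paper, you confine the cosine argument to $[-\tfrac{\pi}{2},-\tfrac{n\pi}{2(L-1)}]$ and bound $\lvert f_n(1)\rvert\le\tfrac{(L-1-n)\pi}{2(L-1)}$; you then divide by $\lVert f_n\rVert_2\ge\sqrt{(L-1)/4}$ from the proof of Proposition~\ref{prop:3:8:a}. The paper leaves that final division implicit, and you spell it out.
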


We shall now compare the eigenvalues of $h_\delta$ with the energies \eqref{E_K,p} coming from the spin wave approximation. 

\begin{proposition} \label{prop:hdeltaeigenvaluescomparison}
Define
$$ E_n = 2 - 2 \delta \cos(\frac{n\pi}{L-1}) , \qquad E_n' = \tfrac{4 \pi}{L - 1} \delta \sin(\frac{n\pi}{L-1}) .$$
Then,  if $2 \le n \le L - 2$, we have
\begin{equation*}
 \biggl\lvert \frac{\lambda_n - E_n}{E_n'} \biggr\rvert \le 1 .
\end{equation*}
\end{proposition}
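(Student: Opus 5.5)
We use the localization $\alpha_n \in [\alpha_n^-,\alpha_n^+]$ from Theorem~\ref{thm:spectral_decomposition}, with $\alpha_n^+ = \tfrac{n\pi}{L-1}$ and $\alpha_n^- = \tfrac{(n-1)\pi}{L-2}$, together with the mean value theorem for $g(\alpha) = 2 - 2\delta\cos\alpha$. For $2\le n\le L-2$ we have $\lambda_n = g(\alpha_n)$ and $E_n = g(\alpha_n^+)$, so
\[
 \lambda_n - E_n = -2\delta\,\sin(\xi)\,(\alpha_n^+ - \alpha_n)
\]
for some $\xi \in [\alpha_n,\alpha_n^+] \subset [\alpha_n^-,\alpha_n^+]$. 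The proof then reduces to two estimates. First we control the length of the interval. Second we compare $\sin\xi$ with $\sin\alpha_n^+$.

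\textbf{Interval length.} A direct computation gives
\[
 \alpha_n^+ - \alpha_n^- = \pi\Bigl(\tfrac{n}{L-1} - \tfrac{n-1}{L-2}\Bigr) = \pi\,\tfrac{L-1-n}{(L-1)(L-2)} .
\]
Since $\alpha_n^+ - \alpha_n \le \alpha_n^+ - \alpha_n^-$, we obtain
\[
 |\lambda_n - E_n| \le 2\pi\delta\,\sin(\xi)\,\frac{L-1-n}{(L-1)(L-2)} .
\]
Note also that $\xi\in(0,\pi)$: we have $\alpha_n^- \ge \pi/(L-2) > 0$ for $n\ge 2$, and $\alpha_n^+ \le \pi(L-2)/(L-1) < \pi$. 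Hence $\sin\xi > 0$.

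\textbf{Comparing the sines.} Dividing the bound above by $E_n' = \tfrac{4\pi\delta}{L-1}\sin\alpha_n^+$ gives
\[
 \Bigl|\frac{\lambda_n - E_n}{E_n'}\Bigr| \le \frac{\sin\xi}{2\sin\alpha_n^+}\cdot\frac{L-1-n}{L-2} .
\]
It therefore suffices to show $\sin\xi \le 2\sin\alpha_n^+ \cdot \frac{L-2}{L-1-n}$. This is the main obstacle, because near either endpoint of $(0,\pi)$ the sine is small and a crude bound fails. We split into two cases.

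\emph{Case $\alpha_n^+ \le \pi/2$.} Here $\xi \le \alpha_n^+ \le \pi/2$, and $\sin$ is increasing on $[0,\pi/2]$, so $\sin\xi \le \sin\alpha_n^+$. Since also $\frac{L-1-n}{L-2} \le 1$, the ratio is at most $\tfrac12$.

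\emph{Case $\alpha_n^+ > \pi/2$.} Now $\sin\xi \le 1$. Near $\pi$ we use concavity of the sine on $[0,\pi]$, which gives $\sin x \ge \tfrac{2}{\pi}\min(x,\pi-x)$. Here
\[
 \pi - \alpha_n^+ = \tfrac{\pi(L-1-n)}{L-1}, \qquad\text{so}\qquad \sin\alpha_n^+ \ge \tfrac{2(L-1-n)}{L-1}.
\]
Therefore
\[
 \frac{\sin\xi}{2\sin\alpha_n^+}\cdot\frac{L-1-n}{L-2} \le \frac{L-1}{4(L-1-n)}\cdot\frac{L-1-n}{L-2} = \frac{L-1}{4(L-2)} \le 1 \quad\text{for } L \ge 4 .
\]
In both cases the ratio is bounded by $1$. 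This uses only the bracketing of $\alpha_n$ already established in Theorem~\ref{thm:spectral_decomposition} and elementary trigonometric inequalities.
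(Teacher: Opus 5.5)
Your proof is correct and follows essentially the paper's route: the bracketing $\alpha_n\in(\alpha_n^-,\alpha_n^+]$, the exact interval length $\alpha_n^+-\alpha_n^-=\pi(L-1-n)/((L-1)(L-2))$, and a case split at $\alpha_n^+=\pi/2$. In one case you use monotonicity of $\sin$, and in the other the concavity bound $\sin\alpha_n^+\ge 2(L-1-n)/(L-1)$. The differences from the paper are minor. You use the mean value theorem where the paper uses the sum-to-product identity. In the second case you keep the factor $(L-1-n)/(L-2)$, so it cancels directly against the concavity bound and gives the sharper constant $\tfrac12$. The paper instead bounds the interval length by $\pi/(L-1)$ and uses $\sin(\tfrac12\alpha_n+\tfrac12\alpha_n^+)\le\sin\alpha_n^+ +(\alpha_n^+-\alpha_n^-)$, which only gives $1$.
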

\begin{proof}
    Observe that for $n \ge 2$,
\[
 \alpha_n^+ - \alpha_n^- = \tfrac{n \pi}{L - 1} - \tfrac{(n - 1) \pi}{L - 2} = \tfrac{(L - 1 - n) \pi}{(L - 1) (L - 2)} \le \tfrac{\pi}{L - 1} .
\]
Thus
\[
 \lambda_n - E_n = 2 \delta (\cos(\alpha_n^+) - \cos(\alpha_n)) = -4 \delta \sin(\tfrac{1}{2} \alpha_n + \tfrac{1}{2} \alpha_n^+) \sin(\tfrac{1}{2} \alpha_n^+ - \tfrac{1}{2} \alpha_n) ,
\]
and so
\[
 \lvert \lambda_n - E_n \rvert \le 2 \delta \sin(\tfrac{1}{2} \alpha_n + \tfrac{1}{2} \alpha_n^+) (\alpha_n^+ - \alpha_n^-) .
\]
It follows that
\[
 \biggl\lvert \frac{\lambda_n - E_n}{E_n'} \biggr\rvert \le \frac{\sin(\tfrac{1}{2} \alpha_n + \tfrac{1}{2} \alpha_n^+)}{2 \sin(\alpha_n^+)} .
\]
If $n \le \tfrac{L - 1}{2}$, then $\tfrac{1}{2} \alpha_n + \tfrac{1}{2} \alpha_n^+ \le \alpha_n^+ \le \tfrac{\pi}{2}$, and so the right-hand side of the above inequality does not exceed $\tfrac{1}{2}$. If $n > \tfrac{L - 1}{2}$, then 
\[
 \biggl\lvert \frac{\lambda_n - E_n}{E_n'} \biggr\rvert \le \frac{\sin(\alpha_n^+) + (\alpha_n^+ - \alpha_n^-)}{2 \sin(\alpha_n^+)} .
\]
However,
\[
 \sin(\alpha_n^+) = \sin(\tfrac{n \pi}{L - 1}) \ge 2 - \tfrac{2 n}{L - 1} = \tfrac{2 (L - 1 - n)}{L - 1} \ge \alpha_n^+ - \alpha_n^- ,
\]
and consequently
\[
 \biggl\lvert \frac{\lambda_n - E_n}{E_n'} \biggr\rvert \le 1 .
\]
This ends the proof.
\end{proof}

We conclude this section by linking the spectral analysis of $h_\delta$ in \eqref{eq:h_definition} with that of \eqref{eq_onebody_Ham}. The latter operator is defined on the Hilbert space $\mathcal{H}_{K,L}^{(1)}$ defined in \eqref{H_1_def}. This space has additional symmetry constraints. Taking them into account yields the following result. 
\begin{corollary}\label{cor:lambdakj}
   For $j\in \mathcal{J}_{K,L}:= \{1,2,\ldots, \dim \mathcal{H}_{K,L}^{(1)}\},$ let $\lambda_{K,j}$ and $f_{K,j}$ denote the eigenvalues and, respectively, eigenvectors of $h_{\Lambda,K}$ defined in \eqref{eq_onebody_Ham}. Then 
   \begin{equation}
\lambda_{K, j} = \begin{cases}
\lambda_{2 j - 2} & \text{if } K \text{ is even,} \\[4pt]
\lambda_{2 j - 1} & \text{if } K \text{ is odd,}
\end{cases}
\qquad \text{and} \qquad
f_{K, j} = \begin{cases}
\dfrac{f_{2 j - 2}}{\|f_{2 j - 2}\|} & \text{if } K \text{ is even,} \\[8pt]
\dfrac{f_{2 j - 1}}{\|f_{2 j - 1}\|} & \text{if } K \text{ is odd.}
\end{cases}
\end{equation}
\end{corollary}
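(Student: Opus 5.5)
The statement to prove is Corollary~\ref{cor:lambdakj}. The plan is to identify $h_{\Lambda,K}$ with the restriction of $h_\delta$, $\delta=\cos(\pi K/L)$, to the subspace $\mathcal{H}^{(1)}_{K,L}\subset\ell^2(\{1,\dots,L-1\})$. Comparing \eqref{eq_onebody_Ham} with \eqref{eq:h_definition}, the two operators have the same matrix. So it suffices to show two things: $h_\delta$ leaves $\mathcal{H}^{(1)}_{K,L}$ invariant, and the eigenvectors of $h_\delta$ lying in this subspace are exactly the $f_n$ with $n\equiv K \pmod 2$.

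First, introduce the reflection $(Rf)(X)=f(L-X)$ on $\ell^2(\{1,\dots,L-1\})$. By definition \eqref{H_1_def}, $\mathcal{H}^{(1)}_{K,L}$ is the eigenspace $\{Rf=(-1)^Kf\}$. The matrix of $h_\delta$ is symmetric under $X\mapsto L-X$: the two boundary rows are mirror images and the bulk is translation invariant. Hence $Rh_\delta=h_\delta R$, and $h_\delta$ preserves both eigenspaces of $R$.

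Next, compute the parity of each eigenvector from Theorem~\ref{thm:spectral_decomposition}. The function $f_0$ (cosh) is even about $L/2$, and $f_1$ (sinh) is odd. For the regular solutions,
\[
f_n(L-X)=\cos\bigl(\tfrac{n\pi}{2}-\alpha_n(X-\tfrac L2)\bigr)=\cos\bigl(-\tfrac{n\pi}{2}+\alpha_n(X-\tfrac L2)\bigr),
\]
which equals $(-1)^n f_n(X)$ because shifting the argument by $n\pi$ multiplies the cosine by $(-1)^n$. If $\delta>1-2/L$, the regular $f_1$ is likewise odd. Therefore $Rf_n=(-1)^nf_n$ for all $0\le n\le L-2$.

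Since the $L-1$ eigenvectors $f_n$ are orthogonal and span the whole space, the ones with $n\equiv K \pmod 2$ form an orthogonal basis of $\mathcal{H}^{(1)}_{K,L}$. As a consistency check, their number matches \eqref{appendix_J_def}. For $L$ odd there are $(L-1)/2$ values of each parity in $\{0,\dots,L-2\}$. For $L$ even there are $L/2$ even values and $L/2-1$ odd values.

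Finally, list these eigenvectors in increasing order of eigenvalue. The eigenvalues $\lambda_n$ are strictly increasing in $n$, so the $j$-th eigenvalue of $h_{\Lambda,K}$ is $\lambda_{2j-2}$ for even $K$ and $\lambda_{2j-1}$ for odd $K$. The eigenvectors are the corresponding normalized $f_n$. Here only the parity of $K$ matters, since $(-1)^K=(-1)^{|K|}$.

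The one delicate point is the case $\cos(\pi K/L)\le0$, where $\delta\notin(0,1]$. The paper excludes $\cos(\pi K/L)=0$ and effectively assumes $|K|<L/2$; I would adopt the same restriction, which covers fixed $K$ with $L$ large. Otherwise the proof is a bookkeeping check.
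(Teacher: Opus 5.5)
Your proof is correct and follows the reasoning the paper only gestures at ("taking the symmetry constraints into account"). You identify $h_{\Lambda,K}$ as the restriction of $h_\delta$ to the $(-1)^K$-eigenspace of the reflection $X\mapsto L-X$, check that $f_n$ has parity $(-1)^n$, and use the strict monotonicity $\lambda_0<\lambda_1<\dots<\lambda_{L-2}$ to obtain the indexing $2j-2$ or $2j-1$; your dimension count and the restriction to $\cos(\pi K/L)>0$ (the paper's standing assumption $\delta\in(0,1]$) fill in exactly the details the paper leaves implicit.
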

In Section \ref{sec:main-kroof} we will use the following version of the results obtained above.

\begin{proposition}\label{prop:eigenvalue_comparison_spin}
Fix \(K\in\mathbb Z\).  For \(j\in\mathcal J_{K,L}\), define
$$
    E_{K,j}
    :=
    2-2\delta\cos\left(\frac{2\pi j}{L}\right)
$$
and
$$
    E'_{K,j}
    :=
    \frac{4\pi\delta}{L}
    \sin\left(\frac{2\pi j}{L}\right).
$$
Then, when $L$ is large enough (depending on $K$), for every $j\in\mathcal J_{K,L}\setminus\{L/2\}$, we have
\begin{equation}
\label{eq:eigenvalue_comparison_spin}
    \left|
        \frac{\lambda_{K,j}-E_{K,j}}{E'_{K,j}}
    \right|
    \leq C_K .
\end{equation}
\end{proposition}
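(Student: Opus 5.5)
The plan is to reduce \eqref{eq:eigenvalue_comparison_spin} to Proposition~\ref{prop:hdeltaeigenvaluescomparison} via Corollary~\ref{cor:lambdakj}, controlling the index mismatch between $E_{K,j}$ and $E_n$ by elementary trigonometric estimates. By Corollary~\ref{cor:lambdakj}, $\lambda_{K,j}=\lambda_n$ with $n=2j-2$ ($K$ even) or $n=2j-1$ ($K$ odd). We split
\[
\lambda_{K,j}-E_{K,j}=(\lambda_n-E_n)+(E_n-E_{K,j}),
\]
and show that each term is bounded by $C_K|E'_{K,j}|$.

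\emph{First term, $n\ge 2$.} Proposition~\ref{prop:hdeltaeigenvaluescomparison} gives $|\lambda_n-E_n|\le |E_n'|$, so it suffices to show $|E_n'|\le C|E'_{K,j}|$. Write $A=\tfrac{2\pi j}{L}$ and $B=\tfrac{n\pi}{L-1}$; both lie in $[0,\pi]$. A direct computation gives $|A-B|\le \tfrac{C}{L}$. Since $j\le \dim\mathcal H^{(1)}_{K,L}$ and $j\neq L/2$, we have $\pi-A=\tfrac{2\pi(L/2-j)}{L}\ge \tfrac{\pi}{L}$. Hence $\pi-B$ and $\pi-A$ are comparable, say $\pi-B\le C(\pi-A)$, and likewise $B\le CA$ (note $A\ge 2\pi/L$). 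Because $\sin\theta\asymp\min(\theta,\pi-\theta)$ on $[0,\pi]$, we get $\sin B\le C\sin A$. With the prefactors $\tfrac{4\pi\delta}{L-1}$ versus $\tfrac{4\pi\delta}{L}$, this yields $|E'_n|\le C|E'_{K,j}|$.

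\emph{Second term.} We have
\[
|E_n-E_{K,j}|=4\delta\,\bigl|\sin\tfrac{A+B}{2}\bigr|\,\bigl|\sin\tfrac{A-B}{2}\bigr|\le 2\delta\,\sin\!\bigl(\tfrac{A+B}{2}\bigr)\,|A-B|.
\]
The midpoint $\tfrac{A+B}{2}$ is again comparable to $A$ and $\pi-\tfrac{A+B}{2}$ is comparable to $\pi-A$, by the same comparability as above. Therefore $\sin\tfrac{A+B}{2}\le C\sin A$, and the whole expression is at most $C\,\delta\sin(A)/L\le C|E'_{K,j}|$.

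\emph{Low indices $n\in\{0,1\}$.} These correspond to $j=1$.
\begin{itemize}
\item For $K$ odd, $n=1$: the proof of Proposition~\ref{prop:3:8:a} shows $f_1$ is a regular solution once $L$ is large. Its angle satisfies $\alpha_1\in[0,\tfrac{\pi}{L-1}]$, so $|\lambda_1-E_{K,1}|=2\delta|\cos\alpha_1-\cos\tfrac{2\pi}{L}|\le C/L^2$.
\item For $K$ even, $n=0$: we have $\lambda_0=2-2\delta\cosh\beta_0$, and the proof of Proposition~\ref{prop:3:8:a} gives $\beta_0\le C|K|/L$. Therefore $|\lambda_0-E_{K,1}|\le 2\delta\bigl(\cosh\beta_0-1+1-\cos\tfrac{2\pi}{L}\bigr)\le C_K/L^2$.
\end{itemize}
In both cases $E'_{K,1}=\tfrac{4\pi\delta}{L}\sin\tfrac{2\pi}{L}\ge c/L^2$, since $\delta\ge\tfrac12$ for large $L$. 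This gives the bound with a constant $C_K$.

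The main obstacle is the regime $j$ near $L/2$, where $E'_{K,j}$ degenerates like $(L/2-j)/L^2$. There one must check carefully that both the angle shift $|A-B|$ and the sine at the midpoint inherit the factor $(L/2-j)$ rather than only $1/L$. I would handle this by computing $\pi-A$ and $\pi-B$ explicitly in each parity case of $L$ and $K$. For example, for $K$ even, $\pi-B=\tfrac{\pi(L+1-2j)}{L-1}$, which is comparable to $\tfrac{L/2-j}{L}$ because $L/2-j\ge \tfrac12$.
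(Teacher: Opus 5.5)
Your proof is correct and follows the paper's argument. You use the same split through the auxiliary $E_n$ of Proposition~\ref{prop:hdeltaeigenvaluescomparison}, and the same facts $|A-B|\le C/L$ and $\pi-A\ge\pi/L$ to get $\sin B,\ \sin\frac{A+B}{2}\le C\sin A$. The only difference is at $j=1$: you use the explicit bounds $\beta_0\le C|K|/L$ and $\alpha_1\le\pi/(L-1)$, whereas the paper compares $h_\delta$ with the $\delta=1$ Neumann Laplacian via Weyl's inequality; both work equally well.

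One correction to your closing remark: $|A-B|$ need not carry the factor $L/2-j$, and for $K$ even it does not, since it equals $2\pi(L-j)/(L(L-1))\approx\pi/L$ near $j=L/2$. Your second-term estimate already shows that $|A-B|\le C/L$ together with $\sin\frac{A+B}{2}\le C\sin A$ suffices.
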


\begin{proof}
For \(j\in\mathcal J_{K,L}\), define
$$
    n_K(j):=
    \begin{cases}
        2j-2, & K \text{ even},\\
        2j-1, & K \text{ odd}.
    \end{cases}
$$
By Corollary~\ref{cor:lambdakj},
$$
    \lambda_{K,j}=\lambda_{n_K(j)}.$$
For brevity, write
$$
    n:=n_K(j),\qquad
    a:=\frac{n\pi}{L-1},
    \qquad
    q:=\frac{2\pi j}{L}.$$
Obviously
\begin{equation}
\label{eq:two_grids_distance}
    |a-q|\leq \frac{2\pi}{L}.
\end{equation}
Recall the quantities introduced in Proposition~\ref{prop:hdeltaeigenvaluescomparison}:
$$
    E_n=2-2\delta\cos(a),
    \qquad
    E'_n=\frac{4\pi\delta}{L-1}\sin(a).
$$
We first consider the case \(n\geq2\). By
Proposition~\ref{prop:hdeltaeigenvaluescomparison},
\begin{equation}
\label{eq:comparison_with_auxiliary_energy}
    |\lambda_n-E_n|\leq E'_n.
\end{equation}

Since
\(j\in\mathcal J_{K,L}\setminus\{L/2\}\), we have \(0<q<\pi\),
and the distance of \(q\) from the endpoints \(0\) and \(\pi\)
is at least \(\pi/L\). Hence
\begin{equation}
\label{eq:sin_q_lower_bound}
    \sin(q)\geq \frac{2}{L}.
\end{equation}
Using \eqref{eq:two_grids_distance}, we therefore obtain
$$
    \sin(a)
    \leq \sin(q)+|a-q|
    \leq (1+\pi)\sin(q).
$$
It follows that
\begin{equation}
\label{eq:derivative_energy_comparison}
    E'_n
    =
    \frac{4\pi\delta}{L-1}\sin(a)
    \leq C\frac{4\pi\delta}{L}\sin(q)
    =
    C E'_{K,j}.
\end{equation}
Furthermore,
\begin{align}
    |E_n-E_{K,j}|
    &=
    2\delta|\cos(a)-\cos(q)|
    \nonumber\\
    &\leq
    2\delta |a-q|
    \left(
        \sin(q)+\frac12|a-q|
    \right)
    \nonumber\leq
    C\delta
    \left(
        \frac{\sin(q)}{L}+\frac1{L^2}
    \right).
\label{eq:energy_grid_comparison}
\end{align}
By \eqref{eq:sin_q_lower_bound}, the second term in the last
parenthesis can be absorbed into the first. Thus
\begin{equation}
\label{eq:energy_grid_comparison_final}
    |E_n-E_{K,j}|
    \leq
    C\frac{\delta}{L}\sin(q)
    \leq C E'_{K,j}.
\end{equation}
Combining
\eqref{eq:comparison_with_auxiliary_energy},
\eqref{eq:derivative_energy_comparison}, and
\eqref{eq:energy_grid_comparison_final}, we conclude that
\[
 |\lambda_{K,j}-E_{K,j}|
    \leq
    |\lambda_n-E_n|+|E_n-E_{K,j}|   
    \leq C E'_{K,j}
\]
whenever \(n_K(j)\geq2\).

It remains to treat the lowest mode \(j=1\). If \(K\) is even,
then \(n_K(1)=0\), while if \(K\) is odd, then \(n_K(1)=1\).
Let \(h_1\) denote the operator \(h_{\delta}\) in \eqref{eq:h_definition} with
\(\delta=1\). This is the discrete Laplacian on $\{1,\ldots,L-1\}$ with Neumann boundary conditions. Its eigenvalues can be obtained by a direct cosine ansatz. They are
$$    \lambda_n(1)
    =
    2-2\cos\left(\frac{n\pi}{L-1}\right),
    \qquad n=0,\ldots,L-2.$$
Moreover,
$$
    \|h_{\delta}-h_1\|
    \leq 2(1-\delta).
$$
Hence, by Weyl's inequality,
$$
    |\lambda_n-\lambda_n(1)|
    \leq 2(1-\delta).
$$
Since $K$ is fixed,
$$ 1-\delta
    =
    1-\cos\left(\frac{\pi K}{L}\right)
    \leq \frac{\pi^2K^2}{2L^2}.
$$
For $n\in\{0,1\}$, both
$$
    \frac{n\pi}{L-1}
    \qquad\text{and}\qquad
    \frac{2\pi}{L}
$$
are of order \(L^{-1}\). Therefore,
$$
    |\lambda_n(1)-E_{K,1}|
    \leq \frac{C_K}{L^2},
$$
and consequently
$$
    |\lambda_{K,1}-E_{K,1}|
    \leq \frac{C_K}{L^2}.
$$
On the other hand, for $L$ sufficiently large,
$\delta\geq1/2$, and hence
$$
    E'_{K,1}
    =
    \frac{4\pi\delta}{L}
    \sin\left(\frac{2\pi}{L}\right)
    \geq \frac{c}{L^2}.
$$
It follows that
\[
    |\lambda_{K,1}-E_{K,1}|
    \leq C_K E'_{K,1}.
\]
This proves the claim for every
\(j\in\mathcal J_{K,L}\setminus\{L/2\}\).
\end{proof}

\section{Proof of Main Result}\label{sec:main-kroof}
\label{chapter:proof}
In this section we prove Theorem~\ref{thm:wave-kacket}. We will use the effective one-body Hamiltonian $h_{\Lambda,K}$, introduced in  Section \ref{section:one-body-reduction} to rewrite the time evolution of our initial state. We will then use the spectral properties of $h_{\Lambda,K}$, investigated in Section~\ref{section:spectral problem}, to obtain the appropriate bounds on the norm difference. 
\begin{proof}[Proof of Theorem~\ref{thm:wave-kacket}]
Fix $K \in \Z$ and assume $\mathbb{N}\ni L \gg 1$. Let $\Lambda =\{1,2,\ldots,L\}$ and let $\mathcal{K}_K$ be the set of relative momenta at fixed total momentum $K$ defined by \eqref{def:K_K}. For $k\in\mathcal{K}_K$ define 
\begin{displaymath}
    w_k(X) = \cos\left(\frac{2\pi k}{L}X\right) \qquad X\in\{1,2,\ldots,L-1\},
\end{displaymath}
and, for $t\geq 0$, put 
\begin{displaymath}
    \begin{split}
    &f^t_K(X) = (e^{-i h_{\Lambda,K} t}f_K^0)(X) \quad \text{for} \quad 
    f_K^0(X) = \frac{1}{L} \sum_{k\in\mathcal{K}_K} c_{K,k} w_k(X),\\
    &f^{t,\mathrm{free}}_K(X) = \frac{1}{L} \sum_{k\in\mathcal{K}_K} c_{K,k} e^{-i E_{K,k} t} w_k(X),
    \end{split}
\end{displaymath}
where (see \eqref{E_K,p})
\begin{displaymath}
    E_{K,k}= 2\left(1-\cos\left(\tfrac{\pi K}{L}\right)\cos\left(\tfrac{2\pi k}{L}\right)\right)
\end{displaymath}
is our approximation for the energy of a two-spin wave state (see the discussion in Section \ref{section:spin waves}),
and the one-body Hamiltonian $h_{\Lambda,K}$ is given by \eqref{eq_onebody_Ham}.  
Since $w_k\in \mathcal{H}_{K,L}^{(1)}$, we also have $f^0_K, f^t_K, f^{t,\mathrm{free}}_K \in \mathcal{H}_{K,L}^{(1)}$. These functions may be extended to all of $\Z$ using the symmetries \eqref{eq:symmetries_of_f}.
According to the discussion in Section \ref{section:time evolution for wave function}, the time evolutions $\Psi^t_K$ and $\Psi_K^{t,\mathrm{free}}$, defined in \eqref{psi_K^t-intro}  can now be written as
\begin{displaymath}
    \Psi_K^t = \sum_{\substack{x,y\in\Lambda \\ x\neq y}} \psi_K^t(x,y) S^+_x S^+_y \vac, \qquad 
    \Psi_K^{t,\mathrm{free}} = \sum_{\substack{x,y\in\Lambda \\ x\neq y}} \psi_K^{t,\mathrm{free}}(x,y) S^+_x S^+_y \vac,
\end{displaymath}
where, from \eqref{eq:Schrodinger_psi_small}, \eqref{special_case_psi_K^0} and \eqref{e^h_lambda,P}, we have
\begin{displaymath}
\begin{split}
&\psi_K^t(x,y) = e^{-i h_{\Lambda}t} \psi_K^0(x,y) = e^{\frac{2\pi i}{L}K\left(\frac{x+y}{2}\right)}f_K^t(x-y),\\
&\psi_K^{t,\mathrm{free}}(x,y) = e^{\frac{2\pi i}{L}K\left(\frac{x+y}{2}\right)}f_K^{t,\mathrm{free}}(x-y).
\end{split}
\end{displaymath}
We start by discussing the normalization of the initial state. Using \eqref{eq:spinscalarproduct} and the notation introduced above, we find that
\begin{equation}
\label{eq:normalization}
\begin{split}
\| \Psi_K^0 \|_{\mathcal{H}_\Lambda}^2 &= 2 \sum_{\substack{x,y\in\Lambda \\ x\neq y}} |\psi_K^0(x,y)|^2 = 2 \sum_{\substack{x,y\in\Lambda \\ x\neq y}}  |f_K^0(x-y)|^2 = 2 L \sum_{X=1}^{L-1} |f_K^0(X)|^2\\
&= \frac{2}{L} \sum_{k\in \mathcal{K}_K } \sum_{{k'} \in \mathcal{K}_K } \overline{c_{K,k}} c_{K,k'} \sum_{X=1}^{L-1} \cos\left(\tfrac{2\pi k}{L} X\right)\cos\left(\tfrac{2\pi{k'}}{L}X\right) \\ 
&= \frac{2}{L} \sum_{k\in \mathcal{K}_K } \sum_{{k'} \in \mathcal{K}_K } \overline{c_{K,k}} c_{K,k'}
\left(\frac{L}{2}\left(\delta_{k+{k'},0} + \delta_{k,{k'}}\right) - 1 \right) \\
&= {\1_{K\;\text{even}}}|c_{K,0}|^2  + \sum_{k\in \mathcal{K}_K } |c_{K,k}|^2 - \frac{2}{L} \left| \sum_{k\in \mathcal{K}_K } c_{K,k} \right|^2. 
\end{split}
\end{equation}
 Recall that, by assumption, the coefficients $c_{K,k}$ are chosen in such a way that the above expression is equal to one. 
 
 Let us now evaluate the norm difference $\|\Psi^t_K - \Psi_K^{t,\mathrm{free}}\|$. Since $h_{\Lambda,K}$ is self-adjoint on $\mathcal{H}_{K,L}^{(1)}$, its normalized eigenvectors $\{f_{K,j} \left. | \right. j \in \mathcal{J}_{K,L} \}$ form an orthonormal basis of $\mathcal{H}_{K,L}^{(1)}$.   Using the notation introduced above we obtain
\begin{equation}
\label{norm_diff_step_1}
\begin{split}
\left\| \Psi_K^t  - \Psi_K^{t,\mathrm{free}} \right\|_{\mathcal{H}_\Lambda}^2 
&= 2 \langle \psi_K^t - \psi_K^{t,\mathrm{free}}\left. | \right. \psi_K^t - \psi_K^{t,\mathrm{free}} \rangle_{\mathfrak{h}_\Lambda} \\
&= 2 \sum_{\substack{x,y \in \Lambda \\ x \neq y}} \left| f_K^t(x-y) - f_K^{t,\mathrm{free}}(x-y) \right|^2 \\
&= 2 L \sum_{X=1}^{L-1} \left| f_K^t(X) - f_K^{t,\mathrm{free}}(X) \right|^2\\
&= 2 L  \langle f_K^t - f_K^{t,\mathrm{free}} \left. | \right. f_K^t - f_K^{t,\mathrm{free}} \rangle_{\mathcal{H}_{K,L}^{(1)}}\\ 
&= 2 L \sum_{j\in\mathcal{J}_{K,L}} \left| \frac{1}{L} \sum_{k\in \mathcal{K}_K} c_{K,k} \left\langle f_{K,j} \left. \Big| \right.  e^{-i h_{\Lambda,K}t} w_k - e^{-i E_{K,k}t} w_k \right\rangle_{\mathcal{H}_{K,L}^{(1)}}\right|^2 \\
&= \frac{2}{L}  \sum_{j\in \mathcal{J}_{K,L}} \left| \sum_{k\in \mathcal{K}_K} c_{K,k} \left( 1 - e^{i(\lambda_{K,j} - E_{K,k})t}\right) \langle f_{K,j} \left. | \right. w_k \rangle_{\mathcal{H}_{K,L}^{(1)}} \right|^2.
\end{split}
\end{equation}
where we used the identity decomposition $\1_{\mathcal{H}_{K,L}^{(1)}} = \sum_{j\in \mathcal{J}_{K,L}} |f_{K,j}\rangle \langle f_{K,j}|$. Note that only the terms where $\lambda_{K,j} \neq E_{K,k}$ contribute to the above sum. The overlap  $\langle f_{K,j} \left. | \right. w_k \rangle_{\mathcal{H}_{K,L}^{(1)}}$ can be found by using the expression \eqref{eq_onebody_Ham}. Indeed, we consider three cases:
\begin{enumerate}
\item $X \in \{2,\ldots,L-2\}$. Then:
    \begin{displaymath}
    \begin{split}
    (h_{\Lambda,K}w_k)(X) &= 2 w_k(X) - \cos\left(\tfrac{\pi K}{L} \right)(w_k(X+1) + w_k(X-1)) \\
    &= 2\cos\left(\tfrac{2\pi k}{L}X \right) - \cos\left(\tfrac{\pi K}{L} \right)\left(\cos\left(\tfrac{2\pi k}{L}(X+1)\right) + \cos\left(\tfrac{2\pi k}{L}(X- 1)\right) \right) \\
    &= 2\cos\left(\tfrac{2\pi k}{L}X \right)-2\cos\left(\tfrac{2\pi k}{L}X \right)\cos\left(\tfrac{\pi K}{L}\right)\cos\left(\tfrac{2\pi k}{L} \right)\\
    &= 2\left(1-\cos\left(\tfrac{\pi K}{L}\right)\cos\left(\tfrac{2\pi k}{L} \right) \right)\cos\left(\tfrac{2\pi k}{L}X \right) = E_{K,k}w_k(X).
    \end{split}
\end{displaymath}
\item $X = 1$. Similar calculations show that $(h_{\Lambda,K}w_k)(1) = E_{K,k}w_k(1) + \Gamma_{K,k}$  where $\Gamma_{K,k}$ has been introduced in \eqref{def:GammaKkE'Kk}.
\item $X = L-1$. According to \eqref{def:K_K},  $\cos(2\pi k) = (-1)^K$ for $k\in \mathcal{K}_K$. Therefore 
\begin{displaymath}
    (h_{\Lambda,K}w_k)(L-1)= E_{K,k}w_k(L-1) + \cos(2\pi k) \Gamma_{K,k} = E_{K,k}w_k(L-1) + (-1)^K \Gamma_{K,k}.
\end{displaymath}
\end{enumerate}
Putting these results together, we write
\begin{displaymath}
    \left((h_{\Lambda,K} - E_{K,k})w_k\right)(X) = \Gamma_{K,k} \left( \delta_{X,1} + (-1)^K \delta_{X,L-1} \right)\qquad  \forall\,X\in\{1,2,\ldots,L-1\}.
\end{displaymath}
Note that \eqref{H_1_def} implies $f_{K,j}(L-1) = (-1)^K f_{K,j}(1)$, hence
\begin{align}\label{exact_f_and_cos_overlap}
    \begin{split}
    \langle f_{K,j} \left. | \right. w_k  \rangle_{\mathcal{H}_{K,L}^{(1)}} 
&= \frac{\langle (h_{\Lambda,K} - E_{K,k}) f_{K,j} \left. | \right. w_k \rangle_{\mathcal{H}^{(1)}_{K,L}}}{\lambda_{K,j} - E_{K,k}}  \\ 
&= \frac{\langle f_{K,j} \left. | \right. (h_{\Lambda,K} - E_{K,k})   w_k \rangle_{\mathcal{H}^{(1)}_{K,L}}}{\lambda_{K,j} - E_{K,k}}   =  \frac{ 2 \overline{f_{K,j}(1)}\Gamma_{K,k}}{\lambda_{K,j} - E_{K,k}}\,,
    \end{split}
\end{align}
so that \eqref{norm_diff_step_1} becomes
\begin{equation}
\label{norm_difsec:main-krooff_2}
\left\| \Psi_K^t - \Psi_K^{t,\mathrm{free}} \right\|_{\mathcal{H}_\Lambda}^2 = \frac{8}{L} \sum_{j \in \mathcal{J}_{K,L}} |f_{K,j}(1)|^2 \left| \sum_{k\in \mathcal{K}_K} c_{K,k}\frac{1 - e^{i(\lambda_{K,j} - E_{K,k})t}}{\lambda_{K,j} - E_{K,k}}\,\Gamma_{K,k}\right|^2
\end{equation}
(terms with $\lambda_{K,j} - E_{K,k} = 0$ are understood in the limiting sense as $c_{K,k} (-it) \Gamma_{K,k}$).

To get rid of the double sum in \eqref{norm_difsec:main-krooff_2} without losing the information about the timescales, we use suitable substitutions for the expressions inside the absolute values, which will be later rewritten in terms of the norm of the action of appropriately defined uniformly bounded operators on some vectors (cf. \eqref{T:action}). 

For $j\in\mathcal J_{K,L}$, we split the expression inside the absolute value in \eqref{norm_difsec:main-krooff_2} as
$$
A_{K,j}^{(0)}(t)+A_{K,j}^{(\neq 0)}(t),
$$
where

$$
A_{K,j}^{(0)}(t)
:=
\1_{|K|\ \mathrm{even}}\,
c_{K,0}\Gamma_{K,0}
\frac{1-e^{i(\lambda_{K,j}-E_{K,0})t}}
{\lambda_{K,j}-E_{K,0}}
$$
and
$$
A_{K,j}^{(\neq 0)}(t)
:=
\sum_{k\in\mathcal K_K\setminus\{0\}}
c_{K,k}\Gamma_{K,k}
\frac{1-e^{ i(\lambda_{K,j}-E_{K,k})t}}
{\lambda_{K,j}-E_{K,k}}.
$$
We shall first deal with $A_{K,j}^{(0)}(t)$ which corresponds to the $k=0$ term in the sum in \eqref{norm_difsec:main-krooff_2} and has to be estimated separately for reasons that will become clear later. The term $\1_{|K|\ \mathrm{even}}$ follows from \eqref{def:K_K}. Since
\begin{equation}\label{eq:integralfrac}
\frac{1-e^{i(\lambda_{K,j}-E_{K,0})t}}
{\lambda_{K,j}-E_{K,0}}=- i
\int_0^t
e^{ i(\lambda_{K,j}-E_{K,0})s}\,ds,
\end{equation}
we have 
$$
\bigl|A_{K,j}^{(0)}(t)\bigr|
\leq
t\,|c_{K,0}\Gamma_{K,0}|.
$$
Consequently
\begin{equation}\label{eq:R0bound}
\begin{aligned}
R_0(t)
&:=
\1_{|K|\ \mathrm{even}}\,
\frac{8}{L}
\sum_{j\in\mathcal J_{K,L}}
|f_{K,j}(1)|^2
\bigl|A_{K,j}^{(0)}(t)\bigr|^2
\\
&\leq
\1_{|K|\ \mathrm{even}}\,
\frac{8t^2}{L}
|c_{K,0}\Gamma_{K,0}|^2
\sum_{j\in\mathcal J_{K,L}}
|f_{K,j}(1)|^2
\leq
\1_{|K|\ \mathrm{even}}\,
\frac{8t^2}{L}
|c_{K,0}\Gamma_{K,0}|^2.
\end{aligned}
\end{equation}
Indeed, since
\(\{f_{K,j}\}_{j\in\mathcal J_{K,L}}\) is an orthonormal basis of
\(\mathcal H_{K,L}^{(1)}\), Parseval's identity gives
$$
\sum_{j\in\mathcal J_{K,L}}
|f_{K,j}(1)|^2
=
\sum_{j\in\mathcal J_{K,L}}
\bigl|\langle f_{K,j},\delta_1\rangle\bigr|^2
=
\|P_{K,L}\delta_1\|^2
\leq 1,
$$
where \(P_{K,L}\) denotes the orthogonal projection onto
$\mathcal H_{K,L}^{(1)}$. We next observe that the normalization condition \eqref{eq:normalization} implies
$$
|c_{K,0}|\leq 1.
$$
Indeed, let
$$
N_K:=|\mathcal K_K|,
\qquad
S_K:=\sum_{k\in\mathcal K_K}|c_{K,k}|^2.
$$
If $|K|$ is even, then, by the Cauchy--Schwarz inequality,
$$
\begin{aligned}
1
&=
S_K+|c_{K,0}|^2
-\frac{2}{L}
\left|
\sum_{k\in\mathcal K_K}c_{K,k}
\right|^2
\geq
S_K+|c_{K,0}|^2-\frac{2N_K}{L}S_K
\\
&=
|c_{K,0}|^2
+
\left(1-\frac{2N_K}{L}\right)S_K \geq |c_{K,0}|^2,
\end{aligned}
$$
where in the last step we used \(N_K\leq L/2\). Finally,
$$
|\Gamma_{K,0}|
=
\left|
\cos\left(\frac{\pi K}{L}\right)-1
\right|
=
1-\cos\left(\frac{\pi K}{L}\right)
\leq
\frac{\pi^2K^2}{2L^2}.
$$
Combining this estimate with \eqref{eq:R0bound}, we conclude that, for fixed $K$,
\begin{equation} \label{eq:R0boundfinal}
R_0(t)
\leq
\frac{2\pi^4K^4t^2}{L^5}.
\end{equation}
We shall now deal with the terms related to $
A_{K,j}^{(\neq 0)}(t)$. First we isolate the term corresponding to $j=L/2$. By \eqref{def:K_K}, this value belongs to $\mathcal J_{K,L}$ only when $L$ and $|K|$ are both even. After the $k=0$ contribution has been separated, define
\begin{equation} \label{eq:RL/2}
R_{L/2}(t)
:=
\1_{L\ \mathrm{even}}
\1_{|K|\ \mathrm{even}}
\frac{8}{L}
|f_{K,L/2}(1)|^2
\left|
\sum_{k\in\mathcal K_K\setminus\{0\}}
c_{K,k}\Gamma_{K,k}
\frac{
1-e^{i(\lambda_{K,L/2}-E_{K,k})t}
}{
\lambda_{K,L/2}-E_{K,k}
}
\right|^2.
\end{equation}
Since $L$ and $|K|$ are even, Corollary \ref{cor:lambdakj} and Proposition \ref{prop:3:8:b} imply
$$
|f_{K,L/2}(1)|^2
\leq
\frac{C}{L^3}.
$$
Using the integral representation as in \eqref{eq:integralfrac} and Cauchy--Schwarz inequality, we obtain 
\begin{equation} \label{eq:RL/2finalbound}
\begin{aligned}
R_{L/2}(t)
&\leq
\frac{C}{L^4}
\left(
\sum_{k\in\mathcal K_K\setminus\{0\}}
|c_{K,k}\Gamma_{K,k}|^2
\right)
\left(
\sum_{k\in\mathcal K_K\setminus\{0\}}
\left|
\int_0^t
e^{i(\lambda_{K,L/2}-E_{K,k})s}\,ds
\right|^2
\right)
\\
&\leq
\frac{Ct^2}{L^4}
\left(
\sum_{k\in\mathcal K_K\setminus\{0\}}
|c_{K,k}\Gamma_{K,k}|^2
\right)
|\mathcal K_K\setminus\{0\}|
\\
&\leq
\frac{Ct^2}{L^3}
\sum_{k\in\mathcal K_K\setminus\{0\}}
|c_{K,k}\Gamma_{K,k}|^2 \leq
C\frac{\varepsilon^4t^2}{L^3}.
\end{aligned}
\end{equation}
where in the last step we used \eqref{c_assumption_1}. It remains to consider the contribution in which $k\neq 0$ and $j\neq L/2$. We denote this contribution by
$$
\begin{aligned}
R_{\neq 0}(t)
:=
\frac{8}{L}
\sum_{j\in\mathcal J_{K,L}\setminus\{L/2\}}
|f_{K,j}(1)|^2
\left|
\sum_{k\in\mathcal K_K\setminus\{0\}}
c_{K,k}\Gamma_{K,k}
\frac{
1-e^{i(\lambda_{K,j}-E_{K,k})t}
}{
\lambda_{K,j}-E_{K,k}
}
\right|^2.
\end{aligned}
$$
When $|K|$ is even, it may happen that $j=k$. We isolate this term and, as before, use
$$
\frac{
1-e^{ i(\lambda_{K,j}-E_{K,j})t}
}{
\lambda_{K,j}-E_{K,j}
}
=
-i
\int_0^t
e^{i(\lambda_{K,j}-E_{K,j})s}\,ds.
$$
For the remaining terms, for which $k\neq j$, we write
$$
\frac{
1-e^{i(\lambda_{K,j}-E_{K,k})t}
}{
\lambda_{K,j}-E_{K,k}
}
=
\frac{
1-e^{ i(\lambda_{K,j}-E_{K,k})t}
}{
E_{K,j}-E_{K,k}
}
+
i \int_0^t
e^{i(\lambda_{K,j}-E_{K,k})s}\,ds\,
\frac{\lambda_{K,j}-E_{K,j}}{E_{K,j}-E_{K,k}}.
$$
Consequently,
\begin{equation} \label{eq:Rneq0}
\begin{aligned}
R_{\neq 0}(t)
=
\frac{8}{L}
\sum_{j\in\mathcal J_{K,L}\setminus\{L/2\}}
|f_{K,j}(1)|^2
\Bigg|
&-i\,
\1_{|K|\ \mathrm{even}}
\1_{j\in\mathcal K_K\setminus\{0\}}\,
c_{K,j}\Gamma_{K,j}
\int_0^t
e^{ i(\lambda_{K,j}-E_{K,j})s}\,ds
\\
&+
\sum_{k\in\mathcal K_K\setminus\{0,j\}}
\left(
1-e^{i(\lambda_{K,j}-E_{K,k})t}
\right)
\frac{c_{K,k}\Gamma_{K,k}}
{E_{K,j}-E_{K,k}}
\\
&+
i(\lambda_{K,j}-E_{K,j})
\sum_{k\in\mathcal K_K\setminus\{0,j\}}
\int_0^t
e^{\mathrm i(\lambda_{K,j}-E_{K,k})s}\,ds\,
\frac{c_{K,k}\Gamma_{K,k}}
{E_{K,j}-E_{K,k}}
\Bigg|^2.
\end{aligned}
\end{equation}
We introduce the operators $T$ and $T'$ given by
\begin{equation}
    \label{T:action}
 \begin{split}
    (T g)(j) &= \sum_{k\in \mathcal{K}_K\setminus\{0,j\}} \frac{E'_{K,k}}{E_{K,j} - E_{K,k}}g(k) = \sum_{k\in \mathcal{K}_K\setminus\{0,j\}} \frac{\frac{2\pi}{L}\sin\left(\tfrac{2\pi k}{L}\right)}{{\cos\left(\tfrac{2\pi k}{L}\right)-\cos\left(\tfrac{2\pi j}{L}\right)}} g(k),\\
    ( T' g)(j) &= \sum_{k\in \mathcal{K}_K\setminus\{0,j\}  }\frac{E'_{K,j}}{E_{K,j} - E_{K,k}}g(k)=
    \sum_{k\in \mathcal{K}_K\setminus\{0,j\}} \frac{\frac{2\pi}{L}\sin\left(\tfrac{2\pi j}{L}\right)}{{\cos\left(\tfrac{2\pi k}{L}\right)-\cos\left(\tfrac{2\pi j}{L}\right)}} g(k),
\end{split}
\end{equation}
where the functions $g_K^t$ and $h_K^t$ read
\begin{equation}
\label{eq:def_g_h}
g_K^t(k) = \frac{c_{K,k}\Gamma_{K,k}}{E'_{K,k}} e^{-iE_{K,k}t}, \qquad
h_K^t(k) = c_{K,k}\Gamma_{K,k} e^{-i E_{K,k}t}
\end{equation}
for
\begin{displaymath}
E'_{K,k} = \frac{4\pi}{L}\cos\left(\frac{\pi K}{L}\right) \sin\left(\frac{2\pi k}{L}\right).
\end{displaymath}
With these definitions, we have 
$$
\begin{aligned}
\sum_{k\in\mathcal K_K\setminus\{0,j\}}
\left(
1-e^{i(\lambda_{K,j}-E_{K,k})t}
\right)
\frac{c_{K,k}\Gamma_{K,k}}
{E_{K,j}-E_{K,k}}
=
(Tg_K^0)(j)
-e^{i\lambda_{K,j}t}(Tg_K^t)(j),
\end{aligned}
$$
and
$$
(\lambda_{K,j}-E_{K,j})
\sum_{k\in\mathcal K_K\setminus\{0,j\}}
\int_0^t
e^{ i(\lambda_{K,j}-E_{K,k})s}\,ds\,
\frac{c_{K,k}\Gamma_{K,k}}
{E_{K,j}-E_{K,k}}
=
\frac{\lambda_{K,j}-E_{K,j}}{E'_{K,j}}
\int_0^t
e^{i\lambda_{K,j}s}
(T'h_K^s)(j)\,ds.
$$
Therefore,
$$
\begin{aligned}
R_{\neq 0}(t)
=
\frac{8}{L}
\sum_{j\in\mathcal J_{K,L}\setminus\{L/2\}}
|f_{K,j}(1)|^2
\Bigg|
&- i\,
\1_{|K|\ \mathrm{even}}
\1_{j\in\mathcal K_K\setminus\{0\}}
c_{K,j}\Gamma_{K,j}
\int_0^t
e^{i(\lambda_{K,j}-E_{K,j})s}\,ds
\\
&+(Tg_K^0)(j)
-e^{i\lambda_{K,j}t}(Tg_K^t)(j)
\\
&+i\,
\frac{\lambda_{K,j}-E_{K,j}}{E'_{K,j}}
\int_0^t
e^{i\lambda_{K,j}s}
(T'h_K^s)(j)\,ds
\Bigg|^2.
\end{aligned}
$$

Here the exclusion \(j\neq L/2\) ensures that \(E'_{K,j}\neq 0\), while the exclusion \(k\neq0\) ensures that \(g_K^t(k)\) is well-defined.

We now estimate $R_{\neq 0}(t)$. Since $K$ is fixed, for all sufficiently large $L$, Proposition \ref{prop:3:8:a} and the eigenvalue estimate in Proposition \ref{prop:eigenvalue_comparison_spin} imply
$$
|f_{K,j}(1)|^2\leq \frac{C}{L},
\qquad
\left|
\frac{\lambda_{K,j}-E_{K,j}}{E'_{K,j}}
\right|
\leq C,
\qquad
j\in\mathcal J_{K,L}\setminus\{L/2\},
$$
with a constant $C$ independent of $j$ and $L$. Using
$$
|a_1+a_2+a_3+a_4|^2
\leq
4\sum_{\ell=1}^4|a_\ell|^2
$$
in the expression for $R_{\neq 0}(t)$, we obtain
\begin{equation}\label{R_neq_0_terms}
\begin{aligned}
R_{\neq 0}(t)
\leq &
\frac{C}{L^2}
\sum_{j\in\mathcal J_{K,L}\setminus\{L/2\}}
\1_{|K|\ \mathrm{even}}
\1_{j\in\mathcal K_K\setminus\{0\}}
|c_{K,j}\Gamma_{K,j}|^2
\left|
\int_0^t
e^{i(\lambda_{K,j}-E_{K,j})s}\,ds
\right|^2
\\
&+
\frac{C}{L^2}
\sum_{j\in\mathcal J_{K,L}\setminus\{L/2\}}
|(Tg_K^0)(j)|^2+ \frac{C}{L^2}
\sum_{j\in\mathcal J_{K,L}\setminus\{L/2\}}
|(Tg_K^t)(j)|^2
\\
&+
\frac{C}{L^2}
\sum_{j\in\mathcal J_{K,L}\setminus\{L/2\}}
\left|
\int_0^t
e^{  i\lambda_{K,j}s}
(T'h_K^s)(j)\,ds
\right|^2.
\end{aligned}
\end{equation}
The first term satisfies
\begin{equation}\label{eq:firstRneq0bound}
\frac{C}{L^2}
\sum_{j\in\mathcal J_{K,L}\setminus\{L/2\}}
\1_{|K|\ \mathrm{even}}
\1_{j\in\mathcal K_K\setminus\{0\}}
|c_{K,j}\Gamma_{K,j}|^2
\left|
\int_0^t
e^{i(\lambda_{K,j}-E_{K,j})s}\,ds
\right|^2
\leq
\frac{Ct^2}{L^2}
\sum_{k\in\mathcal K_K\setminus\{0\}}
|c_{K,k}\Gamma_{K,k}|^2 \leq \frac{Ct^2 \varepsilon^4}{L^2}
\end{equation}
where in the last step we used the assumption \eqref{c_assumption_1}.

For the last term on the RHS of \eqref{R_neq_0_terms}, the Cauchy--Schwarz inequality in the time variable gives
\begin{equation} \label{eq:Rneq0lasttermbound}
\begin{aligned}
\sum_{j\in\mathcal J_{K,L}\setminus\{L/2\}}
\left|
\int_0^t
e^{ i\lambda_{K,j}s}
(T'h_K^s)(j)\,ds
\right|^2 & \leq
t\int_0^t
\sum_{j\in\mathcal J_{K,L}\setminus\{L/2\}}
|(T'h_K^s)(j)|^2\,ds \\
& \leq
t^2
\sup_{0\leq s\leq t}
\sum_{j\in\mathcal J_{K,L}\setminus\{L/2\}}
|(T'h_K^s)(j)|^2.
\end{aligned}
\end{equation}

Consequently,
\begin{equation} \label{eq:Rneq0midbound}
\begin{aligned}
R_{\neq 0}(t)
\leq{}&
\frac{Ct^2\varepsilon^4}{L^2}+ 
\frac{C}{L^2}
\sum_{j\in\mathcal J_{K,L}\setminus\{L/2\}}
|(Tg_K^0)(j)|^2
+
\frac{C}{L^2}
\sum_{j\in\mathcal J_{K,L}\setminus\{L/2\}}
|(Tg_K^t)(j)|^2
\\
&+
\frac{Ct^2}{L^2}
\sup_{0\leq s\leq t}
\sum_{j\in\mathcal J_{K,L}\setminus\{L/2\}}
|(T'h_K^s)(j)|^2.
\end{aligned}
\end{equation}
By Theorem \ref{thm:uniformT}, the operators
$$
T,T':
\ell^2(\mathcal K_K\setminus\{0\})
\longrightarrow
\ell^2(\mathcal J_{K,L}\setminus\{L/2\})
$$
have operator norms bounded uniformly in $L$. Therefore
$$
\sum_{j\in\mathcal J_{K,L}\setminus\{L/2\}}
|(Tg_K^t)(j)|^2
\leq
C\sum_{k\in\mathcal K_K\setminus\{0\}}
|g_K^t(k)|^2
=
C\sum_{k\in\mathcal K_K\setminus\{0\}}
\left|
\frac{c_{K,k}\Gamma_{K,k}}{E'_{K,k}}
\right|^2,
$$
and likewise
$$
\sum_{j\in\mathcal J_{K,L}\setminus\{L/2\}}
|(T'h_K^s)(j)|^2
\leq
C\sum_{k\in\mathcal K_K\setminus\{0\}}
|h_K^s(k)|^2
=
C\sum_{k\in\mathcal K_K\setminus\{0\}}
|c_{K,k}\Gamma_{K,k}|^2\leq C \varepsilon^4.
$$
Inserting these estimates into \eqref{eq:Rneq0midbound}, we find
$$
R_{\neq 0}(t)
\leq
\frac{C}{L^2}
\sum_{k\in\mathcal K_K\setminus\{0\}}
\left|
\frac{c_{K,k}\Gamma_{K,k}}{E'_{K,k}}
\right|^2
+
\frac{Ct^2}{L^2}
\varepsilon^4.
$$
Finally, since
$$
E'_{K,k}
=
\frac{1}{L}\widetilde E'_{K,k},
$$
assumptions \eqref{c_assumption_1} imply
$$
\frac{1}{L^2}
\sum_{k\in\mathcal K_K\setminus\{0\}}
\left|
\frac{c_{K,k}\Gamma_{K,k}}{E'_{K,k}}
\right|^2
=
\sum_{k\in\mathcal K_K\setminus\{0\}}
\left|
\frac{c_{K,k}\Gamma_{K,k}}
{\widetilde E'_{K,k}}
\right|^2
\leq \varepsilon^2.
$$
We conclude that
\begin{equation} \label{eq:Rneq0finalbound}
R_{\neq 0}(t)
\leq
C\varepsilon^2
+
C\frac{\varepsilon^4t^2}{L^2}.
\end{equation}
Combining \eqref{norm_difsec:main-krooff_2}, \eqref{eq:R0boundfinal}, \eqref{eq:RL/2finalbound} and \eqref{eq:Rneq0finalbound} we obtain
\begin{equation}
\label{norm_diff_final}
\left\| \Psi_K^t - \Psi_K^{t,\mathrm{free}} \right\|_{\mathcal{H}_\Lambda}^2  \leq C\frac{\varepsilon^4t^2}{L^2}+C\frac{t^2}{L^5}+C\varepsilon^2.
\end{equation}
For fixed $\varepsilon$ and $L$ large enough the second term is smaller than the first one for any $t$. This finishes the proof of \eqref{theorem: main equation}.
\end{proof}

The rest of the section is devoted to a discussion related to Remarks \ref{remarks_theorem} (c), \ref{remarks_theorem} (d), and Section~\ref{sec:heuristics}. We start with the former one, in which we show an example of coefficients $c_{K,k}$ which satisfy the assumptions of the main theorem.

\begin{remark}\label{rem_end_of_proof}
Let \(K\in\mathbb Z\) be fixed, and fix 
$0<\varepsilon_0<1/3$. We verify that the conditions
\eqref{c_assumption_1} are satisfied, up to constants, by every
normalized family $(c_{K,k})_{k\in\mathcal K_K}$ such that
$c_{K,k}=0$ whenever  $k/L>\varepsilon$ with 
 $0<\varepsilon\leq\varepsilon_0$. We also assume that $L$
is sufficiently large, depending on $K$ and $\varepsilon$, so
that $|K|/L\leq\varepsilon$. Let
$$
    I_\varepsilon
    :=
    \left\{
        k\in\mathcal K_K:
        c_{K,k}\neq0
    \right\},
    \qquad
    M_\varepsilon:=|I_\varepsilon|.$$
We have
$$
    M_\varepsilon\leq \varepsilon L+1.
$$
Set
$$
    S_K:=\sum_{k\in\mathcal K_K}|c_{K,k}|^2.
$$
It follows from \eqref{eq:normalization} and the
Cauchy--Schwarz inequality that
\begin{align*}
    1
   \geq
    S_K-\frac{2M_\varepsilon}{L}S_K
    =
    \left(1-\frac{2M_\varepsilon}{L}\right)S_K,
\end{align*}
and thus, for $L$ sufficiently large,
we obtain
\begin{equation}
\label{eq:coefficient_l2_bound}
    \sum_{k\in\mathcal K_K}|c_{K,k}|^2\leq C,
\end{equation}
where $C$ is independent of $L$ and $\varepsilon$.

For $k\in I_\varepsilon$, we have
\begin{align*}
    |\Gamma_{K,k}|=
    \left|
        \cos\left(\frac{\pi K}{L}\right)
        -
        \cos\left(\frac{2\pi k}{L}\right)
    \right|
    =
    2\left|
        \sin\left(\frac{\pi(K+2k)}{2L}\right)
        \sin\left(\frac{\pi(K-2k)}{2L}\right)
    \right|
    \leq C\varepsilon^2,
\end{align*}
where we used $k/L\leq\varepsilon$ and
$|K|/L\leq\varepsilon$. Consequently,
\begin{equation}
\label{eq:first_packet_condition_verification}
    \sum_{k\in\mathcal K_K}
        |c_{K,k}\Gamma_{K,k}|^2
    \leq
    C\varepsilon^4
    \sum_{k\in\mathcal K_K}|c_{K,k}|^2
    \leq C\varepsilon^4.
\end{equation}

We next estimate the second quantity in
\eqref{c_assumption_1}. Recall that
$$
    E'_{K,k}
    =
    \frac{4\pi}{L}
    \cos\left(\frac{\pi K}{L}\right)
    \sin\left(\frac{2\pi k}{L}\right)
$$
and
$$
    \widetilde E'_{K,k}=L E'_{K,k}.
$$

Suppose first that $K=0$. Since $k/L\leq\varepsilon_0<1/3$,
we have
$\sin(2\pi k/L)\geq\sin(\pi k/L)$, and hence
\begin{align*}
    \left|
        \frac{\Gamma_{0,k}}{E'_{0,k}}
    \right|
    =
    \frac{
        2\sin^2\left(\frac{\pi k}{L}\right)
    }{
        \frac{4\pi}{L}
        \sin\left(\frac{2\pi k}{L}\right)
    }\leq
    CL\sin\left(\frac{\pi k}{L}\right)
    \leq C\varepsilon L.
\end{align*}

Now suppose that $K\neq0$. Put
$$
    a:=\frac{\pi K}{2L},
    \qquad
    x:=\frac{\pi k}{L}.
$$
Since $k\in\mathcal K_K\setminus\{0\}$, we have $k\geq1/2$.
Using $\sin(2x)\geq\sin x$, we obtain
\begin{align*}
    \frac{|\sin(a+x)|}{\sin(2x)}
    \leq
    \frac{|\sin(a+x)|}{\sin x}
    \leq
    |\sin a|\cot x+|\cos a|
    \leq
    \frac{\pi|K|}{2L}\frac{L}{\pi k}+1
    \leq C_K.
\end{align*}
Therefore,
\begin{align*}
    \left|
        \frac{\Gamma_{K,k}}{E'_{K,k}}
    \right|
    =
    \frac{
        2
        \left|
            \sin\left(\frac{\pi(K+2k)}{2L}\right)
            \sin\left(\frac{\pi(K-2k)}{2L}\right)
        \right|
    }{
        \frac{4\pi}{L}
        \cos\left(\frac{\pi K}{L}\right)
        \sin\left(\frac{2\pi k}{L}\right)
    }\leq C_K\varepsilon L,
\end{align*}
where we used
$$
    \left|
        \sin\left(\frac{\pi(K-2k)}{2L}\right)
    \right|
    \leq C\varepsilon
$$
and the fact that
$\cos(\pi K/L)\geq1/2$ for fixed $K$ and sufficiently large
$L$. Thus, for every fixed $K$,
$$
    \left|
        \frac{\Gamma_{K,k}}{\widetilde E'_{K,k}}
    \right|
    =
    \frac1L
    \left|
        \frac{\Gamma_{K,k}}{E'_{K,k}}
    \right|
    \leq C_K\varepsilon.
$$
Together with \eqref{eq:coefficient_l2_bound}, this gives
\begin{equation*}
\label{eq:second_packet_condition_verification}
    \sum_{k\in\mathcal K_K\setminus\{0\}}
    \left|
        \frac{c_{K,k}\Gamma_{K,k}}
             {\widetilde E'_{K,k}}
    \right|^2
    \leq
    C_K\varepsilon^2
    \sum_{k\in\mathcal K_K}|c_{K,k}|^2
    \leq C_K\varepsilon^2.
\end{equation*}
This yields \eqref{c_assumption_1}.
\end{remark}

Next we state a proposition that establishes a bound on the norm of the difference between the free and full evolution that is valid for short times, as discussed in Remark~\ref{remarks_theorem} (d).

\begin{proposition}\label{prop:shorttimes}
   Under the assumptions of Theorem \ref{thm:wave-kacket}, we have
\begin{equation}\label{norm_diff_short}
\|\Psi_K^t-\Psi_K^{t,\mathrm{free}}\|^2
\leq 2t^2\varepsilon^4.
\end{equation}
\end{proposition}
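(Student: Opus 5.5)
Proposition~\ref{prop:shorttimes} follows from a Duhamel-type argument in the one-body picture, without any spectral information about $h_{\Lambda,K}$. By \eqref{norm_diff_step_1} we have
$$\|\Psi_K^t-\Psi_K^{t,\mathrm{free}}\|^2 = 2L\,\|f_K^t-f_K^{t,\mathrm{free}}\|^2_{\mathcal H^{(1)}_{K,L}},$$
so it suffices to show $\|f_K^t-f_K^{t,\mathrm{free}}\|\le t\varepsilon^2/\sqrt{L}$.

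Write $D(t):=e^{ih_{\Lambda,K}t}f_K^{t,\mathrm{free}}-f_K^0$, so that $\|f_K^t-f_K^{t,\mathrm{free}}\|=\|D(t)\|$ by unitarity. Differentiating,
$$\partial_t\bigl(e^{ih_{\Lambda,K}t}f_K^{t,\mathrm{free}}\bigr)=e^{ih_{\Lambda,K}t}\,\frac{i}{L}\sum_{k}c_{K,k}e^{-iE_{K,k}t}(h_{\Lambda,K}-E_{K,k})w_k .$$
The computation preceding \eqref{exact_f_and_cos_overlap} gives
$$(h_{\Lambda,K}-E_{K,k})w_k=\Gamma_{K,k}\,u,\qquad u:=\delta_1+(-1)^K\delta_{L-1}.$$
Hence
$$D(t)=\frac{i}{L}\int_0^t e^{ih_{\Lambda,K}s}\Bigl(\sum_k c_{K,k}\Gamma_{K,k}e^{-iE_{K,k}s}\Bigr)u\,ds,$$
and therefore
$$\|D(t)\|\le \frac{\|u\|}{L}\int_0^t\Bigl|\sum_k c_{K,k}\Gamma_{K,k}e^{-iE_{K,k}s}\Bigr|\,ds,\qquad \|u\|=\sqrt2 .$$

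The remaining task is to bound the scalar sum $\bigl|\sum_k c_{K,k}\Gamma_{K,k}e^{-iE_{K,k}s}\bigr|$. Cauchy--Schwarz with \eqref{c_assumption_1} gives at most $\sqrt{|\mathcal K_K|}\,\varepsilon^2\le\sqrt{L/2}\,\varepsilon^2$, using $|\mathcal K_K|\le L/2$. Then
$$\|D(t)\|\le \frac{\sqrt2}{L}\,t\sqrt{L/2}\,\varepsilon^2=\frac{t\varepsilon^2}{\sqrt L},$$
and so
$$\|\Psi_K^t-\Psi_K^{t,\mathrm{free}}\|^2\le 2L\cdot\frac{t^2\varepsilon^4}{L}=2t^2\varepsilon^4,$$
which is exactly \eqref{norm_diff_short}.

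The only delicate points are, first, the constant bookkeeping: the factor $\|u\|^2=2$ combined with $|\mathcal K_K|\le L/2$ is what yields the constant $2$. Second, one must confirm that $u$ lies in $\mathcal H^{(1)}_{K,L}$, so that the Duhamel formula stays in the right space; this holds because $f(L-1)=(-1)^Kf(1)$ is the defining symmetry with $R=L/2-1$. No uniform operator bounds such as Theorem~\ref{thm:uniformT} are needed here, since only the crude Cauchy--Schwarz bound is used; this is why the estimate is useful only for short times.
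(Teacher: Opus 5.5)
Your proof is correct and is essentially the paper's argument. The paper runs the same Duhamel formula directly in $\mathcal H_\Lambda$, with inhomogeneity $B_K\sum_k c_{K,k}\Gamma_{K,k}e^{-iE_{K,k}s}$, where $B_K=\mathcal U_K(u/L)$ is your boundary vector lifted by $\mathcal U_K$ and has norm $2/\sqrt L=\sqrt{2L}\,\|u\|/L$. It then closes with the identical Cauchy--Schwarz step using $|\mathcal K_K|\le L/2$.
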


\begin{proof} Recall the correspondence between relative wave functions and two-spin states used in Section \ref{section:one-body-reduction} . If $f\in\mathcal H_{K,L}^{(1)}$, then we can extend $f$ to the integers as in \eqref{special_case_psi_K^0},  i.e.
$$
\mathcal U_K f
:=
\sum_{\substack{x,y\in\Lambda\\x\ne y}}
e^{\frac{\pi iK}{L}(x+y)}
f(x-y)\,S_x^+S_y^+\Omega.
$$
As before, we have
\begin{equation}\label{eq:norm_vector_Uk}
\|\mathcal U_K f\|^2
=2\sum_{\substack{x,y\in\Lambda\\x\ne y}}|f(x-y)|^2 =2L\sum_{X=1}^{L-1}|f(X)|^2.
\end{equation}

In the second equality, we used the fact that each separation \(X\) occurs for exactly \(L\) ordered pairs \((x,y)\). Equations \eqref{eq:Schrodinger_wave_fct} and \eqref{h_lambda_K_def} also give $$
H_\Lambda\mathcal U_K f=\mathcal U_K(h_{\Lambda,K}f).
$$
Set \(w_k(X)=\cos(2\pi kX/L)\). Then
$$
\Psi_{K,k}^0=\mathcal U_K(w_k/L).
$$
From \eqref{eq_onebody_Ham} we obtain that
 
$$(h_{\Lambda,K}w_k)(X)=E_{K,k}w_k(X) \qquad \text{for} \qquad 2\leq X\leq L-2.$$
At \(X=1\) we get
$$
((h_{\Lambda,K}-E_{K,k})w_k)(1)=\Gamma_{K,k}.
$$
The symmetry \(w_k(L-X)=(-1)^K w_k(X)\) gives the value
\((-1)^K\Gamma_{K,k}\) at \(X=L-1\). Hence

$$
(h_{\Lambda,K}-E_{K,k})w_k
=\Gamma_{K,k}
\bigl(\delta_{X,1}+(-1)^K\delta_{X,L-1}\bigr).
$$

Define a relative wave function and its corresponding spin state by

$$
b_K(X):=\frac{\delta_{X,1}+(-1)^K\delta_{X,L-1}}{L},
\qquad B_K:=\mathcal U_K b_K.
$$

The function $b_K$ satisfies the required symmetry conditions. We arrive at
\begin{equation}\label{eq:hamminusE}
(H_\Lambda-E_{K,k})\Psi_{K,k}^0
=\mathcal U_K\!\left(
\frac{(h_{\Lambda,K}-E_{K,k})w_k}{L}
\right)=\Gamma_{K,k}\mathcal U_K b_K
=\Gamma_{K,k}B_K.
\end{equation}
Notice that $B_K$ is independent of $k$.  We also need its norm. The only nonzero values of \(b_K\) on \(\{1,\ldots,L-1\}\) are \(b_K(1)=1/L\) and \(b_K(L-1)=(-1)^K/L\). Thus, by \eqref{eq:norm_vector_Uk}, we have
$$
\|B_K\|^2
=2L\left(|b_K(1)|^2+|b_K(L-1)|^2\right)
=2L\left(\frac1{L^2}+\frac1{L^2}\right)
=\frac4L.
$$
Now, let us write $u(t)=\Psi_K^t$ and $v(t)=\Psi_K^{t,\mathrm{free}}$. From \eqref{psi_K^t-intro},
$$
v(t)=\sum_{k\in\mathcal K_K}
c_{K,k}e^{-iE_{K,k}t}\Psi_{K,k}^0.
$$
Thus
$$
i\partial_t v(t)
=\sum_{k\in\mathcal K_K}
c_{K,k}E_{K,k}e^{-iE_{K,k}t}\Psi_{K,k}^0.
$$
Subtracting this expression from $H_\Lambda v(t)$ and using \eqref{eq:hamminusE}, we find
$$
(H_\Lambda-i\partial_t)v(t)
=\sum_{k\in\mathcal K_K}
c_{K,k}e^{-iE_{K,k}t}
(H_\Lambda-E_{K,k})\Psi_{K,k}^0=B_K\sum_{k\in\mathcal K_K}
c_{K,k}\Gamma_{K,k}e^{-iE_{K,k}t}.
$$
Now, set $D(t)=u(t)-v(t)$ and
$R(t)=(H_\Lambda-i\partial_t)v(t)$. Since $i\partial_tu=H_\Lambda u$ and $u(0)=v(0)$, we have
$$
i\partial_tD=H_\Lambda D+R,\qquad D(0)=0.
$$
By the Duhamel formula we obtain
$$
D(t)=-i\int_0^t e^{-iH_\Lambda(t-s)}R(s)\,ds
$$
which yields
$$
D(t)
=-i\int_0^t e^{-iH_\Lambda(t-s)}B_K
\left(\sum_{k\in\mathcal K_K}
c_{K,k}\Gamma_{K,k}e^{-iE_{K,k}s}\right)ds.
$$

Unitarity of \(e^{-iH_\Lambda(t-s)}\), the triangle inequality for the integral, and \(\|B_K\|=2/\sqrt L\) prove that
\begin{equation}\label{shorttime1}
\|\Psi_K^t-\Psi_K^{t,\mathrm{free}}\|
\leq
\frac{2}{\sqrt L}\int_0^t
\left|\sum_{k\in\mathcal K_K}
c_{K,k}\Gamma_{K,k}e^{-iE_{K,k}s}\right|\,ds.
\end{equation}
The desired estimate then follows from Cauchy–Schwarz and the first assumption in \eqref{c_assumption_1}. Indeed, we have
$$
\left|\sum_{k\in\mathcal K_K}
c_{K,k}\Gamma_{K,k}e^{-iE_{K,k}s}\right|^2
\leq |\mathcal K_K|
\sum_{k\in\mathcal K_K}|c_{K,k}\Gamma_{K,k}|^2\leq \frac L2\,\varepsilon^4,
$$
where $|\mathcal K_K|\leq L/2$ follows from \eqref{def:K_K}. Inserting this into \eqref{shorttime1} yields
$\|D(t)\|\leq\sqrt2\,t\varepsilon^2$, which ends the proof.
\end{proof}

Lastly, we prove the claims from the discussion in Section~\ref{sec:heuristics}.

\begin{remark}\label{remark_asymptotic_bound}
Using the notation and results from Section~\ref{section:spectral problem} and the proof of Theorem~\ref{thm:wave-kacket}, we can derive an exact expression for the quantity $\delta_k$ from Section~\ref{sec:heuristics}. Recall that there we consider the two-spin wave state $\Psi^{(\textbf{k})} = N  S^+(\textbf{k})S^+(-\textbf{k})\vac$ with $K=0$ and associated energy $E_{0,k} = 2\big(1-\cos(2\pi k /L)\big)$. We then decompose
\begin{equation}
\Psi^{(\textbf{k})} = \sum_\lambda a_\lambda \widetilde{\Psi}_\lambda = a_{\lambda_k} \widetilde{\Psi}_{\lambda_k} + \sum_{\lambda \neq \lambda_k} a_\lambda \widetilde{\Psi}_\lambda, 
\end{equation}
where $\widetilde{\Psi}_\lambda$ is a normalized eigenstate of $H_\Lambda$ (restricted to $K=0$) with eigenvalue $\lambda$, $\lambda_k$ is the eigenvalue closest to $E_{0,k}$ (the $\widetilde{\Psi}_{\lambda_k}$ is the state $\widetilde{\Psi}_{\textbf{k}}$ from \eqref{def:Psi}), and $a_{\lambda_n} = \scp{\widetilde{\Psi}_{\lambda_n}}{\Psi^{(\textbf{k})}}$. With that notation, we aim at computing the quantity
\begin{equation}
\delta_k^2 = \sum_{\lambda \neq \lambda_k} |a_{\lambda}|^2 = 1 - |a_{\lambda_k}|^2.
\end{equation}
From the direct computation \eqref{exact_f_and_cos_overlap}, we find
\begin{equation}
a_{\lambda_n} = \sqrt{2L} \scp{f_{0,n}}{f_{\mathrm{free},k}} = \sqrt{2L} \frac{2f_{0,n}(1)\Gamma_{0,k}}{\lambda_n - E_{0,k}} \frac{c_{0,k}}{L},
\end{equation}
where we have used the normalization $c_{0,k} = (1-\frac{2}{L})^{-1/2}$, and $f_{\mathrm{free},k}(x) = \frac{c_{0,k}}{L} \cos(\frac{2\pi k}{L}x) = \frac{c_{0,k}}{L} w_k(x)$. For $K=0$, the discussion of Section~\ref{section:spectral problem} simplifies. From the proof of Theorem~\ref{thm:spectral_decomposition}, we find the exact eigenvalues $\lambda_n= 2(1-\cos(n\pi/(L-1)))$ and eigenfunctions $f_n(x) = \cos(\frac{n \pi}{L-1}(x-\frac{1}{2}))$ for $n=0,\ldots,L-2$ (there is no $\sinh$ solution). Since $K=0$ we only consider $n$ even as in Corollary~\ref{cor:lambdakj}, and the relative wave function corresponding to the normalized physical eigenstate is $f_{0,k+1}(x) = (2/(L-1))^{1/2}\cos(\frac{2k \pi}{L-1}(x-\frac{1}{2}))$. Hence, we have
\begin{equation}\label{exact_energy_diff_one_dim_K_zero}
E_{0,k} - \lambda_{0,k+1} = 2 \left( \cos(2\pi k/(L-1)) - \cos(2\pi k/L) \right) = -4 \sin(p_{k,L} + \varepsilon_{k,L})\sin(\varepsilon_{k,L}),
\end{equation} 
with $p_{k,L} = \frac{2\pi k}{L}$ and $\varepsilon_{k,L} = \frac{p_{k,L}}{2(L-1)}$. (Note in particular that this implies the asymptotic behavior claimed in \eqref{Delta_E} for this special setting.) Furthermore,
\begin{align}\label{exact_delta_1_d_details}
\delta_k^2 &= 1 - a_{\lambda_{0,k+1}}^2 \nonumber\\
&= 1 - \left( \frac{2\sqrt{2}}{\sqrt{L}} c_{0,k} \frac{f_{0,n}(1)\Gamma_{0,k}}{\lambda_n - E_{0,k}} \right)^2 \nonumber\\
&= 1 - \frac{1}{(L-1)(L-2)} \frac{\cos^2\big(\frac{k\pi}{L-1}\big) \Big(1-\cos\big(\frac{2\pi k}{L}\big)\Big)^2}{\sin^2(p_{k,L} + \varepsilon_{k,L}) \sin^2(\varepsilon_{k,L})} \nonumber\\
&= 1 - \frac{\sin^2(p_{k,L}/2)}{(L-1)(L-2)} \left( \frac{1}{\sin(\varepsilon_{k,L})} - \frac{1}{\sin(p_{k,L} + \varepsilon_{k,L})} \right)^2,
\end{align}
where the last step follows from standard trigonometric identities. Evaluating the asymptotic behavior by Taylor expanding the sine functions, we arrive at $\delta_k^2 \approx \frac{p^2_{k,L}}{12}$ (in the $k \sim \varepsilon L$ regime), as claimed in \eqref{delta_approx_heuristic_argument}.
\end{remark}

\appendix

\section{Uniform boundedness of the singular operators}
\label{appendix:boundedness}

We now show that the operators $T$ and $T'$ from \eqref{T:action} are uniformly bounded. More precisely, we will prove the following

 \begin{theorem}[Uniform Boundedness of $T$ and $T'$]
\label{thm:uniformT}
Let $L \in \mathbb{N}$ ($L \ge 2$) and $K \in \mathbb{Z}$.  Let $\mathcal{J}_{K,L}$ and $\mathcal{K}_K$ be defined as in Corollary \ref{cor:lambdakj} and \eqref{def:K_K}, respectively. The operators $T, T': \ell^2(\mathcal{K}_K \setminus \{0\}) \to \ell^2(\mathcal{J}_{K,L} \setminus \{L/2\})$ defined by
\begin{align}
    (Tg)(j) &= \sum_{k \in \mathcal{K}_K \setminus \{0, j\}} \frac{\frac{2\pi}{L} \sin\left(\frac{2\pi k}{L}\right)}{\cos\left(\frac{2\pi k}{L}\right) - \cos\left(\frac{2\pi j}{L}\right)} g(k), \label{eq:def_T_gen} \\[6pt]
    (T'g)(j) &= \sum_{k \in \mathcal{K}_K \setminus \{0, j\}} \frac{\frac{2\pi}{L} \sin\left(\frac{2\pi j}{L}\right)}{\cos\left(\frac{2\pi k}{L}\right) - \cos\left(\frac{2\pi j}{L}\right)} g(k), \label{eq:def_T_prime_gen}
\end{align}
are bounded uniformly with respect to both $L$ and $K$. Specifically, for all $L \ge 2$ and all $K \in \mathbb{Z}$,
\begin{equation}
    \|T\|_{\ell^2 \to \ell^2} \le 2\sqrt{2}\pi \qquad \text{and} \qquad \|T'\|_{\ell^2 \to \ell^2} \le 2\sqrt{2}\pi.
\end{equation}
\end{theorem}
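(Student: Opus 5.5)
Set $\theta_k := 2\pi k/L$ and $\phi_j := 2\pi j/L$. In the relevant ranges these angles lie in $(0,\pi)$: we have $k\ge 1/2$ and $k\le (L-2)/2$, and $j\in\{1,\dots,\lfloor L/2\rfloor\}$ with $j\ne L/2$. Two trigonometric identities split each kernel into a Hilbert-type singular part and a Hilbert-type part with the reflected angle:
\[
\frac{\sin\theta}{\cos\theta-\cos\phi}=\frac12\Bigl(\cot\tfrac{\phi-\theta}{2}-\cot\tfrac{\phi+\theta}{2}\Bigr),\qquad
\frac{\sin\phi}{\cos\theta-\cos\phi}=\frac12\Bigl(\cot\tfrac{\phi-\theta}{2}+\cot\tfrac{\phi+\theta}{2}\Bigr).
\]
Substituting these into \eqref{eq:def_T_gen} and \eqref{eq:def_T_prime_gen} gives $T=\frac12(A-B)$ and $T'=\frac12(A+B)$, where
\[
(Ag)(j)=\sum_{k\ne j}\tfrac{2\pi}{L}\cot\bigl(\tfrac{\pi(j-k)}{L}\bigr)g(k),\qquad (Bg)(j)=\sum_{k}\tfrac{2\pi}{L}\cot\bigl(\tfrac{\pi(j+k)}{L}\bigr)g(k).
\]
It therefore suffices to bound $\|A\|$ and $\|B\|$ by $2\sqrt2\pi$; the triangle inequality then gives the claimed bound for $T$ and $T'$. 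Note that $j+k$ can be congruent to $0$ mod $L$ only when $j=L/2$ and $k=L/2$, and both of these are excluded, so $B$ has no singular terms.

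\emph{The operator $A$.} I will view $A$ as a compression of the discrete periodic Hilbert transform. Extend $g$ by zero to all residues $m$ mod $L$; when $K$ is odd, shift the grid by $1/2$, which can be absorbed into the indices. The kernel $\frac{2\pi}{L}\cot(\pi m/L)$ for $m\ne0$ mod $L$, and $0$ for $m=0$, is a convolution on $\mathbb Z_L$, so its operator norm on $\ell^2(\mathbb Z_L)$ equals the maximum of its Fourier multiplier. I would compute this multiplier explicitly from the known formula
\[
\sum_{m=1}^{L-1}\cot\bigl(\tfrac{\pi m}{L}\bigr)e^{2\pi i m n/L}= i\,(L-2n),\qquad 0<n<L,
\]
which follows from summing a geometric series or, equivalently, from the partial fractions of $\cot$. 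Multiplying by $2\pi/L$ gives a multiplier of modulus at most $2\pi$. For the half-integer-shifted case with $j-k\in\frac12+\mathbb Z$, the analogous sum of $\cot(\pi(m+\frac12)/L)$ over a full period again yields a multiplier of size $O(L)$, and I expect the same bound $2\pi$ by the same computation. Restricting to index subsets can only decrease the norm, so $\|A\|\le2\pi$.

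\emph{The operator $B$.} Relabel $k\mapsto -k$ mod $L$. This is an isometry of $\ell^2$ on the extended grid, and it turns $B$ into a compression of the same convolution operator (on the grid shifted by $0$ or by $1/2$ according to the parity of $K$). Hence $\|B\|\le 2\pi$ as well, giving $\|T\|,\|T'\|\le 2\pi\le 2\sqrt2\pi$. If the multiplier bound in the shifted case turns out to be weaker, I would fall back on writing $(A\pm B)$ as a single convolution on the doubled domain via even or odd extension, which is presumably where the $\sqrt2$ in the stated constant comes from.

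The main obstacle is the bookkeeping: handling the four parity cases of $L$ and $K$, checking that the excluded indices $k=0$ and $j=L/2$ are exactly the singular points, and computing the shifted-cotangent multiplier uniformly in $L$. The analytic content is just the explicit Fourier transform of the discrete cotangent kernel.
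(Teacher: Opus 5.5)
Your route differs from the paper's and is sound in outline, but two steps need repair before it proves the statement.

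\emph{Comparison.} The paper never splits the kernel into its two cotangent pieces. It extends the input to $\mathbb{Z}_L$, oddly for $T$ and evenly with doubled endpoints for $T'$, so that $\frac{\pi}{L}\bigl[\cot\frac{\pi(j-k)}{L}\mp\cot\frac{\pi(j+k)}{L}\bigr]$ becomes a single application of $H_L$. The factors $\sqrt2$ and $2$ are the norms of these extensions. For odd $K$ it puts the half-integer grid on the odd sites of $\mathbb{Z}_{2L}$ and reads off the even sites, which is where $2\sqrt2\pi$ comes from; your ``fallback'' is exactly this Step~6. You instead bound the two pieces separately as compressions of one circulant, using the reflection $k\mapsto -k$ for $B$, and finish with the triangle inequality. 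This avoids extensions and handles the excluded indices $k=0$, $j=L/2$ automatically by restriction. Once completed, it gives $\|T\|,\|T'\|\le 2\pi+\frac12<2\sqrt2\pi$. Each route does better in some case: the odd extension gives $\sqrt2\pi+\frac12$ for $T$ when $K$ is even, while your direct computation gives $2\pi$ for odd $K$.

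\emph{The diagonal term.} The sums defining $T,T'$ exclude $k=j$, but your $B$ sums over all $k$. It has to, since only then is it an exact compression. So for even $K$ the identities $T=\frac12(A-B)$ and $T'=\frac12(A+B)$ are false. The correct ones are $T=\frac12(A-B)+D$ and $T'=\frac12(A+B)-D$, where $(Dg)(j)=\frac{\pi}{L}\cot\bigl(\frac{2\pi j}{L}\bigr)g(j)$ for $j\in\mathcal{K}_K\setminus\{0\}$ and $0$ otherwise. This is the same correction the paper carries in Steps~3--4. For such $j$ we have $\frac{2\pi}{L}\le\frac{2\pi j}{L}\le\pi-\frac{2\pi}{L}$, so $\|D\|\le\frac{\pi}{L}\cot\frac{2\pi}{L}\le\frac12$. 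Your claimed bound $2\pi$ therefore does not follow as written; you get $2\pi+\frac12$, which still suffices. For odd $K$ there is no diagonal, since $j-k\notin\mathbb{Z}$.

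\emph{The shifted multiplier.} For odd $K$ you only say you ``expect'' the bound. This step matters: treating $A$ as the even-row/odd-column block of $H_{2L}$ only gives $\|A\|\le4\pi$, hence $\|T\|\le4\pi>2\sqrt2\pi$. The bound is true and follows from the same geometric-series argument. Let $w_n=e^{i\pi(2n-1)/L}$ be the $L$-th roots of $-1$. Then $\cot\frac{\pi(n-1/2)}{L}=i\frac{w_n+1}{w_n-1}$, and $\frac{2}{w-1}=-\sum_{r=0}^{L-1}w^r$ whenever $w^L=-1$. This yields
\[ \sum_{n=0}^{L-1}\cot\Bigl(\frac{\pi(n-\frac12)}{L}\Bigr)e^{-2\pi i n\xi/L}=-iL\,e^{-i\pi\xi/L}\qquad(0<\xi<L), \]
and the sum is $0$ for $\xi=0$. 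So the multiplier of $\frac{2\pi}{L}\cot\frac{\pi(n-1/2)}{L}$ on $\mathbb{Z}_L$ has modulus at most $2\pi$. Writing $k=k'+\frac12$, $A$ is a compression of this circulant in $(j,k')$. $B$ is one as well after the injective relabeling $k'\mapsto -k'-1 \bmod L$. With these two repairs your proof is complete.
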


The proof of Theorem \ref{thm:uniformT} will use the boundedness of the Periodic Discrete Hilbert Transform $H_L$. It is defined on 
 $\ell^2(\Z_L)$, i.e. the space of $L$-periodic sequences, by   
\begin{equation}\label{eq:def_HL}
(H_L f)(j) = \frac{\pi}{L} \sum_{p=0, p \neq j}^{L-1} f(p) \cot\left(\frac{\pi(j-p)}{L}\right).
\end{equation}
Since $H_L$ is a convolution operator on the cyclic group $\Z_L$, it is fully translation-invariant. Its uniform boundedness follows from Plancherel's theorem.

\begin{lemma}\label{lem:plancherel}
The periodic discrete Hilbert transform $H_L: \ell^2(\Z_L) \to \ell^2(\Z_L)$ satisfies
$$ \|H_L f\|_{\ell^2(\Z_L)} \le \pi \|f\|_{\ell^2(\Z_L)}, $$
 i.e. $H_L$ is  bounded uniformly in $L$.
\end{lemma}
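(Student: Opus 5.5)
The operator $H_L$ is a convolution on $\Z_L$: $(H_L f)(j)=\sum_{p}\kappa(j-p)f(p)$ with kernel $\kappa(m)=\frac{\pi}{L}\cot\left(\frac{\pi m}{L}\right)$ for $m\not\equiv 0$ and $\kappa(0)=0$. The kernel is well defined modulo $L$ because $\cot$ is $\pi$-periodic. The plan is to diagonalize $H_L$ by the discrete Fourier transform $\hat f(n)=\sum_{p\in\Z_L} f(p)e^{-2\pi i np/L}$. Then $\widehat{H_L f}(n)=\hat\kappa(n)\hat f(n)$, and Plancherel gives $\|H_L\|_{\ell^2\to\ell^2}=\max_{n}|\hat\kappa(n)|$. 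So everything reduces to the uniform bound $\max_n|\hat\kappa(n)|\le\pi$.

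To compute $\hat\kappa(n)=\frac{\pi}{L}\sum_{m=1}^{L-1}\cot\left(\frac{\pi m}{L}\right)e^{-2\pi i nm/L}$, I would use the classical identity
\[
\sum_{m=1}^{L-1}\cot\left(\tfrac{\pi m}{L}\right)\sin\left(\tfrac{2\pi nm}{L}\right)=L-2n,\qquad 1\le n\le L-1 ,
\]
which equals $0$ when $n=0$. One way to derive it is to write $\cot\left(\frac{\pi m}{L}\right)=i\,\frac{\omega^m+1}{\omega^m-1}$ with $\omega=e^{2\pi i/L}$. Another is to check that the differences in $n$ telescope: $\cot(x)\bigl(\sin((2n+2)x)-\sin(2nx)\bigr)=2\cos^2 x\cos((2n+1)x)\cdot\frac{1}{\cos x}$-type manipulations reduce to sums of roots of unity. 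The cosine part $\sum_m\cot\left(\frac{\pi m}{L}\right)\cos\left(\frac{2\pi nm}{L}\right)$ vanishes because $\cot$ is odd under $m\mapsto L-m$ while $\cos\left(\frac{2\pi nm}{L}\right)$ is even. Hence
\[
\hat\kappa(n)=-\frac{i\pi}{L}(L-2n)\quad (1\le n\le L-1),\qquad \hat\kappa(0)=0 ,
\]
and therefore $|\hat\kappa(n)|=\pi\left|1-\frac{2n}{L}\right|\le\pi$.

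The only genuine step is establishing the trigonometric sum identity. I would verify it by induction on $n$: the increment from $n$ to $n+1$ is
\[
\sum_{m=1}^{L-1}\cot\left(\tfrac{\pi m}{L}\right)\cdot 2\cos\left(\tfrac{(2n+1)\pi m}{L}\right)\sin\left(\tfrac{\pi m}{L}\right)=\sum_{m=1}^{L-1}2\cos\left(\tfrac{\pi m}{L}\right)\cos\left(\tfrac{(2n+1)\pi m}{L}\right).
\]
This is a sum of cosines of $\frac{2\pi(n+1)m}{L}$ and $\frac{2\pi nm}{L}$ over $m=1,\dots,L-1$, which equals $-2$ for $1\le n\le L-2$. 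The base case $n=0$ gives $0$; the first step from $n=0$ to $n=1$ gives $\sum_m 2\cos^2\left(\frac{\pi m}{L}\right)=L-2$. Together these produce $L-2n$.

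Plancherel on $\Z_L$ then yields $\|H_Lf\|\le\pi\|f\|$ for all $L$.
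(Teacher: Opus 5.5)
Your proposal is correct and follows the same route as the paper: treat $H_L$ as a convolution on $\Z_L$, compute its Fourier multiplier $-i\pi\bigl(1-\tfrac{2n}{L}\bigr)$ for $1\le n\le L-1$ (and $0$ at $n=0$), and conclude $\|H_L\|\le\pi$ by Plancherel. The parity argument that kills the cosine part and the telescoping induction giving $\sum_{m=1}^{L-1}\cot\bigl(\tfrac{\pi m}{L}\bigr)\sin\bigl(\tfrac{2\pi nm}{L}\bigr)=L-2n$ both check out; they make explicit the evaluation that the paper only asserts ``using standard trigonometric exponential series''.
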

\begin{proof}
Let $c(j) = \frac{\pi}{L}\cot\left(\frac{\pi j}{L}\right)$ for $j \neq 0$ and $c(0) = 0$ be the kernel. The Fourier multipliers $m(k) = \widehat{c}(k)$ for $k \in \{0, \dots, L-1\}$ are given by the  discrete Fourier transform
$$ m(k) = \frac{\pi}{L} \sum_{j=1}^{L-1} \cot\left(\frac{\pi j}{L}\right) e^{-i \frac{2\pi j k}{L}}. $$
Evaluating this sum using standard trigonometric exponential series yields
\begin{equation}
m(k) = \begin{cases} 
0 & \text{if } k = 0, \\
-i\pi \left(1 - \frac{2k}{L}\right) & \text{if } 0 < k < L.
\end{cases}
\end{equation}
For all $0 < k < L$, the term $|1 - \frac{2k}{L}| < 1$. Thus, $\max_k |m(k)| < \pi$. By Plancherel's identity we get $\|H_L\|_{\ell^2 \to \ell^2} \le \pi$.
\end{proof}

We are ready to prove Theorem \ref{thm:uniformT}.

\begin{proof}[Proof of Theorem \ref{thm:uniformT}] \textit{Step 1.} We shall first prove the theorem when both $L$ and $|K|$ are even.
To this end we consider the operators $\tilde{T}$ and $\tilde{T}'$   acting on the finite-dimensional Hilbert space $\ell^2(0, \dots, L/2)$, defined for a vector $g, h \in \ell^2(0, \dots, L/2)$ at index $j \in \{0, \dots, L/2\}$ by:
\begin{equation}\label{eq:def_T}
(\tilde{T}g)(j) = \sum_{p=0, p\neq j}^{L/2} \frac{2\pi}{L} \frac{\sin\left(\frac{2\pi p}{L}\right)}{\cos\left(\frac{2\pi p}{L}\right) - \cos\left(\frac{2\pi j}{L}\right)} g(p),
\end{equation}
\begin{equation}\label{eq:def_T_tilde}
(\tilde{T}'h)(j) = \sum_{p=0, p\neq j}^{L/2} \frac{2\pi}{L} \frac{\sin\left(\frac{2\pi j}{L}\right)}{\cos\left(\frac{2\pi p}{L}\right) - \cos\left(\frac{2\pi j}{L}\right)} h(p).
\end{equation}
Obviously, proving uniform bounds for $\|\tilde{T}\|$ and  $\|\tilde{T}'\|$ implies the same for $\|T\|$ and $\|T'\|$ (as the former operators are extensions of the latter ones). \\
\noindent \textit{Step 2.} We use the trigonometric  identity:
\begin{equation}\label{eq:trig_identity}
\cot A + \cot B = \frac{\sin(A+B)}{\sin A \sin B}.
\end{equation}
Let $\theta_p = \frac{2\pi p}{L}$ and $\theta_j = \frac{2\pi j}{L}$. 
We get the following decomposition for $\tilde{T}$: setting  $A = \frac{\pi(j-p)}{L}$ and $B = \frac{\pi(j+p)}{L}$ in \eqref{eq:trig_identity}, we observe that $A+B = \theta_j$. The denominator becomes:
\[ \sin A \sin B = \sin\left(\frac{\theta_j - \theta_p}{2}\right) \sin\left(\frac{\theta_j + \theta_p}{2}\right) = \frac{1}{2}(\cos \theta_p - \cos \theta_j). \]
Thus, we obtain the identity:
\begin{equation}\label{eq:trig_tilde}
\frac{2\sin \theta_j}{\cos \theta_p - \cos \theta_j} = \cot\left(\frac{\pi(j-p)}{L}\right) + \cot\left(\frac{\pi(j+p)}{L}\right).
\end{equation}

For $T$ we set $A = \frac{\pi(p-j)}{L}$ and $B = \frac{\pi(p+j)}{L}$ in \eqref{eq:trig_identity}, we have $A+B = \theta_p$. The denominator yields $\frac{1}{2}(\cos \theta_j - \cos \theta_p) = -\frac{1}{2}(\cos \theta_p - \cos \theta_j)$. This yields:
\begin{equation}\label{eq:trig_T}
\frac{2\sin \theta_p}{\cos \theta_p - \cos \theta_j} = \cot\left(\frac{\pi(j-p)}{L}\right) - \cot\left(\frac{\pi(j+p)}{L}\right).
\end{equation}
We now bound $\tilde{T}$ and $\tilde{T}'$ by embedding their domains into $\Z_L$ via invariant subspaces.\\
\noindent \textit{Step 3.} We start the analysis by considering the operator $\tilde{T}$. Let $g \in \ell^2(0, \dots, L/2)$. We observe from \eqref{eq:def_T} that when $p=0$ or $p=L/2$, $\sin(\theta_p) = 0$, meaning the boundary elements lie in the null space of $\tilde{T}$. Thus, we assume without loss of generality that $g(0) = g(L/2) = 0$.

We define the odd periodic extension $f \in \ell^2(\Z_L)$ of $g$ as:
\begin{equation}\label{eq:oddperext}
f(p) = \begin{cases} 
g(p) & \text{if } p \in \{1, \dots, L/2 - 1\}, \\
-g(L-p) & \text{if } p \in \{L/2 + 1, \dots, L - 1\}, \\
0 & \text{if } p = 0 \text{ or } p = L/2.
\end{cases}
\end{equation}
Evaluating the translation-invariant operator $H_L$ on this extended function $f$ for an index $j \in \{0, \dots, L/2\}$ yields:
\begin{align*}
(H_L f)(j) &= \frac{\pi}{L} \sum_{p=1, p\neq j}^{L/2-1} g(p) \cot\left(\frac{\pi(j-p)}{L}\right) + \frac{\pi}{L} \sum_{p=L/2+1}^{L-1} (-g(L-p)) \cot\left(\frac{\pi(j-p)}{L}\right) \\
&= \frac{\pi}{L} \sum_{p=1, p\neq j}^{L/2-1} g(p) \cot\left(\frac{\pi(j-p)}{L}\right) - \frac{\pi}{L} \sum_{p=1}^{L/2-1}g(p)\cot\left(\frac{\pi(j+p)}{L}\right) .
\end{align*}
Applying \eqref{eq:trig_T}, for \(p\neq j\), we obtain
$$
(\tilde Tg)(j)=(H_Lf)(j)+\frac{\pi}{L}\cot\left(\frac{2\pi j}{L}\right)g(j)
$$
if $j \in \{1, \dots, L/2 - 1\}$, and $(\tilde Tg)(j)=(H_Lf)(j)$ for $j = 0$ and $j = L/2$. Computing the energy norm of the extension we obtain
\begin{equation}\label{eq:norm_odd}
\|f\|_{\ell^2(\Z_L)}^2 = \sum_{p=1}^{L/2-1} |g(p)|^2 + \sum_{p=1}^{L/2-1} |{-g(p)}|^2 = 2 \sum_{p=1}^{L/2-1} |g(p)|^2 = 2 \|g\|_{\ell^2(0, \dots, L/2)}^2.
\end{equation}
It follows that $\|f\|_{\ell^2(\Z_L)} = \sqrt{2} \|g\|_{\ell^2}$. Using Lemma \ref{lem:plancherel} and the bound 
$$\biggl\lvert\frac{\pi}{L}\cot\left(\frac{2\pi j}{L}\right)\biggr\rvert\leq \frac12$$
for $1\leq j\leq \frac{L}{2} -1$, we find
\begin{equation}
\|\tilde{T}g\|_{\ell^2} \le \|H_L f\|_{\ell^2(\Z_L)}+\frac12 \|g\|_{\ell^2} \le \pi \|f\|_{\ell^2(\Z_L)} +\frac12 \|g\|_{\ell^2}= (\sqrt{2} \pi+\frac12) \|g\|_{\ell^2}.
\end{equation}
\noindent \textit{Step 4.} To deal with the operator $\tilde{T}'$ we consider an even extension. For $h \in \ell^2(0, \dots, L/2)$, the boundary points do not automatically vanish. We construct the even periodic extension with boundary doubling $f \in \ell^2(\Z_L)$ via
\begin{equation}\label{eq:evenperext}
f(p) = \begin{cases} 
h(p) & \text{if } p \in \{1, \dots, L/2 - 1\}, \\
h(L-p) & \text{if } p \in \{L/2 + 1, \dots, L - 1\}, \\
2h(0) & \text{if } p = 0, \\
2h(L/2) & \text{if } p = L/2.
\end{cases}
\end{equation}
Evaluating $(H_L f)(j)$ for $j \in \{1, \dots, L/2-1\}$ splits the domain into its symmetric parts:
\begin{align*}
(H_L f)(j) &= \frac{\pi}{L} \sum_{p=1, p\neq j}^{L/2-1} h(p)  \cot\left(\frac{\pi(j-p)}{L}\right) + \frac{\pi}{L} \sum_{p=1}^{L/2-1} h(p)\cot\left(\frac{\pi(j+p)}{L}\right) \\
&\quad + \frac{2\pi}{L} h(0)\cot\left(\frac{\pi j}{L}\right) + \frac{2\pi}{L} h(L/2)\cot\left(\frac{\pi(j-L/2)}{L}\right).
\end{align*}
Using \eqref{eq:trig_tilde}, we observe 
$$
(\tilde T'h)(j)=(H_Lf)(j)-\frac{\pi}{L}\cot\left(\frac{2\pi j }{L}\right)h(j)$$
 for all interior points, i.e. $j \in \{1,..., L/2 -1\}$ . At the boundaries $j \in \{0, L/2\}$, $(\tilde{T}'h)(j) = 0$ due to the $\sin(\theta_j)$ term in the numerator, providing a trivial bound. 
Computing the norm bound for the even extension we obtain
\begin{align*}
\|f\|_{\ell^2(\Z_L)}^2 &= |2h(0)|^2 + |2h(L/2)|^2 + 2\sum_{p=1}^{L/2-1} |h(p)|^2 \\
&\le 4\left(|h(0)|^2 + |h(L/2)|^2 + \sum_{p=1}^{L/2-1} |h(p)|^2\right) = 4 \|h\|_{\ell^2(0, \dots, L/2)}^2.
\end{align*}
Thus, $\|f\|_{\ell^2(\Z_L)} \le 2 \|h\|_{\ell^2}$. Using again Lemma \ref{lem:plancherel} we get
\begin{equation}
\|\tilde{T}'h\|_{\ell^2} \le \|H_L f\|_{\ell^2(\Z_L)}+\frac12\|h\|_{\ell^2} \le \pi  \|f\|_{\ell^2(\Z_L)}+\frac12\|h\|_{\ell^2} \le (2 \pi+\frac12) \|h\|_{\ell^2}.
\end{equation}
Thus, $\|\tilde{T}'\|_{\ell^2 \to \ell^2} \le 2 \pi + \frac12$ which ends the proof in the case $|K|$ and $L$ even.\\
\noindent \textit{Step 5.} In the case of even $|K|$ and odd $L$ there is no midpoint $L/2$. The proof goes through in the same way, except that in this case $\lvert\frac{\pi}{L}\cot(\frac{2\pi j}{L})\rvert\leq 1$, and we get the constant $2 \pi + 1$ instead of $2 \pi + \frac12$. \\
\noindent \textit{Step 6.} We shall now briefly sketch how to deal with the case of even $L$ and odd $|K|$ (the case for odd $L$ is a straightforward adaptation). We embed $g \in \ell^2(\mathcal{K}_K)$ into $\ell^2(\mathbb{Z}_{2L})$ by defining $f(m)$  on odd indices $m \in \{1, 3, \dots, 2L-1\}$. For $\tilde{T}$ we use an odd extension
$$f(m) = \begin{cases}   g(m/2) & \text{if } m \in \{1, 3, \dots, L-1\}, \\   -g((2L-m)/2) & \text{if } m \in \{L+1, L+3, \dots, 2L-1\}, \\   0 & \text{if } m \text{ is even.}   \end{cases}$$
For $\tilde{T}'$ we use an even extension
$$f(m) = \begin{cases}   g(m/2) & \text{if } m \in \{1, 3, \dots, L-1\} , \\   g((2L-m)/2) & \text{if } m \in \{L+1, L+3, \dots, 2L-1\} , \\   0 & \text{if } m \text{ is even.}   \end{cases}$$
We compute
$$(H_{2L} f)(2j) = \frac{\pi}{2L} \sum_{\substack{m=0 \\ m \neq 2j}}^{2L-1} f(m) \cot\left(\frac{\pi(2j-m)}{2L}\right) = \frac{\pi}{2L} \sum_{\substack{m=1 \\ m \text{ odd}}}^{2L-1} f(m) \cot\left(\frac{\pi(2j-m)}{2L}\right).$$

Splitting the summation into $m < L$ and $m > L$ and substituting $m' = 2L - m$ we get
$$\cot\left(\frac{\pi(2j-(2L-m'))}{2L}\right) = \cot\left(\frac{\pi(2j+m')}{2L} - \pi\right) = \cot\left(\frac{\pi(2j+m')}{2L}\right).$$
Combining both halves yields
$$(H_{2L} f)(2j) = \frac{\pi}{2L} \sum_{\substack{m=1 \\ m \text{ odd}}}^{L-1} g(m/2) \left[ \cot\left(\frac{\pi(2j-m)}{2L}\right) \mp \cot\left(\frac{\pi(2j+m)}{2L}\right) \right]$$
where the upper sign corresponds to the odd extension and the lower sign to the even extension.
Replacing $m = 2k$ and $2j = n$
$$\frac{\pi(2j-2k)}{2L} = \frac{\pi(j-k)}{L}, \qquad \frac{\pi(2j+2k)}{2L} = \frac{\pi(j+k)}{L}$$
and  applying trigonometric identities \eqref{eq:trig_tilde} and \eqref{eq:trig_T} shows that for every integer target index $j$ we get
$$(H_{2L} f)(2j) = \frac12(\tilde{T}g)(j) \qquad \text{and} \qquad (H_{2L} f)(2j) = \frac12(\tilde{T}'h)(j).$$
Applying Lemma \ref{lem:plancherel} ends the proof. 
\end{proof}


\begin{thebibliography}{}

\bibitem{Benedikter-2017}
{\sc N. Benedikter}, {\em Interaction Corrections to Spin Wave Theory in the Quantum Heisenberg Ferromagnet}, Math. Phys., Anal. Geom. 20, 1-21 (2017)

\bibitem{Bethe-31}
{\sc H. A. Bethe}, {\em Zur Theorie der Metalle. I. Eigenwerte und Eigenfunktionen der linearen Atomkette}, Zeit. für Physik 71, 205 (1931)

\bibitem{Blo-30}
{\sc F. Bloch}, {\em Zur Theorie des Ferromagnetismus}, Z. Physik 61, 206–219 (1930)

\bibitem{Blo-32}
{\sc F. Bloch}, {\em Zur Theorie des Austauschproblems und der Remanenzerscheinung der Ferromagnetika}, Z. Physik 74, 295–335 (1932)

\bibitem{Blochetal-12}
{\sc I. Bloch, J. Dalibard and S.  Nascimbène}, {\em Quantum simulations with ultracold quantum gases}. Nature Phys 8, 267-276 (2012).

\bibitem{Bose-03}
{\sc S. Bose}, {\em Quantum Communication through an Unmodulated Spin Chain}, Phys. Rev. Lett. 91 (20), 207901 (2003)

\bibitem{Bose-07}
{\sc S. Bose}, {\em Quantum communication through spin chain dynamics: an introductory overview}, Contemporary Physics, 48 (1), 13–30 (2007)
   
\bibitem{CorGiu-12}
{\sc M. Correggi, A. Giuliani}, {\em The free energy of the quantum Heisenberg ferromagnet at large spin}, J. Stat. Phys., 149 pp. 234–245 (2012) 

\bibitem{CorGiuSei-14}
{\sc M. Correggi, A. Giuliani, R. Seiringer}, {\em Validity of spin wave theory for the quantum Heisenberg model}, EPL (Europhysics Letters) 108 (2), 20003 (2014)

\bibitem{CorGiuSei-15}
{\sc M. Correggi, A. Giuliani, R. Seiringer}, {\em Validity of the spin wave approximation for the free energy of the Heisenberg ferromagnet}, Commun. Math. Phys. 339 (1), 279-307 (2015)

\bibitem{ConSol-91}
{\sc J. G. Conlon, J. P. Solovej}, {\em Upper bound on the free energy of the spin 1/2
Heisenberg ferromagnet}, Lett. Math. Phys., 23 pp. 223–231 (1991) 

\bibitem{Dys-56}
{\sc F. J. Dyson}, {\em General Theory of Spin Waves Interactions}, Phys. Rev. 102, 1217-1230 (1956)

\bibitem{Dys-56_2}
{\sc F. J. Dyson}, {\em Thermodynamic Behavior of an Ideal Ferromagnet}, Phys. Rev. 102, 1230-1244 (1956)

\bibitem{Fukuhara-13a}
{\sc T. Fukuhara et al.}, {\em Quantum dynamics of a mobile spin impurity}, Nature Physics 9, 235–241 (2013)

\bibitem{Fukuhara-13b}
{\sc T. Fukuhara et al.}, {\em Microscopic observation of magnon bound states and their dynamics}, Nature 502, 76–79 (2013)

\bibitem{HolPri-40}
{\sc T. Holstein, H. Primakoff}, {\em Field Dependence of the Intrinsic Domain Magnetization of a Ferromagnet}, Phys. Rev. 58, 1098-1113 (1940)

\bibitem{Klippel-26}
{\sc A. Klippel}, {\em Free-Energy Asymptotics of the Two-Dimensional Quantum Heisenberg Ferromagnet}, arXiv preprint arXiv:2608.25506

\bibitem{Kosevich-90}
{\sc A.M. Kosevich, B.A. Ivanov and A.S. Kovalev}, {\em Magnetic Solitons}.  Physics Reports 194 (3-4), 117-238 (1990). 

\bibitem{Mattis-81}
{\sc D.C. Mattis}, {\em  The Theory of Magnetism I: Statics and Dynamics.}
Springer Series in Solid-State Sciences, Vol. 17. Springer-Verlag, Berlin, 1981.

\bibitem{NapSei-19}
{\sc M. Napiórkowski, R. Seiringer}, {\em Free energy asymptotics of the quantum Heisenberg spin chain}, Lett. Math. Phys. 111, 31 (2021)

\bibitem{Simon-11}
{\sc J. Simon et al.}, {\em Quantum simulation of antiferromagnetic spin chains in an optical lattice}, Nature 472, 307–312 (2011)

\bibitem{Toth-93}
{\sc B. T\'oth}, {\em Improved Lower Bound on the Thermodynamic Pressure of the Spin 1/2 Heisenberg Ferromagnet}, Lett. Math. Phys., 28 pp. 75–84, (1993)

\bibitem{wortis}
{\sc M. Wortis}, {\em Bound States of Two Spin Waves in the Heisenberg Ferromagnet}, Phys. Rev. 132, 85 (1963).

\bibitem{WysChw-25}
{\sc P. Wysocki and J. Chwedeńczuk}, {\em Limits to Velocity of Signal Propagation in Many-Body Systems: A Quantum Information Perspective}, Phys. Rev. Lett.  134, 020201  (2025)

\end{thebibliography}
\end{document}